\tikzstyle{empty}=[fill=white, draw=black, shape=circle]
\tikzstyle{original}=[-, fill=black, draw=black]
\tikzstyle{oriented}=[fill=black, draw=red, ->]
\newcommand{\E}[1]{\mathbb{E} \left[ #1 \right]}
\newcommand{\var}[1]{\text{Var} \left( #1 \right)}
\newcommand{\cov}[2]{\text{Cov} \left( #1, #2 \right)}
\newcommand{\prob}[1]{\mathbb{P}\left[#1\right]}
\newcommand{\bigo}[1]{\mathcal{O}\left(#1\right)}
\title{Cycle Counting under Local Differential Privacy for Degeneracy-bounded Graphs} % Please add
\author{Quentin Hillebrand}{The University of Tokyo, Tokyo, Japan}{quentin-hillebrand@g.ecc.u-tokyo.ac.jp}{https://orcid.org/0000-0002-7747-4998}{}
\author{Vorapong Suppakitpaisarn}{The University of Tokyo, Tokyo, Japan}{vorapong@is.s.u-tokyo.ac.jp}{https://orcid.org/0000-0002-7020-395X}{}
\author{Tetsuo Shibuya}{The University of Tokyo, Tokyo, Japan}{tshibuya@hgc.jp}{https://orcid.org/0000-0003-1514-5766}{}
\authorrunning{Q. Hillebrand, V. Suppakitpaisarn, and T. Shibuya}
\keywords{Differential privacy, triangle counting, degeneracy, arboricity, graph theory, parameterized accuracy} % mandatory; please add comma-separated list of keywords
\begin{document}

\maketitle

\begin{abstract}
We propose an algorithm for counting the number of cycles under local differential privacy for degeneracy-bounded input graphs. Numerous studies have focused on counting the number of triangles under the privacy notion, demonstrating that the expected \(\ell_2\)-error of these algorithms is \(\Omega(n^{1.5})\), where \(n\) is the number of nodes in the graph. When parameterized by the number of cycles of length four (\(C_4\)), the best existing triangle counting algorithm has an error of \(O(n^{1.5} + \sqrt{C_4}) = O(n^2)\). 
In this paper, we introduce an algorithm with an expected \(\ell_2\)-error of \(O(\delta^{1.5} n^{0.5} + \delta^{0.5} d_{\max}^{0.5} n^{0.5})\), where \(\delta\) is the degeneracy and \(d_{\max}\) is the maximum degree of the graph. For degeneracy-bounded graphs (\(\delta \in \Theta(1)\)) commonly found in practical social networks, our algorithm achieves an expected \(\ell_2\)-error of \(O(d_{\max}^{0.5} n^{0.5}) = O(n)\).
Our algorithm's core idea is a precise count of triangles following a preprocessing step that approximately sorts the degree of all nodes. This approach can be extended to approximate the number of cycles of length \(k\), maintaining a similar \(\ell_2\)-error, namely $O(\delta^{(k-2)/2} d_{\max}^{0.5} n^{(k-2)/2} + \delta^{k/2} n^{(k-2)/2})$ or $O(d_{\max}^{0.5} n^{(k-2)/2}) = O(n^{(k-1)/2})$ for degeneracy-bounded graphs.
\end{abstract}

\newpage

\section{Introduction}

%The recent multiplication of interconnected system has motivated the development of graph-based analysis as the best representation for this type of data. Although this provides a powerful framework and valuable insights, the sensitivity of the represented information calls for caution when it comes to the publication of statistics.

In recent years, differential privacy~\cite{dwork2006differential,dwork2014algorithmic} has become the gold standard for providing strong privacy guarantees while enabling meaningful data analysis. Differential privacy ensures that the output of a computation does not significantly change when any single individual's data is modified, thus safeguarding individual privacy. While much of the initial work in differential privacy focused on traditional tabular data~\cite{dwork2006calibrating,mcsherry2007mechanism}, there is increasing interest in extending these privacy guarantees to graph data~\cite{sajadmanesh2021locally,ye2020lf}, which presents its own unique set of challenges.

Differential privacy has evolved into numerous variants to accommodate different scenarios, as detailed in \cite{desfontaines2019sok}. Of particular interest to us is the concept of local differential privacy~\cite{cormode2018privacy,evfimievski2003limiting}. This variant is unique in that it does not rely on the assumption of a trusted central server. Instead, users must obfuscate their private data before sharing it with an untrusted computing entity. In the context of graph data, the most commonly adopted notion is edge local differential privacy~\cite{qin2017generating}, where the sensitive information of each user pertains to their connections with others.

A widely used obfuscation method is randomized response \cite{warner_randomized_1965, wang2016using}. In this approach, users invert each bit of their adjacency vector with a certain probability. The server then collects this distorted information to construct an obfuscated graph. Although it is possible to publish various graph statistics from the obfuscated graph, the resulting information tends to be imprecise. Algorithms specifically designed to publish particular statistics typically yield more accurate and useful graph information.

\begin{table}[h!]
\centering
\begin{adjustbox}{width=\textwidth}
\begin{tabular}{|c|c|c|}
\hline
 & \textbf{Upper Bound} & \textbf{Lower Bound} \\
\hline
\textbf{Triangle} & $O(n^2)$ (\cite{imola2021locally}, general graphs) & $\Omega(n^{1.5})$ (non-interactive) \cite{imola2022differentially} \\
 & $O(d_{\mathrm{max}}^{1.5}n^{0.5})$ (\cite{eden2023triangle}, general graphs) & $\Omega(n^{1.5})$ (interactive) \cite{eden2023triangle} \\
 & $O(d_{\mathrm{max}}^{0.5} n^{0.5})$ (\textcolor{red}{this work}, degeneracy-bounded graph) & $\Omega(n^{2})$ (non-interactive) \cite{eden2023triangle}\\
\hline
\textbf{Odd length} & $O\left(n^{k - 1}\right)$ (folklore, general graphs) & \\ \textbf{cycles $C_k$}
 & $O\left(n^{(k - 1)/2}\right)$ (\textcolor{red}{this work}, degeneracy-bounded graph) & \\
\hline
\end{tabular}
\end{adjustbox}
\caption{Upper and lower bounds of the expected $\ell_2$-error for triangle and $k$-cycle counting under the local differential privacy}
\label{tab1}
\end{table}

One graph statistic frequently considered by researchers in local differential privacy is the number of subgraphs \cite{imola2021locally,hillebrand2023communication}. Specifically, many studies have focused on the publication of triangle counts \cite{imola2021locally,imola2022communication,hillebrand2023communication,eden2023triangle}. Theoretical analysis results on the $\ell_2$-error are summarized in Table~\ref{tab1}. Unfortunately, to date, when $n$ is the number of nodes and $d_{\max}$ is the maximum degree of the input graphs, the best algorithm has an expected $\ell_2$-error of $O(n^2)$ or $O(d_{\max}^{1.5}n^{0.5})$. We believe that this error is too large for many applications and should be improved. On the other hand, it has been shown that for all locally differentially private algorithms, there exists a class of graphs where the $\ell_2$-error is $\Omega(n^{1.5})$ \cite{eden2023triangle}. This lower bound implies that the expected $\ell_2$-error cannot be significantly improved.

\subsection{Our Contribution}

This motivates us to consider a specific class of graphs. Specifically, in this paper, we focus on graphs with bounded degeneracy, a characteristic common to most practical social networks \cite{wasa2018efficient,goel2006bounded,eppstein2013listing}. Let the graph degeneracy be $\delta$. We propose a locally differentially private algorithm with an expected $\ell_2$-error of $O\left(\delta^{1.5}n^{0.5} + \delta^{0.5} d_{\max}^{0.5} n^{0.5} \right)$. When the graph degeneracy is bounded ($\delta = O(1)$), the expected $\ell_2$-error becomes $O(d_{\max}^{0.5} n^{0.5}) = O(n)$. This result implies that our expected error for the degeneracy-bounded graphs can be smaller than the lower bound for general graphs. 

We also extend our results to count the number of cycles with odd lengths in degeneracy-bounded graphs. To our knowledge, there are only two local differentially private algorithms proposed for counting subgraphs of more than three nodes. The first algorithm \cite{imola2022differentially} is designed to count the number of four-length cycles but operates within the shuffle model, which is weaker than the original local differential privacy model. The second algorithm counts the number of walks of length $k$ \cite{betzerpublishing}. This field has limited work due to the significant noise introduced to ensure user privacy, which accumulates as the subgraph size increases. This accumulation results in unacceptable errors for differential privacy in larger subgraphs. For instance, while the expected $\ell_2$-error from triangle counting algorithms based on randomized response is $O(n^2)$ \cite{imola2022differentially}, the expected $\ell_2$-error for similar algorithms estimating the number of $C_k$ is as high as $O(n^{k - 1})$. In other words, the error increases by a factor of $n$ with each increment in cycle length. 

In this work, we propose an algorithm that significantly reduces the expected \(\ell_2\)-error to \(O(n^{(k - 1)/2})\) in degeneracy-bounded graphs. We believe that this error is much smaller than the actual number of cycles in most graphs. Consequently, our algorithm is the first to publish a meaningful number of large cycles under local differential privacy.

\subsection{Technical Overview}

In this section, we provide an overview of the technical concepts behind our triangle counting algorithm. The algorithm for counting odd-length cycles, for \( k \geq 5 \), extends these ideas but requires a more intricate and detailed analysis.

Let the input graph be \( G = (V = \{\nu_1, \dots, \nu_n\}, E) \). In prior work \cite{imola2021locally}, they apply a randomized response mechanism that flips each bit in the adjacency matrix with a certain probability. Let the resulting graph after applying the randomized response be \( G' = (V, E') \). In the local differential privacy setting, each node \( \nu_i \) knows whether it is connected to another node \( \nu_j \) (where \( \nu_j \neq \nu_i \)) if \( \{\nu_i, \nu_j\} \in E \). For the triangle counting method, node \( \nu_i \) considers \( (\nu_i, \nu_j, \nu_\kappa) \) as a triangle if \( \{\nu_i, \nu_j\} \in E \), \( \{\nu_i, \nu_\kappa\} \in E \), and \( \{\nu_j, \nu_\kappa\} \in E' \). 
Define \( e_{i,j,\kappa} = 1 \) if node \( \nu_i \) considers \( (\nu_i, \nu_j, \nu_\kappa) \) as a triangle, otherwise set \( e_{i,j,\kappa} = 0 \). Define $S_i = \{(j,\kappa): \{\nu_i, \nu_j\}, \{\nu_i, \nu_\kappa\} \in E \text{ and } j < \kappa\}$.
The estimated number of triangles for node \( \nu_i \), reported by the user, is \( \tilde{t}_i = \sum\limits_{(j,\kappa) \in S_i} e_{i,j,\kappa} \). The total estimated number of triangles in the graph is then \( \tilde{f}_\Delta(G) = \frac{1}{3} \sum_i \tilde{t}_i = \frac{1}{3} \sum\limits_{i} \sum\limits_{(j,\kappa) \in S_i} e_{i,j,\kappa} \), where the division by three accounts for each triangle being counted three times by each of the users involving in the triangle. 

The \(\ell_2\)-error of the estimated triangle count \(\tilde{f}_\Delta(G)\) mostly arises from the variance in the estimation. A significant portion of this variance comes from the covariance between pairs of variables in the summation \( \frac{1}{3} \sum\limits_{i} \sum\limits_{(j,\kappa) \in S_i} e_{i,j,\kappa} \). Two variables, \(e_{i,j,\kappa}\) and \(e_{i',j',\kappa'}\), are dependent if \((j,\kappa) = (j',\kappa')\). 
The number of dependent pairs in the counting process is equivalent to the number of tuples \((\nu_i, \nu_j, \nu_{i'}, \nu_\kappa)\) such that \((j,\kappa) \in S_i \cap S_{i'}\), which corresponds to the number of 4-cycles in the input graph \(G\). Therefore, the squared \(\ell_2\)-error is approximately proportional to the number of 4-cycles in the graph, which is $O(n^4)$.

Let us assume that the indices of all users are predetermined and publicly known before the counting process begins. Define \( S'_i = \{(j, \kappa) : \{\nu_i, \nu_j\}, \{\nu_i, \nu_\kappa\} \in E \text{ and } j < i < \kappa \} \). If node \( i \) only considers the pairs \( (j, \kappa) \) within \( S'_i \), then each triangle is counted exactly once. The estimated number of triangles, \( \hat{f}_\Delta(G) \), can be calculated as \( \hat{f}_\Delta(G) = \sum_i \hat{t}_i \), where \( \hat{t}_i = \sum\limits_{(j,\kappa) \in S'_i} e_{i,j,\kappa} \). In this counting method, the number of dependent variable pairs is at most the number of 4-cycles that contain the three nodes \( \nu_i, \nu_j, \nu_\kappa \) with \( j < i < \kappa \).

Let \(\delta\) represent the degeneracy of the input graph \(G\), and for each \(\nu \in V\), let \(d(\nu)\) denote the degree of \(\nu\). Assume that the degrees of all nodes are publicly known, and the nodes are indexed in non-decreasing order of their degree, i.e., if \(i > j\), then \(d(\nu_i) \leq d(\nu_j)\). Referring to the bound established by Chiba and Nishizeki \cite{chiba1985arboricity}, which states that \(\sum\limits_{(\nu_i, \nu_j) \in E} \min(d_i, d_j) \leq O(\delta \cdot |E|)\), we demonstrate in this paper that the number of such cycles is \(O(\delta^3 n)\). Consequently, the squared \(\ell_2\)-error is reduced from \(O(n^4)\) in previous work to \(O(\delta^3 n)\).

However, we cannot assume that the degrees of all nodes are publicly known, as this information is sensitive. To address this issue, we use local Laplacian queries, allowing each user to publish a noisy version of their degree. Let the noisy degree of \(\nu \in V\) be denoted as \(\tilde{d}(\nu)\). We then assign indices to users based on these noisy degrees, such that if \(i > j\), then \(\tilde{d}(\nu_i) \leq \tilde{d}(\nu_j)\). Afterward, we run the protocol described in the previous paragraph. We show that even with noisy degrees, the expected number of such cycles remains bounded by \(O(\delta^3 n)\).

In summary, our mechanism involves two steps. First, users publish their noisy degrees using the local Laplacian mechanism, and the server assigns indexes based on these noisy values. In the second step, using the results of randomized response, each user \(\nu_i\) estimates the number of triangles \((\nu_i, \nu_j, \nu_\kappa)\) where \(j < i < \kappa\). This method significantly reduces the number of dependent triangle pairs in degeneracy-bounded graphs, which in turn lowers the variance of the estimation.

\subsection{Related Works}

The field of graph data mining under local differential privacy is relatively new. In contrast, differential privacy has been studied for many years by various researchers, including works like \cite{gupta2010differentially,olatunji2021releasing}. According to \cite{imola2021locally}, local differential privacy typically only hides edges or relationships, except in special cases like \cite{zhang2020differentially}. Differential privacy, on the other hand, can hide whether an individual or node is part of a social network, as shown in \cite{hay2009accurate,raskhodnikova2016differentially}. Therefore, while both edge and node differential privacy exist, node differential privacy does not apply in the context of local differential privacy.

%In addition to counting cycles, previous works have proposed methods to estimate the number of \(k\)-stars \cite{imola2021locally,hillebrand2023communication} and the variance of degrees \cite{hillebrand2023communication}. These graph statistics can be determined by applying certain functions to node degrees, requiring only degree information from each user. Therefore, we believe these tasks are less challenging than counting the number of cycles. There is also work on counting triangles that explores scenarios where users have access to their 2-hop adjacency information \cite{liu2022collecting}. In this paper, we focus on 1-hop adjacency information, which is more commonly considered. 

Recent works have proposed methods to estimate the densest subgraph, \(k\)-core decomposition, and degeneracy under local differential privacy \cite{dhulipala2022differential, dhulipala2023near, dinitz2023improved}. However, since we are focused on estimating different graph statistics in graphs, we do not use or extend the ideas from these works. Instead, the estimation of degeneracy can be used to approximate the \(\ell_2\)-error of our algorithm.

%introduced an approximation method for identifying the densest subgraph using locally adjustable algorithms. \cite{} improved upon this by eliminating the multiplicative error in the \(k\)-core decomposition. Additionally, \cite{dinitz2023improved} describes a method to find the densest subgraph up to an additive error under the edge-local differential privacy framework. However, these works primarily focus on estimating these statistics. We believe that these statistics could be used to parameterize the complexity of private graph algorithms, similar to their use in non-private algorithms \cite{chiba1985arboricity}.

%The strong motivation for the focus on degeneracy-bounded graphs comes from how accurate they are to represent real-world graphs. In fact, most graphs have been observed to have low degeneracy \cite{eppstein2013listing}. Our intuition is that this property can be leveraged to provide more accurate private algorithms for subgraph counting, in particular for triangle counting. To make the case for this claim, we will describe a new triangle counting algorithm and prove that its complexity can be expressed in terms of the degeneracy of the graph.

\section{Preliminaries}
\label{sec:preliminaries}

\subsection{Notations}

For \( V = \{\nu_1, \ldots, \nu_n\} \) a set of vertices and \( E \subseteq V^2 \) a set of edges, we denote by \( G = (V,E) \) the graph on \( V \). We consider simple undirected graphs, meaning that for \( \nu, \nu' \in V \), \( (\nu, \nu) \not\in E \) and \( (\nu, \nu') \in E \implies (\nu', \nu) \in E \). We denote by \( n = |V| \) the size of the graph and \( m = |E| \) its number of edges.

For each \( i \in [1,n] \), we introduce \( a_i = [a_{i,1}, \ldots, a_{i,n}] \), the adjacency list of user \( \nu_i \), where for any \( j \in [1,n] \), \( a_{i,j} = 1 \) if the edge \( (\nu_i, \nu_j) \) is in \( E \) and \( a_{i,j} = 0 \) otherwise. Additionally, we introduce \( d_i \), the degree of node \( \nu_i \), which corresponds to the number of edges incident to \( \nu_i \).

We call a path of length \( k \in \mathbb{N} \), denoted \( P_k \), any tuple \( (\nu_{l_1}, \ldots, \nu_{l_k}) \) such that, for all \( i \in [1,k] \), \( (\nu_{l_i}, \nu_{l_{i+1}}) \in E \), and, for all $i \neq j$, $\nu_{l_i} \neq \nu_{l_j}$. We also use \( \#P_k(G) \) to refer to the number of paths of length \( k \) in \( G \). Similarly, a cycle of length \( k \in \mathbb{N} \), or \( C_k \), is a tuple \( (\nu_{l_1}, \ldots, \nu_{l_k}) \) that forms a path and satisfies \( (\nu_{l_k}, \nu_{l_1}) \in E \). We will also use \( \#C_k(G) \) to refer to the number of cycles of length \( k \) in \( G \). %Furthermore, we set \( P_{-1} = 1 \) by convention to simplify the equations.

\subsection{Edge Local Differential Privacy}

We say that two adjacency list $a$ and $a'$ are neighboring if they differ by one bit, i.e. if we can go from one to the other by adding or removing an edge to node $\nu_i$. If $a'$ is a neighbor of $a$, we write that $a \sim a'$. %When this is the case, we will write $d(a_i, a'_i) = 1$.
The notion of edge local differential privacy is as follows:
\begin{definition}[$\varepsilon$-edge local differentially private query] 
\label{ref1} Let $\varepsilon > 0$. A randomized algorithm $\mathcal{R}$ is a $\varepsilon$-edge local differentially private query on the node $\nu_i$ if, for all neighboring bit strings $a \sim a'$, and for all $S$, it holds that 
\begin{equation*}
    \prob{\mathcal{R}(a) \in S} \leq e^{\varepsilon} \prob{\mathcal{R}(a') \in S}.
\end{equation*} 
\end{definition}
\begin{definition}[$\varepsilon$-edge local differentially private algorithm \cite{qin2017generating}] \label{dif:diff}
Let \(\mathcal{A}\) be an algorithm that generates multiple randomized queries for each user, has each user apply these queries to their adjacency vector, and then estimates some graph statistics based on the results. We say \(\mathcal{A}\) is an \(\varepsilon\)-edge local differentially private algorithm if, for all users \(\nu_i\) and for all possible sets of queries \(\mathcal{R}_1, \dots, \mathcal{R}_k\) inquired to $\nu_i$ (where for each \(1 \leq j \leq k\), \(\mathcal{R}_j\) is an \(\varepsilon_j\)-edge local differentially private query), it holds that \(\varepsilon_1 + \cdots + \varepsilon_k \leq \varepsilon\). 
\end{definition}

%\subsection{Basic Mechanisms}

%We will now introduce the 2 private building blocks that we will base our algorithms on.
%The first one, the Laplacian Mechanism can be used to privately publish the result of a function by adding so noise tailored to the sensitivity of the function. The version we introduce here is the simpler case where the range of the function is in $\R$

%Numerous mechanisms have been suggested to meet Definition~\ref{dif:diff} \cite{li2022network,hou2023wdt}.
%Among them, the edge local Laplacian mechanism \cite{hillebrand2023unbiased} is an approach to offering privacy in which we ask each user to give real numbers to the aggregator. The Laplacian mechanism can be defined as in the following definition. It can be shown that the mechanism is $\varepsilon$-edge local differential private.

\subsection{Laplacian Query and Restricted Sensitivity}

Next, we introduce queries that are $\varepsilon$-edge local differentially private. We first consider a query which aims to give an estimate of a real number statistics of the adjacency vector.
%The edge local Laplacian query aims to give an estimate of a real number statistics of the adjacency vector, while the randomized response query aims to give an obfuscated adjacency list.

\begin{definition}[Edge local Laplacian query \cite{hillebrand2023unbiased}]
    For a function \( f: \{0,1\}^n \rightarrow \mathbb{R} \) on adjacency lists, and \( a \sim a' \) denoting neighboring adjacency lists, the global sensitivity of \( f \) is defined as
$\Delta_f = \max\limits_{a \sim a'} |f(a) - f(a')|.$
For \(\varepsilon > 0\), the query that outputs \( f(a) + \Lap(\Delta_f / \varepsilon) \) is \(\varepsilon\)-edge local differentially private, where \(\Lap(b)\) represents noise drawn from the Laplacian distribution with parameter \( b \). \label{laplacian}
\end{definition}

Global sensitivity in Definition \ref{laplacian} is designed to handle the worst-case scenario, which can lead to large amounts of noise being added to the data when using the Laplacian mechanism. However, if the data is known to belong to a specific set, restricted sensitivity allows us to adjust the noise according to the sensitivity within that set, resulting in more tailored and potentially lower noise levels.

\begin{definition}[Restricted sensitivity (Definition 8 of \cite{blocki2013differentially})]
    Let $a = (a_1, \dots, a_n), a' = (a'_1, \dots, a'_n) \in \{0,1\}^n$ and $d(a, a')$ be the Hamming distance between $a$ and $a'$. The restricted sensitivity of $f$ over a set of possible output $\mathcal{H}$ is
            \begin{equation*}
                RS_f(\mathcal{H}) = 
        \max\limits_{a,a' \in \mathcal{H}} \left(
            \frac{
                |f(a) - f(a')|
            }{
                d(a, a')
            }
        \right).
            \end{equation*}
\end{definition}

We can use restricted sensitivity to publish data even if it is not initially in the set. To do this, we first need to define a projection method to map the data to the set. In this work, we will consider \(\mathcal{H}_d\), the class of adjacency list with a maximum degree of \(d\), for calculating restricted sensitivity. We assume that the order of all nodes is fixed, and if a node \(\nu_i\) is adjacent to more than \(d\) nodes, we retain only the first \(d\) nodes according to this order. The map can be considered as an operation on each adjacency vector $a_i$. We denote the mapping result on $a_i$ as $\mu_d(a_i)$.
\begin{definition}[Edge local Laplacian query with restricted sensitivity on $\mathcal{H}_d$ \cite{blocki2013differentially}]
    \label{def:restricted}
    For any $f$ queried to a user $i$, the query that answers $f(\mu(a_i)) + \Lap(3 \cdot RS_f(\mathcal{H}_d) / \varepsilon)$ is called edge local Laplacian query with restricted sensitivity on $\mathcal{H}_d$, and provides $\varepsilon$-edge local differential privacy.
\end{definition}

%The second mechanism we will use is randomized response. It permits to privately release an entire list of bits, in our case the whole adjacency list.

%When we request users to send a vector of bits (such as the whole adjacency vector) to the aggregator, the edge local Laplacian mechanism is not the most suitable. Instead, a frequently used method is the randomized response \cite{warner_randomized_1965, mangat_improved_1994,wang2016using}, as detailed in the following definition.

\subsection{Unbiased Randomized Response}
In this subsection, we consider the randomized response query, which aims to publish an obfuscated adjacency vector.

\begin{definition}[Randomized response query \cite{warner_randomized_1965,wang2016using}]
For \(\varepsilon > 0\), the randomized response mechanism takes an adjacency list \(a = (a_1, \dots, a_n)\) as input and outputs an obfuscated list \(\tilde{a} = (\tilde{a}_1, \dots, \tilde{a}_n) \). For  \(i\), the probability that \(\tilde{a}_i\) is set to 1 is given by:
\[
\prob{\RR(\tilde{a}_i) = 1} = 
\begin{cases}
    \frac{e^{\varepsilon}}{1 + e^{\varepsilon}} & \text{if } a_i = 1 \\
    \frac{1}{1 + e^{\varepsilon}} & \text{if } a_i = 0.
\end{cases}
\]
With this definition, randomized response provides \(\varepsilon\)-edge local differential privacy.
\end{definition}

We can construct a graph \(\tilde{G}\) based on the collection of obfuscated adjacency vectors obtained from all users. Using the statistics of the obfuscated graph \(\tilde{G}\), we can then publish various information, including the number of subgraphs \cite{ye2020towards,imola2021locally,imola2022communication,hillebrand2023communication}. However, randomized response produces biased results, making it less suitable for counting queries. Those bias can be fixed by the subsequent definition.

\begin{definition}[Unbiased randomized response query \cite{eden2023triangle}]
    \label{def:unbiased}
Let \(\varepsilon > 0\) and \(\tilde{a}_i\) be the adjacency vector published through randomized response with budget \(\varepsilon\) by user \(\nu_i\). Then, for all \((i,j) \in [1,n]^2\), \begin{equation*}
    \hat{a}_{i,j} = \frac{e^{\varepsilon} + 1}{e^{\varepsilon} - 1} \tilde{a}_{i,j} - \frac{1}{e^{\varepsilon} - 1}
\end{equation*} is an unbiased estimator of \(a_{i,j}\). Additionally, for \((i,j) \neq (i',j')\), \(\hat{a}_{i,j}\) is independent of \(\hat{a}_{i',j'}\), and \(\var{\hat{a}_{i,j}} = \frac{e^{\varepsilon}}{(e^{\varepsilon} - 1)^2}\). We refer to a query that publishes \(\hat{a}_i\) as the unbiased randomized response query. This query is \(\varepsilon\)-edge locally differentially private.
\end{definition}

We can use the results from the unbiased randomized response query to calculate the number of subgraphs. For example, without privacy constraints, the number of triangles can be calculated as \(\sum\limits_{i < j < k} a_{i,j} \cdot a_{j,k} \cdot a_{k,i}\). To privately estimate the number of triangles, we use \(\sum\limits_{i < j < k} \hat{a}_{i,j} \cdot \hat{a}_{j,k} \cdot \hat{a}_{k,i}\). It is theoretically shown in \cite{eden2023triangle} that the estimator \(\sum\limits_{i < j < k} \hat{a}_{i,j} \cdot \hat{a}_{j,k} \cdot \hat{a}_{k,i}\) has a smaller \(\ell_2\)-error compared to the estimator obtained from the randomized response query, \(\sum\limits_{i < j < k} \tilde{a}_{i,j} \cdot \tilde{a}_{j,k} \cdot \tilde{a}_{k,i}\). 

\subsection{Graph Arboricity and Degeneracy}

%Several characterization of the sparsity of a graph can be found in the literature such as its number of edges, its maximal degree, its densest component, its degeneracy and its arboricity.
%For this work, we will focus on the last two notions and give our results in terms of accuracy parameterized by them.

%Our reason for choosing those two related notions is that arboricity has long been used outside of the differential private community for subgraph counting (see \cite{chiba1985arboricity} for example), while degeneracy has recently received some interest in the community \cite{dhulipala2022differential,dhulipala2023near}.

Graph arboricity and degeneracy can be defined as follows:

\begin{definition}[Arboricity]
    The arboricity of a graph $G$ is the minimal number $\alpha(G)$ such that the edges of $G$ can be partitioned into $\alpha(G)$ forests.
\end{definition}

\begin{definition}[Degeneracy]
    The degeneracy of a graph $G$ is the smallest number $\delta(G)$ such that any subgraph of $G$, contains at least one node with induced degree at most $\delta(G)$.
\end{definition}

We observe that the variable $\delta$ is frequently used as a privacy parameter in differential privacy. However, since we do not consider that parameter in this paper, we choose to use $\delta$ to represent degeneracy, which is also a common convention. When the context is clear, we will drop the $G$ of the notation and simply write $\alpha$ and $\delta$. 
The two quantities are linked by the following theorem.

\begin{theorem}[equation 3 and lemma 2.2 of \cite{zhou2000graph}]
    \label{thm:arboricity-degeneracy}
    In any graph $G$, degeneracy and arboricity satisfy
    \(\alpha \leq \delta \leq 2 \alpha - 1.\)
\end{theorem}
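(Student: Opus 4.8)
The plan is to establish the two inequalities $\alpha \le \delta$ and $\delta \le 2\alpha - 1$ by separate elementary arguments, both resting on the fact that bounded degeneracy and bounded arboricity each force every subgraph to be sparse. Throughout I would assume $G$ has at least one edge; if $E = \emptyset$ then $\alpha = \delta = 0$, and the right-hand inequality is only meaningful once $\alpha \ge 1$.

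For $\alpha \le \delta$, I would first fix a \emph{degeneracy ordering} of $V$, say $\nu_{l_1}, \dots, \nu_{l_n}$, obtained by repeatedly deleting a vertex of minimum induced degree and then listing the deleted vertices in reverse order of deletion; by the definition of $\delta$, each $\nu_{l_i}$ has at most $\delta$ neighbours among $\nu_{l_1}, \dots, \nu_{l_{i-1}}$, which I call its back-edges, and every edge is a back-edge of exactly one endpoint, namely its higher-indexed one. Processing $i = 1, 2, \dots, n$ in turn, I would colour the (still uncoloured) back-edges of $\nu_{l_i}$ with pairwise distinct colours from $\{1, \dots, \delta\}$, which is always possible since there are at most $\delta$ of them. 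The key claim is that each colour class is a forest: if a colour class contained a cycle, the highest-indexed vertex $\nu_{l_j}$ on that cycle would be incident to two of its edges, both leading to lower-indexed vertices and hence both back-edges of $\nu_{l_j}$ sharing a colour, contradicting the construction. Thus $E$ decomposes into $\delta$ forests, so $\alpha \le \delta$. (As an alternative one could prove $|E(H)| \le \delta(|V(H)| - 1)$ for every subgraph $H$ by induction on $|V(H)|$, peeling off a minimum-degree vertex, and then invoke the Nash--Williams arboricity formula.)

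For $\delta \le 2\alpha - 1$, I would take a partition of $E$ into $\alpha$ forests. For an arbitrary subgraph $H \subseteq G$, restricting each of these forests to $V(H)$ exhibits $E(H)$ as a union of $\alpha$ forests on $|V(H)|$ vertices, so $|E(H)| \le \alpha(|V(H)| - 1) < \alpha |V(H)|$. Hence the sum of induced degrees in $H$ is $2|E(H)| < 2\alpha|V(H)|$, so $H$ contains a vertex of induced degree at most $2\alpha - 1$. Since $H$ was arbitrary, the definition of degeneracy yields $\delta \le 2\alpha - 1$, completing the proof.

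Since every step is elementary, I do not expect a genuinely hard obstacle; the points needing the most care are the correctness of the greedy colouring in the first direction --- in particular the "highest-indexed vertex on a cycle" argument together with the observation that each edge is coloured exactly once, at its higher-indexed endpoint --- or, should one take the Nash--Williams route instead, a precise statement and application of that formula, plus the minor bookkeeping for the degenerate edgeless case.
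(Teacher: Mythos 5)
Your proof is correct. Note that the paper does not prove this statement at all: it is imported as a known result (cited as equation 3 and Lemma 2.2 of Zhou's paper), so there is no internal proof to compare against; what you supply is the standard elementary argument, and it holds up. For $\alpha \le \delta$, your degeneracy-ordering construction is sound: each edge is a back-edge of exactly its higher-indexed endpoint, each vertex has at most $\delta$ back-edges, and the ``highest-indexed vertex on a cycle'' argument correctly shows each colour class is acyclic, giving a partition of $E$ into at most $\delta$ forests. For $\delta \le 2\alpha - 1$, restricting an $\alpha$-forest decomposition to an arbitrary subgraph $H$ gives $|E(H)| \le \alpha(|V(H)|-1) < \alpha|V(H)|$ (using $\alpha \ge 1$), hence a vertex of induced degree at most $2\alpha - 1$, which by the paper's definition of degeneracy yields the bound. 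Your remark about the edgeless case is apt --- for $E = \emptyset$ the right-hand inequality $0 \le -1$ fails, so the statement implicitly assumes $\alpha \ge 1$ --- and flagging that, together with the one-coloured-per-edge bookkeeping you already address, is all the care the argument needs.
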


The arboricity has previously been used outside of the differential private community to bound some graph statistics. A folklore useful result is that the number of edges in a graph is smaller than $\delta n$. Another well known result is as follows:

\begin{theorem}[Chiba-Nishizeki Bound \cite{chiba1985arboricity}]
    \label{thm:chiba-nishizeki}
    With $m = |E|$ and $d_i$ the degree of node $\nu_i$, then
    \begin{equation*}
     \sum\limits_{(\nu_i,\nu_j) \in E} \min{(d_i, d_j)} \leq m \alpha.   
    \end{equation*}
\end{theorem}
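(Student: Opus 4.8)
The plan is to prove this straight from the definition of arboricity. Write $\alpha=\alpha(G)$ and partition the edge set as $E=F_1\cup\cdots\cup F_\alpha$, where each $F_\ell$ is a forest. I will bound the contribution of each single forest $F_\ell$ to the left-hand side by $\sum_{\nu\in V}d(\nu)=m$ separately, and then sum over the $\alpha$ forests.

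Fix a forest $F_\ell$. In each connected component (tree) of $F_\ell$ pick an arbitrary root, and orient every edge of that component towards its root. In the resulting orientation of $F_\ell$, every vertex that is not the root of its component is the tail of exactly one edge (the one joining it to its parent), and every root is the tail of none; hence every vertex of $G$ is the tail of at most one edge of $F_\ell$. For an edge $(\nu_i,\nu_j)\in F_\ell$ whose tail is $\nu_i$, use the elementary inequality $\min(d_i,d_j)\le d_i$ and attribute the term $\min(d_i,d_j)$ to its tail $\nu_i$. Since each $\nu_i$ is the tail of at most one edge of $F_\ell$, it is attributed at most $d_i$ in total, so
\[
\sum_{(\nu_i,\nu_j)\in F_\ell}\min(d_i,d_j)\;\le\;\sum_{\nu_i\in V}d_i\;=\;m .
\]

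Summing this over $\ell=1,\dots,\alpha$ and using that $F_1,\dots,F_\alpha$ partition $E$ gives $\sum_{(\nu_i,\nu_j)\in E}\min(d_i,d_j)=\sum_{\ell=1}^{\alpha}\sum_{(\nu_i,\nu_j)\in F_\ell}\min(d_i,d_j)\le\alpha m$, which is the claim. (Equivalently, one can merge the per-forest orientations into one orientation of $E$ with out-degree at most $\alpha$ at every vertex, and bound $\sum_{(\nu_i,\nu_j)\in E}\min(d_i,d_j)\le\sum_{\nu}d(\nu)\cdot\mathrm{outdeg}(\nu)\le\alpha\sum_{\nu}d(\nu)$.) This is the classical Chiba--Nishizeki argument, so I expect no genuine obstacle; the only point needing care is the bookkeeping that pins the constant down to exactly $\alpha$ rather than a larger multiple: one must verify that rooting each tree really makes every vertex the tail of at most one edge per forest (covering isolated vertices and multi-component forests), and observe that it is legitimate to charge $\min(d_i,d_j)$ to the tail irrespective of which endpoint has the smaller degree, precisely because $\min(d_i,d_j)$ is at most either of the two degrees.
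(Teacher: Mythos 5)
Your overall strategy---decompose $E$ into $\alpha$ forests, root each tree, orient edges toward the root so that every vertex is the tail of at most one edge per forest, and charge $\min(d_i,d_j)\le d_{\text{tail}}$ to the tail---is the classical Chiba--Nishizeki argument, and it is the same charging device this paper itself uses elsewhere (the map $h$ in the proof of Theorem~\ref{thm:k-cycles}); note the paper does not prove Theorem~\ref{thm:chiba-nishizeki} but cites it. However, there is a concrete error at exactly the step you dismissed as bookkeeping: you assert $\sum_{\nu\in V} d(\nu)=m$, whereas by the handshake lemma $\sum_{\nu\in V} d(\nu)=2m$. With the correct identity your per-forest bound is $2m$, and summing over the $\alpha$ forests gives $\sum_{(\nu_i,\nu_j)\in E}\min(d_i,d_j)\le 2\alpha m$, not $\alpha m$; the same factor of $2$ appears in your merged-orientation variant, since $\sum_{\nu} d(\nu)\,\mathrm{outdeg}(\nu)\le \alpha\sum_{\nu} d(\nu)=2\alpha m$.

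This is not merely a repairable slip in your write-up: the constant-$1$ inequality as literally stated cannot be proved, because it is false in general. For $K_4$ we have $m=6$ and every degree equal to $3$, so the left-hand side is $18$, while $\alpha(K_4)=2$ gives $m\alpha=12$. The lemma of Chiba and Nishizeki is $\sum_{(u,v)\in E}\min(d_u,d_v)\le 2\alpha m$, which is exactly what your argument proves once the degree-sum identity is corrected; the paper evidently suppresses the constant, and all of its uses (e.g.\ in Theorem~\ref{thm:s2-bound}, where only $O(\alpha m)=O(\delta m)$ is needed) are unaffected. So your approach is the right one and delivers the correct classical bound after fixing $\sum_\nu d(\nu)=2m$, but the ``pin the constant down to exactly $\alpha$'' step you hoped for is impossible as stated.
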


\section{Node-Reordered Graphs and Their Properties}
\label{sec:ordering}

%For the rest of the work we will consider a graph $G = (V,E))$ with $V$ its set of vertices and $E$ its set of edges. In addition, we will denote $n = |V|$ the size and $m = |E|$ the number of edges.

The first step of our mechanism is to order the vertices based on their estimated degree. The algorithm for this step is shown in Algorithm \ref{algo:vertex-ordering}. At line 2 of the algorithm, we privately publish the estimated degree. 
Under edge local differential privacy, the global sensitivity of the degree is 1. Therefore, we can use the Laplacian query (Definition \ref{laplacian}) with noise scaled to \(1 / \varepsilon_0\) to publish the degree, where \(\varepsilon_0\) is the privacy budget allocated to this step. We denote the estimated degree as
$\tilde{d}_i = d_i + \text{Lap}\left(1 / \varepsilon_0\right)$.

\begin{algorithm}
    \caption{Calculate a low degree ordering of a graph with respect to the estimated degree}
    \label{algo:vertex-ordering}
    
    \Fn{\GetOrdering}{
        \KwIn{Graph $G = (V,E)$, privacy budget $\varepsilon_0$}
        \KwOut{A low degree ordering $\phi$ of $G$ with respect to the estimated degree}
        \textbf{[User $i$]} Calculate and send $\tilde{d}_i \leftarrow d_i + \Lap(\frac{1}{\varepsilon_0})$ to the central server\\ 
        \textbf{[Server]} Let $\phi(i) = j$ if $\tilde{d}_i$ is the $j$-the largest number in $\tilde{d}_1, \dots, \tilde{d}_n$. Calculate $\phi(i)$ for all $i$\\
        \Return $\phi$\;
    }
\end{algorithm}

After publishing the estimated degrees, in Line 3, we assign an order \(\phi\) to the nodes based on their degrees, which we refer to as a \textit{low degree ordering}. For \(G = (\{\nu_1, \dots, \nu_n\}, E)\), we denote the reordered graph as \(G^{\phi} = (V^{\phi}, E^{\phi})\), where \(V^{\phi} = \{\eta_i \mid i \in [1,n]\}\) and \(\nu_i = \eta_{\phi(i)}\) for all \(i\). The edge set \(E^{\phi}\) is defined as \(\{(\eta_{\phi(i)}, \eta_{\phi(j)}) \mid (\nu_i, \nu_j) \in E \}\). We note that \(G\) and \(G^{\phi}\) are isomorphic, and thus have the same number of subgraphs.
We denote by $d_i(G^{\phi})$ the degree of $\eta_i$ in $G^{\phi}$ and $d_i^-(G^{\phi})$ the number of neighbors of node $\eta_i$ in the set $\{\eta_1, \dots, \eta_{i - 1}\}$.

\newcommand{\addmargin}{
  \node[fit=(current bounding box),inner ysep=2mm,inner xsep=1mm]{};
}

\begin{table}[tbp]
    \centering
    \begin{tabular}{|c|c|c|c|c|}
        \hline
        Symbol & $S_2^*$ & $P_k$ & $C_k$ & $C_{2k}^*$ \\ \hline
        Representation & 
        \begin{tikzpicture}[baseline=0.5cm]
            \node [style=empty] (0) at (0, 0) {};
            \node [style=empty] (1) at (-0.75, 1) {};
            \node [style=empty] (2) at (0.75, 1) {};
            \draw [style=oriented] (0) to (1);
            \draw [style=original] (0) to (2);
            \addmargin
        \end{tikzpicture}
        & 
        \begin{tikzpicture}[baseline=1cm]
            \node [style=empty] (0) at (0, 0) {};
            \node [style=empty] (1) at (1, 0.5) {};
            \node [style=empty] (2) at (0, 1) {};
            \node [style=empty] (3) at (1, 1.5) {};
            \node [style=empty] (4) at (0, 2) {};
            \draw [style=original] (4) to (3);
            \draw [style=original] (3) to (2);
            \draw [style=original] (2) to (1);
            \draw [style=original] (1) to (0);
            \addmargin
        \end{tikzpicture}
        &
        \begin{tikzpicture}[baseline=0cm]
            \node [style=empty] (0) at (0, 1) {};
            \node [style=empty] (1) at (-1, 0.25) {};
            \node [style=empty] (2) at (1, 0.25) {};
            \node [style=empty] (4) at (-0.6, -1) {};
            \node [style=empty] (5) at (0.6, -1) {};
            \draw [style=original] (2) to (0);
            \draw [style=original] (0) to (1);
            \draw [style=original] (1) to (4);
            \draw [style=original] (4) to (5);
            \draw [style=original] (5) to (2);
            \addmargin
        \end{tikzpicture}
        &
        \begin{tikzpicture}[baseline=0.12cm, scale=0.8]
            \node [style=empty] (0) at (-1, 0.75) {};
            \node [style=empty] (1) at (-1, -0.5) {};
            \node [style=empty] (2) at (0, -1.25) {};
            \node [style=empty] (3) at (1, -0.5) {};
            \node [style=empty] (4) at (1, 0.75) {};
            \node [style=empty] (5) at (0, 1.5) {};
            \draw [style=original] (5) to (0);
            \draw [style=original] (0) to (1);
            \draw [style=original] (5) to (4);
            \draw [style=original] (4) to (3);
            \draw [style=oriented] (1) to (2);
            \draw [style=oriented] (2) to (3);
            \addmargin
        \end{tikzpicture}
        \\ \hline
        Bound &
        \(\displaystyle \bigo{\delta^2 n}\) & 
        \(\displaystyle \bigo{\delta^{\lceil \frac{k}{2} \rceil} n^{\lfloor \frac{k}{2} \rfloor + 1}}\) &
        \(\displaystyle \bigo{\delta^{\lceil \frac{k}{2} \rceil} n^{\lfloor \frac{k}{2} \rfloor}}\) &
        \(\displaystyle \bigo{\delta^{k + 1} n^{k - 1}}\)
        \\ \hline
    \end{tabular}
    \caption{List of subgraphs analyzed in Section \ref{sec:ordering}, including their representations and bounds on their counts in the graph produced by Algorithm \ref{algo:vertex-ordering}. Oriented edges indicate directionality, with an arrow from $\nu_j$ to $\nu_i$ signifying that $j > i$.}
    \label{tab:subgraphs}
\end{table}

In the remainder of this section, we analyze the properties of graphs produced by the reordering. Specifically, our focus is on bounding the frequency of certain substructures within the reordered graph. A summary of the results from this section is provided in Table~\ref{tab:subgraphs}.

\begin{definition}[low star]
For \(k \in \mathbb{N}^*\), a low-\(k\)-star is a subgraph consisting of a central node and \(k\) neighboring nodes, where at least one of the neighboring nodes has an index smaller than that of the central node. We denote by \(S_k^*(G)\) the number of such subgraphs contained in a graph \(G\). \label{def:lowstar}
\end{definition}

\begin{theorem}
    \label{thm:s2-bound}
        $\mathbb{E}\left[S_2^*(G^{\phi})\right] \leq \bigo{\delta^2 n}$.
 \end{theorem}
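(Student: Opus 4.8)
The plan is to count, over the randomness of the Laplace noise used in Algorithm~\ref{algo:vertex-ordering}, the expected number of low-$2$-stars in $G^\phi$. A low-$2$-star is a triple $(\nu_i;\nu_j,\nu_\kappa)$ with $\{\nu_i,\nu_j\},\{\nu_i,\nu_\kappa\}\in E$ (so $\nu_i$ is the center) in which at least one leaf has a smaller index in the reordering than the center — say $\phi(j)<\phi(i)$. Equivalently, fixing an oriented edge $(\nu_j\to\nu_i)$ with $\phi(j)<\phi(i)$, the number of low-$2$-stars centered at $\nu_i$ using that edge as the "low" leg is exactly $d_i-1\le d_i$ (the other leaf $\nu_\kappa$ ranges over the remaining neighbors of $\nu_i$). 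Hence
\begin{equation*}
S_2^*(G^\phi)\ \le\ \sum_{\{\nu_i,\nu_j\}\in E}\big(\mathbbm{1}[\phi(j)<\phi(i)]\cdot d_i+\mathbbm{1}[\phi(i)<\phi(j)]\cdot d_j\big)\ \le\ \sum_{\{\nu_i,\nu_j\}\in E}\max(d_i,d_j)\cdot\mathbbm{1}[\text{low leg}],
\end{equation*}
and I would split the edge sum according to which endpoint has the larger \emph{true} degree, so that the contribution of an edge is governed by the event that the reordering places the higher-degree endpoint \emph{after} the lower-degree one — i.e.\ an "inversion" caused by the noise.

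Concretely, for an edge $\{\nu_i,\nu_j\}$ with $d_i\ge d_j$, its contribution to $\mathbb{E}[S_2^*(G^\phi)]$ is at most $d_i\cdot\prob{\phi(j)<\phi(i)}$ when $d_i\le d_j$ does not hold, plus the symmetric term; the point is that $\prob{\phi(i)<\phi(j)}$ — the "correct" order — is bounded by $1$, while the probability of the inversion is small when $d_i-d_j$ is large. Splitting on this, I bound
\begin{equation*}
\mathbb{E}\big[S_2^*(G^\phi)\big]\ \le\ \sum_{\{\nu_i,\nu_j\}\in E}\Big(\min(d_i,d_j)+\max(d_i,d_j)\cdot\prob{\text{noise inverts }\nu_i,\nu_j}\Big).
\end{equation*}
The first sum is $O(\delta m)=O(\delta^2 n)$ by the Chiba--Nishizeki bound (Theorem~\ref{thm:chiba-nishizeki}, combined with $\alpha\le\delta$ and $m\le\delta n$). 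It remains to show the second sum is also $O(\delta^2 n)$.

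For the inversion term, recall $\tilde d_i=d_i+\mathrm{Lap}(1/\varepsilon_0)$ with independent noises, so $\tilde d_i-\tilde d_j$ differs from $d_i-d_j$ by the difference of two independent Laplace variables, and $\prob{\tilde d_j>\tilde d_i}=\prob{\mathrm{Lap}_1-\mathrm{Lap}_2>d_i-d_j}$ decays like $e^{-\Theta(\varepsilon_0\,(d_i-d_j))}$ (a standard tail bound for the difference of two Laplacians, with a polynomial prefactor that is absorbed). Writing $d_i=d_j+t$ with $t\ge 0$, the per-edge bound $\max(d_i,d_j)\cdot\prob{\text{inversion}}\le (d_j+t)\,e^{-\Theta(\varepsilon_0 t)}$ is, up to constants depending on $\varepsilon_0$, at most $C\cdot d_j + C'$ for a geometric-type reason ($t e^{-ct}$ is bounded and $d_j e^{-ct}\le d_j$). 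So this contribution is again $O\big(\sum_{\{\nu_i,\nu_j\}\in E}(\min(d_i,d_j)+1)\big)=O(\delta m)=O(\delta^2 n)$, using Chiba--Nishizeki once more plus $m=O(\delta n)$. Summing the two parts yields the claim.

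The main obstacle I anticipate is making the inversion estimate interact cleanly with Chiba--Nishizeki: the naive tail bound gives a factor $\max(d_i,d_j)$ on an edge, and one must argue that the inversion probability effectively converts this back into (a constant multiple of) $\min(d_i,d_j)$ plus a constant per edge, uniformly over all degree gaps $t$. The clean way is the elementary inequality $(d_j+t)e^{-c\varepsilon_0 t}\le d_j + \sup_{t\ge 0} t e^{-c\varepsilon_0 t} \le d_j + O(1/\varepsilon_0)$; everything else is bookkeeping. A minor subtlety is that the noises at different endpoints are shared across many edges, but since I only ever use a union/linearity-of-expectation argument over edges (never a joint event), independence across the single edge's two endpoints is all that is needed, and that holds by construction of Algorithm~\ref{algo:vertex-ordering}.
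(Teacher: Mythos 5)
Your proposal is correct and follows essentially the same route as the paper: reduce $S_2^*(G^\phi)$ to a sum over edges of the degree of the later-ordered endpoint, bound each edge's contribution by $\min(d_i,d_j)$ plus a noise-induced excess, and conclude with Chiba--Nishizeki together with $m\le\delta n$. The only difference is how the excess is controlled: the paper uses the pointwise inequality $d_{\max(i,j)}(G^{\phi})\le\min(d_i,d_j)+|\tau_i|+|\tau_j|$ and then simply takes the expectation of the Laplace noise magnitudes (giving an extra $m/\varepsilon_0$), whereas you bound the inversion probability via the tail of a difference of two Laplace variables and use $(d_j+t)e^{-c\varepsilon_0 t}\le d_j+\bigo{1/\varepsilon_0}$ --- both yield the same $\bigo{\delta^2 n+\delta n/\varepsilon_0}$ bound.
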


\begin{proof}
Let $\mathcal{N}_i(G^{\phi})$ be the set of neighbors of $\eta_i$ in $G^{\phi}$. We have that:
    \begin{align*}
        S_2^*(G^{\phi})
        &   =  \sum_{i=1}^n d_i^-(G^{\phi}) (d_i(G^{\phi}) - 1) \leq \sum_{i=1}^n d_i(G^{\phi}) \times d_i^-(G^{\phi})  =  \sum_{i=1}^n d_i(G^{\phi}) \sum_{\eta_j \in \mathcal{N}_i(G^{\phi})} \mathbbm{1}_{j<i} \\
        &   =  \sum_{(\eta_i,\eta_j) \in E^{\phi}} d_{\max(i,j)}(G^{\phi})
    \end{align*}

Let \(\tau_i\) denote the noise added to the estimated degree of user \(i\). For each edge \((\eta_i, \eta_j)\), their ranks can only be exchanged if the sum of the errors in both degree estimations exceeds the gap between the two degrees. Therefore, the quantity \(d_{\max(i,j)}(G^{\phi})\) satisfies 
\begin{equation*}
 d_{\max(i,j)}(G^{\phi}) \leq \min(d_i, d_j) + |\tau_i| + |\tau_j|.   
\end{equation*}
    Using this inequality, we can rewrite the count of $S_2^*(G^{\phi})$ as
    \begin{equation*}
        S_2^*(G^{\phi}) \leq 
            \sum\limits_{(\eta_i,\eta_j) \in E^{\phi}} \min{(d_i, d_j)}
            + \sum\limits_{i=1}^n |\tau_i| d_i.
    \end{equation*}

    Since $\tau_i$ is sampled from $\Lap(1/\varepsilon_0)$, we have that $|\tau_i|$ follows an exponential law of expectation $1 / \varepsilon_0$. Hence, 
    \begin{equation*}
        \E{S_2^*(G^{\phi})} \leq 
            \sum\limits_{(\nu_i,\nu_j) \in E^{\phi}} \min{(d_i, d_j)}
            + \frac{m}{\varepsilon_0}.
    \end{equation*}

    Since $G$ is isomorphic to $G^\phi$, $\alpha(G) = \alpha(G^{\phi})$ and using Theorem \ref{thm:chiba-nishizeki} it follows that
    \begin{equation*}
      \sum\limits_{(\nu_i,\nu_j) \in E^{\phi}} \min{(d_i, d_j)} \leq m \cdot \alpha(G).  
    \end{equation*}
    
    % Thus, we can introduce $(F_i)_{1 \leq i \leq \alpha(G)}$ the edge-disjoint spanning forest of $G$. Using these notations, the sum can be rewritten as
    %     $\sum\limits_{k=1}^{\alpha(G)} \sum\limits_{(\nu_i,\nu_j) \in E(F_k)} \min{(d_i, d_j)}$.

    %  By choosing a root for each tree of the forests, we can introduce a function $h$ that associates each edge $e$ with its child node. In this way, each node is responsible for at most one edge in each forest. When $m = |E|$, we have
    %  \begin{align*}
    %     \sum_{(\nu_i,\nu_j) \in E^{\phi}} \min{(d_i, d_j)}
    %     \leq \sum_{k=1}^{\alpha(G)} \sum_{e \in E(F_k)} d_{h(e)}(G^{\phi}) \leq \sum_{k=1}^{\alpha(G)} \sum_{i=1}^n d_i(G^{\phi}) = m \cdot \alpha(G)
    % \end{align*}

    Since $m \leq n\delta$ and $\alpha(G) = O(\delta)$, this gives $\E{S_2^*(G^{\phi})} \leq \bigo{\delta^2 n}$.
\end{proof}

In addition to the ordered stars we just discussed, arboricity can also be used to bound the number of paths and cycles in a graph, as demonstrated in the following lemma and theorem. Recall that $\#P_k(G)$ is the number of paths with length $k$ in the graph $G$.

\begin{lemma}
    \label{lem:paths}
    For any positive integer $k$, $\#P_{2k}(G) = \bigo{\delta^{k} n^{k+1}}, \#P_{2k+1} = \bigo{\delta^{k+1} n^{k+1}}$.
\end{lemma}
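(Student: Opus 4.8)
The plan is to prove the bound on $\#P_k(G)$ by induction on $k$, building longer paths from shorter ones while charging the endpoints carefully using the Chiba--Nishizeki bound (Theorem~\ref{thm:chiba-nishizeki}) and the folklore bound $m \leq \delta n$. First I would set up the counting: a path $P_{k+1} = (\nu_{l_1}, \dots, \nu_{l_{k+1}})$ is obtained from a path $P_k = (\nu_{l_1}, \dots, \nu_{l_k})$ by appending an edge $(\nu_{l_k}, \nu_{l_{k+1}}) \in E$, so $\#P_{k+1}(G) \leq \sum_{(\nu_{l_1}, \dots, \nu_{l_k}) \in P_k(G)} d_{l_k}$. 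The crude bound $d_{l_k} \leq n$ already gives a factor of $n$ per appended vertex, which accounts for the $n^{k+1}$ and $n^{k}$ factors; the subtlety is squeezing out the extra powers of $\delta$ rather than $n$ for half of the vertices.

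The key idea for extracting the $\delta$ factors is the following: whenever I extend a path by \emph{two} vertices at once, i.e.\ go from $P_{2k}$ to $P_{2k+2}$, I append an edge $(\nu_{l_{2k}}, \nu_{l_{2k+1}})$ and then another edge $(\nu_{l_{2k+1}}, \nu_{l_{2k+2}})$, and I should bound the number of ways to choose the middle vertex $\nu_{l_{2k+1}}$ together with its two incident path-edges. Summing over the edge $(\nu_{l_{2k}}, \nu_{l_{2k+1}})$ and then the further neighbor, the relevant quantity is $\sum_{(\nu_a, \nu_b) \in E} \min(d_a, d_b)$ — choosing whichever endpoint of the new edge has smaller degree as the "hub" costs only $\min(d_a, d_b)$ to branch out from — which is $O(\delta m) = O(\delta^2 n)$ by Chiba--Nishizeki and $m \leq \delta n$. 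Concretely, I would prove by induction that $\#P_{2k}(G) = O(\delta^k n^{k+1})$ and $\#P_{2k+1}(G) = O(\delta^{k+1} n^{k+1})$ simultaneously: the base cases $\#P_1(G) = n$ wait, $\#P_1 = n$ and wait we need $\#P_2(G) = 2m = O(\delta n)$, which matches $k=0$: $O(\delta^0 n^{0+1}) = O(n)$ is wrong for $P_2$... let me instead take base cases $\#P_1(G) = n = O(n)$ (matching $\#P_{2k+1}$ with $k=0$) and $\#P_2(G) = 2m = O(\delta n)$ (matching $\#P_{2k}$ with $k=1$), then the inductive step from $P_{2k-1}$ to $P_{2k+1}$ appends two vertices and multiplies by a factor $O(\delta n)$ via the hub argument above (one factor $n$ for the far endpoint, one factor $\delta$ absorbed from the Chiba--Nishizeki sum over the newly closed edge-pair relative to the previous endpoint), while the step from $P_{2k}$ to $P_{2k+2}$ is analogous.

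In more detail, for the step that gains a $\delta$: given a path $(\nu_{l_1}, \dots, \nu_{l_{j}})$, I extend to $(\nu_{l_1}, \dots, \nu_{l_{j+2}})$ by summing over choices of $(\nu_{l_{j+1}}, \nu_{l_{j+2}})$. I bound $\#P_{j+2}(G) \leq \sum_{\text{paths } P_j} \sum_{\nu_{l_{j+1}} \in \mathcal{N}(\nu_{l_j})} d_{l_{j+1}}$. The inner double sum, over the existing endpoint $\nu_{l_j}$ ranging over all vertices, is $\sum_{(\nu_a, \nu_b) \in E} d_b$ where $a$ plays the role of $l_j$; this is not directly $O(\delta m)$, so instead I symmetrize: since each undirected edge contributes $d_a + d_b \geq 2\min(d_a,d_b)$... hmm, that inequality goes the wrong way. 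The right move is to orient each edge from lower-degree to higher-degree endpoint and observe $\sum_{(\nu_a,\nu_b)\in E} d_a \le \sum_v d_v^2$ is too weak; rather, I will bound $\sum_{\text{paths }P_j}\sum_{\nu_{l_{j+1}}\in\mathcal N(\nu_{l_j})}1 = \#P_{j+1}(G)$ first and separately use $d_{l_{j+1}}\le n$, giving $\#P_{j+2}(G)\le n\cdot\#P_{j+1}(G)$ — so the $\delta$-gain must come from a \emph{different} pairing, namely charging the edge $(\nu_{l_j},\nu_{l_{j+1}})$ by $\min(d_{l_j}, d_{l_{j+1}})$ and noting that in the path we may always traverse toward whichever of the two consecutive vertices has smaller degree. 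The cleanest formalization: count paths as sequences where between every second pair of vertices we record $\min$ of the two degrees; iterating Chiba--Nishizeki $\lceil k/2\rceil$ times yields the claimed exponents. I expect the main obstacle to be exactly this bookkeeping — making the "always charge the smaller-degree endpoint" argument rigorous when a vertex is shared between two consecutive edges of the path, so that the Chiba--Nishizeki bound can be applied $k$ times without double-counting or losing the independence of the charges. Once that charging scheme is pinned down, the rest is a routine induction.
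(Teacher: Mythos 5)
Your proof hinges on the claim that extending a path by two vertices costs only $O(\delta n)$ per existing path, with the factor $\delta$ extracted from the Chiba--Nishizeki bound via a ``charge the smaller-degree endpoint'' scheme --- but this is exactly the step you never establish, and the justifications you sketch do not work. When a path is extended from its current endpoint $\nu_{l_j}$ through a new middle vertex $\nu_{l_{j+1}}$ to a new far vertex $\nu_{l_{j+2}}$, the number of extensions is $\sum_{b \in \mathcal{N}(\nu_{l_j})} d_b$: the vertex you branch out from is forced to be the new middle vertex, so you cannot ``choose whichever endpoint of the new edge has smaller degree as the hub,'' and $\sum_{(\nu_a,\nu_b)\in E}\min(d_a,d_b)$ does not control this sum (two vertices can have $\Theta(n)$ common neighbours even when $\delta=2$, as in $K_{2,n-2}$). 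The fallback claim that ``in the path we may always traverse toward whichever of the two consecutive vertices has smaller degree'' is also invalid: a path can only be reversed globally, not reoriented edge by edge, so the per-edge $\min$-charging cannot be set up this way. You are moreover conflating a global quantity ($\sum_{e}\min(d_a,d_b) = O(\delta m)$, a count over all edges) with a per-path extension cost; and you yourself flag this bookkeeping as the ``main obstacle.'' As written, the argument is therefore incomplete at its central point.

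The fix is that no degree-charging is needed: the factor $\delta$ comes solely from $m \leq \delta n$. The paper maps each path to its odd-indexed edges: a path with $2k+1$ edges is determined by the tuple $(e_1, e_3, \dots, e_{2k+1})$ of $k+1$ edges, because each even-indexed edge is forced by the endpoints of its neighbours, giving $\#P_{2k+1} \leq m^{k+1} = O(\delta^{k+1} n^{k+1})$, and $\#P_{2k} \leq n \cdot m^{k} = O(\delta^{k} n^{k+1})$ after one extra factor $n$ for the last vertex. If you prefer your inductive framing, the correct two-step estimate is $\#P_{j+2} \leq 2m \cdot \#P_j$: the two appended vertices must themselves form an (ordered) edge of $G$, at most $2m \leq 2\delta n$ choices, while the edge connecting them back to the old endpoint is a condition rather than an additional choice; combined with your base cases this yields the lemma at once, and it is precisely the paper's argument recast as an induction, with Chiba--Nishizeki playing no role.
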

\begin{proof}
    We first consider \(\#P_{2k + 1}(G)\). Let \(f\) be a function that maps a path of length \(2k + 1\) to a tuple of \(k + 1\) edges, defined as \(f(e_1, \dots, e_{2k + 1}) = (e_1, e_3, \dots, e_{2k + 1})\). We observe that, for any tuple of \(k + 1\) edges denoted by \(\mathcal{E} = (e'_1, \dots, e'_{k + 1})\), \(f^{-1}(\mathcal{E})\) is either a set containing one path or an empty set. There is at most one path that uses \(e'_i\) as the \((2i - 1)\)-th edge of the path for all \(i\). Thus, we can conclude that the number of paths of length \(2k + 1\) is at most the number of sets of \(k + 1\) edges, which is \(m^{k+1} = \bigo{\delta^{k+1} n^{k+1}}\).

    Next, let us consider \(\#P_{2k}(G)\). Let $f$ be a function that maps a path of length $2k$ to a tuple of $k$ edges, defined as $f(e_1, \dots, e_{2k}) = (e_1, e_3, \dots, e_{2k - 1})$. We observe that, for any tuple of $k$ edges denoted by $\mathcal{E} = (e'_1, \dots, e'_k)$, \(f^{-1}(\mathcal{E})\) is a set of size no larger than $n$. There is at most one path of length \(2k - 1\) that uses \(e'_i\) as the \((2i - 1)\)-th edge of the path, and there are at most \(n\) possible ways to extend a path of length \(2k - 1\) to a path of length \(k\). Hence, $\#P_{2k}(G) \leq n \cdot m^k = \bigo{\delta^k n^{k + 1}}$.
\end{proof}

Recall that $\#C_k(G)$ is the number of cycles with size $k$ in the graph $G$. We obtain the following theorem.

\begin{theorem}
    \label{thm:k-cycles}
    For any $k \geq 1$, 
        $\#C_{k+2}(G) \leq \frac{2}{k} \alpha(G) \#P_{k}(G)$.
\end{theorem}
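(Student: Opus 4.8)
The plan is to prove the slightly stronger statement $\#C_{k+2}(G) \leq \frac{2}{k+2}\,\alpha(G)\,\#P_k(G)$, which implies the claim since $\frac{2}{k+2} \leq \frac{2}{k}$, by a double-counting argument anchored on a low out-degree orientation of $G$. First I would fix such an orientation: partition $E$ into $\alpha := \alpha(G)$ forests $F_1, \dots, F_\alpha$, root every tree of every forest, and orient each edge toward the root of the tree containing it. In each forest every vertex has at most one outgoing edge (the one toward its parent), so in the resulting orientation $\vec{G}$ every vertex $v$ has out-degree $|N^+(v)| \leq \alpha$.

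Next I would count, in two ways, the set $\mathcal{T}$ of pairs $(C, \vec{e})$ where $C$ is a copy of $C_{k+2}$ in $G$ and $\vec{e}$ is an edge of $C$ taken together with the orientation it carries in $\vec{G}$. Counting by $C$: each $C_{k+2}$ has exactly $k+2$ edges, each carrying one orientation, so $|\mathcal{T}| = (k+2)\,\#C_{k+2}(G)$. Counting by the path obtained after deleting the head of $\vec{e}$: write $\vec{e} = (u \to w)$, so that $w$ is a vertex of $C$ and $u$ is one of its two neighbours on $C$; let $b$ be the other neighbour. Deleting $w$ from $C$ leaves a copy $P$ of $P_k$ with endpoints $u$ and $b$, and the triple $(P, u, w)$ — the path, the marked endpoint $u$, and the deleted vertex $w$ — determines $(C, \vec{e})$ uniquely. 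Conversely, for a fixed copy $P$ of $P_k$ and a fixed choice of endpoint $u$ of $P$ (with $b$ the other endpoint), a valid $w$ must satisfy $w \notin V(P)$, $u \to w$ in $\vec{G}$, and $\{w, b\} \in E$; in particular $w \in N^+(u)$, so there are at most $|N^+(u)| \leq \alpha$ such $w$. Since each copy of $P_k$ has exactly two endpoints, summing over all pairs (copy of $P_k$, endpoint) gives $|\mathcal{T}| \leq 2\alpha\,\#P_k(G)$. Comparing the two counts yields $(k+2)\,\#C_{k+2}(G) \leq 2\alpha\,\#P_k(G)$, which finishes the proof.

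The main obstacle, and the reason for counting pairs $(C,\vec{e})$ rather than the more obvious pairs $(C,w)$ of a cycle and a deleted vertex, is that in the naive version one is forced to bound the number of ways to reglue a deleted vertex $w$ by the codegree $|N(u) \cap N(b)|$ of the two endpoints of $P$, and this codegree is not controlled by arboricity (it can be $\Theta(n)$, e.g.\ in $K_{2,n}$). Orienting $G$ and re-gluing only along an incoming edge $u \to w$ replaces this codegree by the out-degree $|N^+(u)| \leq \alpha$ given by Theorem~\ref{thm:arboricity-degeneracy}/arboricity; the price is that a vertex $w$ can be the head of two edges of $C$ (when both cycle-edges at $w$ point into it) or of none (when both point out), which is exactly why the accounting must be done per oriented edge of $C$ rather than per vertex. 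The remaining things to verify carefully are that the map $(C,\vec{e}) \mapsto (P,u,w)$ really is a bijection onto the described set, that the out-degree bound from the forest partition is exactly $\alpha$ (not $\alpha+1$), and the small degenerate range, namely $k=1$, where $P_k$ is a single edge and $u \neq b$ still holds because $C_3$ is a simple triangle.
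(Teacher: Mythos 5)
Your proof is correct and follows essentially the same route as the paper: both decompose $E$ into $\alpha(G)$ forests and use the orient-toward-root trick so that each vertex has at most $\alpha$ "forbidden-direction" edges, then bound cycle--edge pairs by path--endpoint pairs (the paper encodes the orientation through a function $h$ assigning each edge its child vertex, which is your $N^+$ bound phrased differently). Your double-counting presentation is a bit cleaner and explicitly recovers the tighter constant $2/(k+2)$, which the paper's argument also yields implicitly but states loosely as $2/k$.
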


\begin{proof}
%    Let $G = (V, E)$ be a graph and $k \geq 1$.
   Let us denote $\#P_k^{(i)}$ the number of paths of length $k$ that have node $\nu_i$ as an extremity and $\#C_k^{(i,j)}$ the number of cycles of length $k$ containing edge $(\nu_i, \nu_j)$. Using these notations, we have
        $\#C_{k+2} = \frac{1}{k} \sum\limits_{(\nu_i, \nu_j) \in E} \#C_{k + 2}^{(i,j)}$. Consider the number \(\#C_{k + 2}^{(i,j)}\). For a path of length \(k\) that has a node \(\nu_i\) as a terminal, there is at most one cycle of length \(k + 2\) which includes this path and the edge \((\nu_i, \nu_j)\). Therefore, we conclude that \(\#C_{k + 2}^{(i,j)} \leq \#P_k^{(i)}\). Similarly, we have \(\#C_{k + 2}^{(i,j)} \leq \#P_k^{(j)}\). Hence,
        \begin{equation*}
           \#C_{k+2} \leq \frac{1}{k} \sum\limits_{(\nu_i, \nu_j) \in E} \min{\left(\#P_k^{(i)}, \#P_k^{(j)}\right)}. 
        \end{equation*}
    
    For any function $h : E \rightarrow \{1, \dots, n\}$ such that for all $e = (\nu_i, \nu_j) \in E$, $h(e)$ is equal to either $i$ or $j$, $\min{\left(\# P_k^{(i)}, \# P_k^{(j)}\right)} \leq \# P_k^{(h(\nu_i, \nu_j))}$.
    By definition of the arboricity, there exist a set of disjoint forests $\{F_l\}_{l=1, \ldots, \alpha(G)}$ such that $E = \bigcup_{l=1}^{\alpha(G)} F_l$. By choosing a root for each tree of these forests, we can introduce a function $h$ such that each edge has its child node as an image. In this way, each node can only be the image of one edge per forest. This leads to
    \begin{eqnarray*}
        \#C_{k+2}
        & \leq & \frac{1}{k} \sum_{l=1}^{\alpha(G)} \sum_{(\nu_i, \nu_j) \in F_l} \min{\left(\#P_k^{(i)}, \#P_k^{(j)}\right)} \leq \frac{1}{k} \sum_{l=1}^{\alpha(G)} \sum_{e \in F_l} \# P_k^{(h(e))} \leq \frac{1}{k} \sum_{l=1}^{\alpha(G)} \sum_{i \in V} \# P_k^{(i)} \\ &=& \frac{2}{k} \alpha(G) \# P_k.
    \end{eqnarray*}

    The last step is justified by the fact that each path having two extremities, the sum of all the paths of length $k$ starting with node $\nu_i$ is twice the number of paths of length $k$.
\end{proof}

Combining Lemma \ref{lem:paths} and Theorem \ref{thm:k-cycles}, we obtain the following corollary. 
\begin{corollary}
For $k \geq 1$, $\# C_{2k+2} = \bigo{\delta^{k+1} n^{k+1}}$ and $\# C_{2k+1} = \bigo{\delta^{k+1} n^{k}}$. \label{col:numCycles}
\end{corollary}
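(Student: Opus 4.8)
The plan is simply to feed the path bounds of Lemma~\ref{lem:paths} into the cycle-to-path reduction of Theorem~\ref{thm:k-cycles} and use $\alpha(G) \le \delta(G)$ from Theorem~\ref{thm:arboricity-degeneracy}. I would split into the two parities exactly as in the statement.

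For the even case, I want to bound $\#C_{2k+2}$. Applying Theorem~\ref{thm:k-cycles} with cycle length $2k+2 = k'+2$, i.e.\ $k' = 2k$, gives $\#C_{2k+2}(G) \le \frac{2}{2k}\,\alpha(G)\,\#P_{2k}(G) = \frac{1}{k}\alpha(G)\,\#P_{2k}(G)$. Lemma~\ref{lem:paths} says $\#P_{2k}(G) = \bigo{\delta^{k} n^{k+1}}$, and $\alpha(G) \le \delta(G) = \bigo{\delta}$, so the product is $\bigo{\delta^{k+1} n^{k+1}}$, as claimed.

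For the odd case, I bound $\#C_{2k+1}$ by applying Theorem~\ref{thm:k-cycles} with $2k+1 = k'+2$, i.e.\ $k' = 2k-1$, giving $\#C_{2k+1}(G) \le \frac{2}{2k-1}\,\alpha(G)\,\#P_{2k-1}(G)$. Here I rewrite $2k-1 = 2(k-1)+1$ and invoke the odd part of Lemma~\ref{lem:paths} with parameter $k-1$, obtaining $\#P_{2k-1}(G) = \bigo{\delta^{k} n^{k}}$; combined with $\alpha(G) = \bigo{\delta}$ this yields $\#C_{2k+1}(G) = \bigo{\delta^{k+1} n^{k}}$. I would note separately that the boundary value $k=1$ is consistent: then $\#P_{1}(G) = m \le \delta n = \bigo{\delta n}$ and $\#C_3(G) \le 2\alpha(G)\,m = \bigo{\delta^2 n}$, matching the formula.

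There is no real obstacle here beyond bookkeeping; the only point requiring a little care is lining up the index shift in Theorem~\ref{thm:k-cycles} (stated as $\#C_{k+2} \le \frac2k \alpha \#P_k$) with the two parities and checking that the odd case still goes through at its smallest value $k=1$, where the path count degenerates to the edge count $m$.
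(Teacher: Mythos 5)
Your proposal is correct and is exactly the combination the paper intends: plug the path bounds of Lemma~\ref{lem:paths} into Theorem~\ref{thm:k-cycles} and use $\alpha(G)\leq\delta(G)$, with the index shift handled per parity. Your separate check of the boundary odd case via $\#P_1=m\leq\delta n$ is a sensible extra precaution, since the lemma is stated only for positive $k$, but it changes nothing substantive.
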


Next, we focus on the number of cycles of length \(2k\) for any \(k \geq 2\), in which three consecutive vertices of the cycle exhibit monotonic ranks $C^*_{2k}$, as illustrated in Table~\ref{tab:subgraphs}. Throughout the rest of this article, we will denote the count of such subgraphs in \(G\) by \(\#C_{2k}^*(G)\), omitting \(G\) from the notation when the context is clear. In the following theorem, for simplicity, we adopt an abuse of notation by assuming $\#P_{-1}(G) = 1$ and $\#P_{0}(G) = n$ for every graph $G$.

\begin{theorem}
    \label{thm:star-cycles}
    For $k \geq 2$, 
        $\#C_{2k}^*(G) \leq 2 \alpha(G) S_2^*(G) \#P_{2k-5}(G)$.
%    Thus, with $\phi$ one of the ordering of $G$ with respect to the estimated degrees we have
%    \begin{equation}
%        \label{eq:4-cycles-final}
%        C_4^*(G^{\phi}) = \bigo{\delta^{k+1} n^{k-1}}.
%    \end{equation}
\end{theorem}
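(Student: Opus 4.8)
The plan is to follow the pattern of Theorem~\ref{thm:k-cycles}: split a copy of $C_{2k}^*$ into the low-$2$-star carried by its distinguished monotone triple, isolate inside the remaining arc a path of length $2k-5$, and pay for the way the two pieces are attached with a forest (arboricity) argument, exactly as arboricity is used in Theorems~\ref{thm:s2-bound} and~\ref{thm:k-cycles}.

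\medskip
\noindent\textbf{Decomposing a copy.} In any copy of $C_{2k}^*$ write the three distinguished consecutive vertices as $v_1 - v_2 - v_3$ with $\mathrm{idx}(v_1) > \mathrm{idx}(v_2) > \mathrm{idx}(v_3)$. Then $(v_2; v_1, v_3)$ is a low-$2$-star of $G$ in the sense of Definition~\ref{def:lowstar}, since its leaf $v_3$ has index smaller than its centre $v_2$; the number of monotone triples is therefore at most $S_2^*(G)$. Deleting $v_2$ from the cycle leaves a path of length $2k-2$ from the high leaf $v_1$ to the low leaf $v_3$; write it as $v_3 = u_0, u_1, u_2, \dots, u_{2k-3}, u_{2k-2} = v_1$. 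Single out the interior sub-path $u_2 - u_3 - \cdots - u_{2k-3}$, of length exactly $2k-5$, and the extra vertex $u_1$, which is a common neighbour of the low leaf $v_3$ and of the endpoint $u_2$ of the interior path. A copy of $C_{2k}^*$ is completely determined by (i) the low-$2$-star $(v_2;v_1,v_3)$, (ii) the interior path of length $2k-5$ with endpoints $u_2$ and $u_{2k-3}$, and (iii) the vertex $u_1$, subject only to the adjacencies $u_1$--$v_3$, $u_1$--$u_2$, $u_{2k-3}$--$v_1$. Dropping the vertex-disjointness of the pieces and these adjacency side conditions only increases the count, so $\#C_{2k}^*(G)$ is at most the number of triples obtained this way, and the only unbounded degree of freedom left is the choice of $u_1$.

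\medskip
\noindent\textbf{Paying for the extra vertex.} Note that the edge $\{v_3,u_1\}$ is incident to the already accounted-for low leaf $v_3$. Fix a partition of $E(G)$ into $\alpha(G)$ forests $F_1,\dots,F_{\alpha}$ and root every tree. For each forest, charge $\{v_3,u_1\}$ to its child endpoint; since in a single forest a vertex is the child of at most one edge, summing over the $\alpha(G)$ forests replaces the free choice of $u_1$ by a factor of at most $2\alpha(G)$ (the factor $2$ arising as in Theorem~\ref{thm:k-cycles} from the two roles the endpoints can play). After this charging, a copy of $C_{2k}^*$ is specified by a low-$2$-star ($\le S_2^*(G)$ choices), an interior path of length $2k-5$ ($\le \#P_{2k-5}(G)$ choices) and the forest data ($\le 2\alpha(G)$ choices), which multiplies out to $\#C_{2k}^*(G)\le 2\alpha(G)\,S_2^*(G)\,\#P_{2k-5}(G)$. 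The conventions $\#P_{-1}(G)=1$ and $\#P_0(G)=n$ make the argument and the statement meaningful in the small cases $k=2$ and $k=3$, where the interior path is empty or a single edge.

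\medskip
\noindent\textbf{Main obstacle.} The delicate point is the bookkeeping: one must place the interior sub-path so that it is a genuine vertex-disjoint path of length precisely $2k-5$; one must check that the low-$2$-star count and the path count decouple into the clean product $S_2^*(G)\cdot\#P_{2k-5}(G)$ rather than a convolution over the position at which the pieces meet; and one must run the forest-charging on $\{v_3,u_1\}$ so that charging to the child both removes the over-counting and pins down $u_1$ from the forest and the low leaf. If this decoupling is awkward to obtain directly, a fallback is to bound the number of completions of a fixed low-$2$-star to a $C_{2k}^*$ by a sum over an edge of a minimum of two vertex-indexed quantities and then invoke the Chiba--Nishizeki bound (Theorem~\ref{thm:chiba-nishizeki}), in the same manner as in the proof of Theorem~\ref{thm:s2-bound}.
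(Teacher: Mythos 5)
Your decomposition of a copy of $C_{2k}^*$ (low-$2$-star at the monotone triple, an interior path of length $2k-5$, and one connecting vertex $u_1$) matches the structure the paper uses, but the step where you ``pay for the extra vertex'' with the forest partition has a genuine gap. Charging the edge $\{v_3,u_1\}$ to its child endpoint only pins down $u_1$ in the case where the child is $v_3$: then $u_1$ is the parent of $v_3$ in that forest, and the forest index (at most $\alpha(G)$ choices) determines $u_1$. In the opposite case the child is $u_1$, and the fact that a vertex is the child of at most one edge per forest bounds how many edges are charged \emph{to $u_1$}, not how many candidates $u_1$ are compatible with a given $v_3$: a vertex can be the \emph{parent} of arbitrarily many edges inside a single forest. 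The same failure occurs if you symmetrically charge $\{u_1,u_2\}$, since $u_1$ can be the child endpoint of both edges (in two different forests). Concretely, in a $K_{2,t}$-type configuration with $v_3$ and $u_2$ on one side, the arboricity is $2$ while the number of admissible $u_1$ for that fixed (star, path) pair is $t$, so no per-pair factor of $2\alpha(G)$ exists, and your aggregate version of the claim only covers the favourable child cases. Your fallback via Theorem~\ref{thm:chiba-nishizeki} does not rescue this either, because $\{v_3,u_2\}$ need not be an edge of $G$ and Chiba--Nishizeki only controls $\sum_{e\in E}\min(d_i,d_j)$ over actual edges.

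The paper closes exactly this hole by never trying to bound the choices of the connector locally. It introduces $\#P^*_{p}$, the number of length-$p$ paths having a low-$2$-star at one extremity, together with the per-node counts $\#P^{*(i)}_{2k-2}$, anchors each counted cycle at the edge $(\nu_i,\nu_j)$ immediately following the monotone triple, and bounds $\#C_{2k}^{*(i,j)} \leq \min\bigl(\#P^{*(i)}_{2k-2}, \#P^{*(j)}_{2k-2}\bigr)$. The rooted-forest argument of Theorem~\ref{thm:k-cycles} is then applied to this sum over edges: whichever endpoint is the child, the per-edge count is dominated by that endpoint's per-node count, each node is a child at most once per forest, and summing over nodes gives $2\alpha(G)\#P^*_{2k-2}$ (the factor $2$ coming from each path having two extremities, not from the two child/parent roles). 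Finally $\#P^*_{2k-2}\leq S_2^*\,\#P_{2k-5}$ yields the bound. If you want to keep your write-up, replace your charging of $\{v_3,u_1\}$ by this min-over-endpoints plus per-node path-with-star count device; the rest of your decomposition then goes through.
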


\begin{proof}
%    Equation~\ref{eq:4-cycles-final} follows immediately from equation~\ref{eq:4-cycles-intermediate} using theorem~\ref{thm:s2-bound} and lemma~\ref{lem:paths}. Thus we will now focus on proving equation~\ref{eq:4-cycles-intermediate}.
Let \(\# C_{2k}^{*(i,j)}(G)\) represent the number of subgraphs in \(G\) where three consecutive vertices exhibit monotonic ranks, with \((\nu_i, \nu_j)\) being the edge immediately following these consecutive vertices.
 Also, for $k \geq 2$, let the number of paths of length \(p\) with a low-2-star as one of its extremities be denoted as \(\#P^*_p\). Since we can construct at most one path included in \(\#P^*_p\) where a low-2-star and a path of length \(p-3\) are its extremities, we obtain the inequality \( \#P^*_p \leq S_2^* \cdot \# P_{p-3} \). 
 
 Let $C_{2k}^{*(i,j)}$ be a cycle which is counted in $\#C_{2k}^{*(i,j)}$. Consider the path in $C_{2k}^{*(i,j)}$ of length \(2k - 2\) starting from \(\nu_i\) that does not pass through \(\nu_j\) and the other path in $C_{2k}^{*(i,j)}$ of the same length starting from \(\nu_j\) that does not pass through \(\nu_i\). We observe that one extremity of the two paths is a low-2-star. Hence, $\# C_{2k}^{*(i,j)} \leq \min{\left(\# P_{2k-2}^{*(i)}, \# P_{2k-2}^{*(j)}\right)}$ when $\# P_p^{*(i)}$ is the number of paths in the count of $\#P^*_{p}$ that have $\nu_i$ as an extremity.
    Using the same definition of $h$ as in the proof of Theorem~\ref{thm:k-cycles}, we have
    \begin{align*}
        \# C_{2k}^*(G)
        & \leq \sum_{(\nu_i, \nu_j) \in E} \# C_{2k}^{*(i,j)}  \leq \sum_{(\nu_i, \nu_j) \in E} \min{\left(\# P_{2k-2}^{*(i)}, \# P_{2k-2}^{*(j)}\right)} \\ & \leq \sum_{l=1}^{\alpha(G)} \sum_{(\nu_i, \nu_j) \in F_l} \min{\left(\# P_{2k-2}^{*(i)}, \# P_{2k-2}^{*(j)}\right)} \leq \sum_{l=1}^{\alpha(G)} \sum_{e \in F_l} \# P_{2k-2}^{*(h(e))} \\ & \leq \sum_{l=1}^{a(G)} \sum_{i \in V} \# P_{2k-2}^{*(i)} \leq 2 \alpha(G) \# P_{2k-2}^{*}  \leq 2 \alpha(G) S_2^{*} \# P_{2k-5}. \qedhere
    \end{align*}
\end{proof}
The next corollary follows Theorem~\ref{thm:s2-bound}, \ref{thm:star-cycles}, and Lemma~\ref{lem:paths}.
\begin{corollary}
    \label{col:numCycle}
    For $k \geq 2$, $\mathbb{E}[C_{2k}^*(G^{\phi})] = \bigo{\delta^{k+1} n^{k-1}}$.
\end{corollary}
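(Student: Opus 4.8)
The plan is to chain the three results cited just before the corollary, applied to the reordered graph $G^{\phi}$. First I would invoke Theorem~\ref{thm:star-cycles} with the graph taken to be $G^{\phi}$, giving
\[
\#C_{2k}^*(G^{\phi}) \leq 2\,\alpha(G^{\phi})\, S_2^*(G^{\phi})\, \#P_{2k-5}(G^{\phi}).
\]
The key observation is that among the three factors on the right-hand side, only $S_2^*(G^{\phi})$ is random: arboricity and path counts are isomorphism invariants, and $G^{\phi}$ is isomorphic to $G$, so $\alpha(G^{\phi}) = \alpha(G)$ and $\#P_{2k-5}(G^{\phi}) = \#P_{2k-5}(G)$ are deterministic quantities. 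Taking expectations and pulling these two deterministic factors out of the expectation then yields
\[
\mathbb{E}\!\left[\#C_{2k}^*(G^{\phi})\right] \leq 2\,\alpha(G)\, \#P_{2k-5}(G)\, \mathbb{E}\!\left[S_2^*(G^{\phi})\right].
\]

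Next I would bound each of the three surviving quantities. By Theorem~\ref{thm:arboricity-degeneracy}, $\alpha(G) \leq \delta$, so $\alpha(G) = \bigo{\delta}$. By Theorem~\ref{thm:s2-bound}, $\mathbb{E}[S_2^*(G^{\phi})] = \bigo{\delta^2 n}$. For the path factor, observe that $2k-5$ is odd, and for $k \geq 3$ it equals $2(k-3)+1$; Lemma~\ref{lem:paths} then gives $\#P_{2k-5}(G) = \bigo{\delta^{k-2} n^{k-2}}$. The lone boundary case is $k = 2$, where $2k-5 = -1$ and the convention $\#P_{-1}(G) = 1$ makes this estimate $\bigo{\delta^{0} n^{0}}$ as well, so the bound $\#P_{2k-5}(G) = \bigo{\delta^{k-2} n^{k-2}}$ holds uniformly for all $k \geq 2$.

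Multiplying the three bounds gives
\[
\mathbb{E}\!\left[\#C_{2k}^*(G^{\phi})\right] = \bigo{\delta} \cdot \bigo{\delta^{k-2} n^{k-2}} \cdot \bigo{\delta^2 n} = \bigo{\delta^{k+1} n^{k-1}},
\]
which is exactly the claim. There is no genuinely hard step in this argument; the two things that need care are (i) recording that the expectation distributes, since the randomness of Algorithm~\ref{algo:vertex-ordering} enters the right-hand side of Theorem~\ref{thm:star-cycles} only through $S_2^*(G^{\phi})$, and (ii) the index and parity bookkeeping for $\#P_{2k-5}$, together with the $k = 2$ edge case that is absorbed by the $\#P_{-1} = 1$ convention.
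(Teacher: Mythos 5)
Your argument is correct and matches the paper's intended derivation: the paper offers no separate proof, stating only that the corollary follows from Theorem~\ref{thm:s2-bound}, Theorem~\ref{thm:star-cycles}, and Lemma~\ref{lem:paths}, which is exactly the chain you carry out (apply Theorem~\ref{thm:star-cycles} to $G^{\phi}$, note that only $S_2^*(G^{\phi})$ depends on the random ordering, and multiply the three bounds). The only harmless bookkeeping remark is that for $k=3$ the factor $\#P_{1}$ falls outside the literal range of Lemma~\ref{lem:paths}, but $\#P_1 \leq 2m = \bigo{\delta n}$ gives the same estimate, just as the convention $\#P_{-1}=1$ handles your $k=2$ edge case.
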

The next corollary considers the number of edge sets in $G^\phi$ with specific properties.

\begin{corollary}
    \label{lem:cycles-set}
For any \(p \in \mathbb{N}\), we consider edge sets \(\mathsf{E} \subseteq E^\phi\) of size \(2p\) such that 1) for some $c > 0$, there exists a set of cycles \(C_1, \dots, C_c\) in \(G^\phi\) where \(C_1 \cup \dots \cup C_c = \mathsf{E}\) and \(C_i \cap C_j = \emptyset\) for \(i \neq j\), and 2) at least one of \(C_1, \dots, C_c\) contains three consecutive vertices of monotonic index. The number of such edge sets is \( \bigo{\delta^{p+1} n^{p-1}}\).
\end{corollary}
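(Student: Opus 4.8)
The plan is to reduce the count to Corollaries~\ref{col:numCycles} and~\ref{col:numCycle} by singling out one starred cycle and treating the rest of $\mathsf{E}$ as an arbitrary union of edge-disjoint cycles. Fix a valid edge set $\mathsf{E}$ together with a decomposition $C_1 \cup \dots \cup C_c = \mathsf{E}$ into edge-disjoint cycles of lengths $\ell_1, \dots, \ell_c \ge 3$ summing to $2p$, relabelled so that $C_1$ is one of the cycles containing three consecutive vertices of monotonic index. For a fixed $p$ there are only finitely many ordered tuples $(\ell_1, \dots, \ell_c)$ of integers $\ge 3$ summing to $2p$, so it suffices to bound, for each such tuple, the number of edge sets admitting a decomposition of that shape with $C_1$ starred. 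Dropping the edge-disjointness requirement, this number is at most (number of $\ell_1$-cycles of $G^\phi$ containing a monotonic triple) $\times \prod_{i \ge 2}$ (number of $\ell_i$-cycles of $G^\phi$), since every such edge set is the union of some valid tuple $(C_1, \dots, C_c)$.

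For the factors with $i \ge 2$ I would use isomorphism-invariance and Corollary~\ref{col:numCycles}: the number of $\ell_i$-cycles of $G^\phi$ equals $\#C_{\ell_i}(G) = \bigo{\delta^{\lceil \ell_i/2\rceil} n^{\lfloor \ell_i/2\rfloor}}$, a deterministic quantity. For the distinguished cycle $C_1$ I would split into two cases. If $\ell_1$ is even, then $\ell_1 = 2k_1$ with $k_1 \ge 2$, and the number of choices is at most $\#C_{2k_1}^*(G^\phi)$, whose expectation over the random ordering $\phi$ is $\bigo{\delta^{k_1+1} n^{k_1-1}}$ by Corollary~\ref{col:numCycle}. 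If $\ell_1$ is odd, I would first observe that \emph{every} odd cycle automatically contains three consecutive vertices of monotonic index: traversing a cycle, the absence of such a triple would force the edges to alternate between index-increasing and index-decreasing steps, which is impossible around a cycle of odd length. Hence in the odd case the monotonic-triple condition on $C_1$ is vacuous, and the number of choices is at most the deterministic $\#C_{\ell_1}(G) = \bigo{\delta^{(\ell_1+1)/2} n^{(\ell_1-1)/2}}$. Taking expectations and factoring out the deterministic terms, the expected number of valid edge sets of a given shape is $\bigo{\delta^A n^B}$ for the corresponding exponents $A, B$.

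It then remains to check that $\delta^A n^B = \bigo{\delta^{p+1} n^{p-1}}$ for every shape. Each bound used above has exponent sum equal to the length of the corresponding cycle (for the starred one, $(k_1+1)+(k_1-1)=2k_1$; for a generic one, $\lceil \ell/2\rceil + \lfloor \ell/2\rfloor = \ell$), so $A + B = \sum_i \ell_i = 2p$. For the $n$-exponent, using $\lfloor \ell_i/2\rfloor \le \ell_i/2$ together with the fact that $C_1$ contributes $k_1 - 1 = \tfrac{\ell_1}{2} - 1$ to $B$ in the even case and $\tfrac{\ell_1 - 1}{2}$ in the odd case, one gets $B \le p - 1$ in the even case and $B \le p - \tfrac12$, hence $B \le p - 1$ since $B$ is an integer, in the odd case. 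Consequently $A = 2p - B \ge p + 1$, and writing $B = p - 1 - a$ with $a \ge 0$ we obtain $\delta^A n^B = \delta^{p+1} n^{p-1} (\delta / n)^a \le \delta^{p+1} n^{p-1}$ because $\delta \le n$. Summing over the finitely many shapes gives $\mathbb{E}\bigl[\#\{\text{valid }\mathsf{E}\}\bigr] = \bigo{\delta^{p+1} n^{p-1}}$.

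The hard part is the exponent bookkeeping rather than any combinatorial subtlety: one must notice that it is enough to control the $n$-exponent, after which the $\delta$-exponent is pinned down by the identity $A + B = 2p$ and the inequality $\delta \le n$, and that the one-unit reduction in the $n$-exponent of the starred cycle granted by Corollary~\ref{col:numCycle} is exactly what pulls the total $n$-exponent down from $p$ to $p - 1$. A secondary point requiring care is the elementary observation that odd cycles always contain a monotonic triple, which lets an odd distinguished cycle be absorbed into the generic cycle count with no loss.
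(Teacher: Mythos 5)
Your proof is correct and follows essentially the same route as the paper's: partition $2p$ into the cycle lengths (finitely many ways), bound each shape's count by multiplying the per-cycle bounds from Corollaries~\ref{col:numCycles} and~\ref{col:numCycle}, and then check that the $n$-exponent always drops to $p-1$ either because the distinguished even starred cycle loses a full power of $n$ or because the presence of odd cycles (forced to come in pairs since the total is even) costs half a power each. Your added observation that an odd cycle always contains three consecutive monotonic vertices is a clean extra that lets you track the starred condition only when $C_1$ is even, whereas the paper avoids needing it by casing on "all cycles even'' versus "at least two cycles odd,'' but both arguments lead to the same exponent bookkeeping and the same bound.
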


\begin{proof}
    Consider a partition of \(2p\), denoted by \((p_1, \dots, p_c)\), where \(p_1 + \cdots + p_c = 2p\). The number of such partitions is a function of \(p\) and can be considered constant. We will demonstrate that the number of cycle sets \(C_1, \dots, C_c\) satisfying the conditions in the corollary statement, with \(|C_i| = p_i\), is at most \(\bigo{\delta^{p + 1} n^{p-1}}\). Therefore, the number of cycle sets satisfying the corollary statement is no more than \(\bigo{\delta^{p+1} n^{p-1}}\).
    
    To prove the bound, we will consider two cases: either all the cycles have even lengths, or at least two of them have odd lengths, given that the total number of edges is even.

    If all the cycles are of even length, then, for some $q > 0$ one of them is of length $2q$ and includes 3 consecutive vertices of monotonic index. By Corollary~\ref{col:numCycle}, there are $\bigo{\delta^{q+1} n^{q-1}}$ possibilities for this cycle. For the remaining cycles, Corollary~\ref{col:numCycles} tells us the number of admissible configurations is bounded by $\bigo{\delta^{p-q} n^{p-q}}$. In total, this gives a $\bigo{\delta^{p+1} n^{p-1}}$ bound.
If at least two cycles have odd lengths, say \(2q+1\) and \(2r+1\), then by Corollary~\ref{col:numCycles}, the number of possible configurations for these cycles can be bounded by \(\bigo{\delta^{q+1} n^{q}}\) for the first cycle and \(\bigo{\delta^{r+1} n^{r}}\) for the second cycle, and \(\bigo{\delta^{p-q-r-1} n^{p-q-r-1}}\) for the remaining cycles. Overall, this results in a bound of \(\bigo{\delta^{p+1} n^{p-1}}\).
\end{proof}

\section{Triangle Counting Algorithm}
\label{sec:triangle}

%By expressing the expected error of triangle counting methods in terms of $S_2^*(G)$ and $C_4^*(G)$, preprocessing the graph by reordering it according to the estimated degrees allows us to use theorem~\ref{thm:s2-bound} and \ref{thm:star-cycles} to reduce the error and provide a tighter bound.
We propose Algorithm \ref{algo:triangle-counting} to count the number of triangles based on the ordering and properties discussed in the previous section. First, we execute Algorithm \ref{algo:vertex-ordering} at Line 2. Next, at Line 3, we use the randomized response query to obtain an obfuscated graph. From Lines 4 to 8, we employ the Laplacian query with restricted sensitivity on \(\mathcal{H}_d\) (Definition \ref{def:restricted}) to estimate the number of triangles associated with User \(i\). Finally, at Line 9, we sum all the estimates and report the total as the estimated triangle count. We adopt the concept from \cite{imola2021locally} of distributing randomized response results to all nodes and having each node estimate its number of triangles. However, the other algorithmic ideas presented in this work are novel. In the following theorem, we demonstrate that our algorithm is differentially private.

\begin{algorithm}
    \caption{Our algorithm for estimating the number of triangles in degeneracy-bounded graphs}
    \label{algo:triangle-counting}
    \small
    \Fn{\TriangleCounting}{
        \small
        \KwIn{Graph $G = (V,E)$, privacy budget $\varepsilon = \varepsilon_0 + \varepsilon_1 + \varepsilon_2$, parameter $\zeta$}
        \KwOut{Estimation of the number of triangles in $G$}

        \textbf{[All Users and Server]} $\phi \leftarrow \GetOrdering(G, \varepsilon_0)$\ (Algorithm \ref{algo:vertex-ordering}); \\ 
        \textbf{[All Users and Server]} Inquire the unbiased randomized response query with privacy budget $\varepsilon_1$ to all users. Let \((\hat{a}_{j,k}^\phi)\) represent the results collected from this query. The server then distributes \((\hat{a}_{j,k}^\phi)\) to all users. \\
        
            \textbf{[User $i$]} $\hat{d}^{\phi}_i \leftarrow \tilde{d}^{\phi}_i + \frac{1}{\varepsilon_0} \ln(n / \zeta)$\;
            \textbf{[User $i$]} $a^{\phi}_i \leftarrow \mu_{\hat{d}_i^\phi} (a^{\phi}_i)$ (The function $\mu_d$ is defined before Definition \ref{def:restricted}.) \;
            \textbf{[User $i$]} $S_i \leftarrow \{(j,k) \mid a^{\phi}_{i,j} = a^{\phi}_{i,k} = 1, j < i < k\}$\;
            \textbf{[User $i$]} $\hat{t}_i \leftarrow \sum_{(j,k) \in S_i} \hat{a}^{\phi}_{j,k}$\;
            \textbf{[User $i$]} $\tilde{t}_i \leftarrow \hat{t}_i + 3 \cdot  \Lap(\frac{e^{\varepsilon_1} + 1}{e^{\varepsilon_1} - 1} \cdot \frac{\hat{d}^{\phi}_i}{\varepsilon_2})$\;
        \textbf{[User $i$]} Upload $\tilde{t}_i$ to the central server\;
        \textbf{[Server]} $\hat{f}_{\triangle}(G) \leftarrow \sum_{\nu_i \in V} \tilde{t}_i$\;
        \Return $\hat{f}_{\triangle}(G)$\;
    }
\end{algorithm}

%In algorithm~\ref{algo:triangle-counting}, the function $\GraphProjection(a, k)$ returns the adjacency list $a$ from which only the first $k$ edges in canonical order have been kept.

%Compare to state of the art \cite{imola2021locally} on triangle counting, algorithm~\ref{algo:triangle-counting} leverages the degree clipping techniques described in \cite{blocki2013differentially} as well as the vertex ordering described above.

%The parameter $\alpha$ in the algorithm needs to be carefully set. Indeed, if it is too large, it will make the estimated degrees used larger and thus the noise added via Laplacian mechanism as well. However if it is too small, there is a risk that the estimates could be smaller than the actual degree and thus that some clipping will take place, creating some bias. Lemma~\ref{lem:degree-error} can help set the parameter such that the clipping probability is smaller than a fixed threshold.

\begin{theorem}
    Algorithm~\ref{algo:triangle-counting} provides $(\varepsilon_0 + \varepsilon_1 + \varepsilon_2)$-edge local differential privacy. \label{thm:privacyTriangle}
\end{theorem}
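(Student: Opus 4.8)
The plan is to verify that the total privacy budget spent by any single user is $\varepsilon_0 + \varepsilon_1 + \varepsilon_2$, and then invoke Definition~\ref{dif:diff}. The key observation is that Algorithm~\ref{algo:triangle-counting} issues exactly three queries to each user: the Laplacian degree query inside \GetOrdering, the unbiased randomized response query on the adjacency vector, and the noisy-triangle-count query at Line~8. I would go through these one at a time and argue each is an $\varepsilon_j$-edge local differentially private query for the claimed budget.

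First I would handle the call to \GetOrdering at Line~2. By Line~2 of Algorithm~\ref{algo:vertex-ordering}, user $i$ releases $d_i + \Lap(1/\varepsilon_0)$; since the global sensitivity of the degree function is $1$ (adding or removing one incident edge changes $d_i$ by exactly $1$), Definition~\ref{laplacian} gives that this query is $\varepsilon_0$-edge local differentially private. The subsequent server-side computation of $\phi$ is post-processing of the released noisy degrees and consumes no additional budget. Second, the unbiased randomized response query at Line~3 is $\varepsilon_1$-edge locally differentially private by Definition~\ref{def:unbiased}, and the server redistributing the collected $(\hat a^\phi_{j,k})$ to all users is again post-processing. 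Third, for Lines 4--8, note that $\hat d^\phi_i$ is computed purely from the already-released $\tilde d^\phi_i$ (no new access to the private vector), so Line~4 is free; the projection $\mu_{\hat d^\phi_i}(a^\phi_i)$ at Line~5 ensures the projected vector lies in $\mathcal{H}_{\hat d^\phi_i}$. Then $\hat t_i = \sum_{(j,k)\in S_i}\hat a^\phi_{j,k}$ is, as a function of the (projected) private vector $a^\phi_i$ alone, a linear form whose restricted sensitivity on $\mathcal{H}_{\hat d^\phi_i}$ I would bound: flipping one bit of $a^\phi_i$ changes membership of at most $\hat d^\phi_i$ pairs in $S_i$, and each contributes a coefficient $\hat a^\phi_{j,k}$ of magnitude at most $\frac{e^{\varepsilon_1}+1}{e^{\varepsilon_1}-1}$, so $RS_{\hat t_i}(\mathcal{H}_{\hat d^\phi_i}) \le \frac{e^{\varepsilon_1}+1}{e^{\varepsilon_1}-1}\hat d^\phi_i$. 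By Definition~\ref{def:restricted}, adding $\Lap\!\big(3\cdot RS_{\hat t_i}(\mathcal{H}_{\hat d^\phi_i})/\varepsilon_2\big) = \Lap\!\big(3\cdot\frac{e^{\varepsilon_1}+1}{e^{\varepsilon_1}-1}\cdot\frac{\hat d^\phi_i}{\varepsilon_2}\big)$, exactly as in Line~8, makes this query $\varepsilon_2$-edge local differentially private. The final server summation at Line~9 is post-processing.

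Having shown user $i$ is subjected to three queries with budgets $\varepsilon_0$, $\varepsilon_1$, $\varepsilon_2$ whose sum is $\varepsilon_0+\varepsilon_1+\varepsilon_2$, Definition~\ref{dif:diff} immediately yields that Algorithm~\ref{algo:triangle-counting} is $(\varepsilon_0+\varepsilon_1+\varepsilon_2)$-edge local differentially private.

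The main obstacle I anticipate is the third query: one must be careful that the Laplace scale written in Line~8 genuinely matches $3\cdot RS_{\hat t_i}(\mathcal{H}_{\hat d^\phi_i})/\varepsilon_2$, which requires the restricted-sensitivity bound $RS_{\hat t_i}(\mathcal{H}_{\hat d^\phi_i}) \le \frac{e^{\varepsilon_1}+1}{e^{\varepsilon_1}-1}\hat d^\phi_i$ to be argued cleanly — in particular that the coefficients $\hat a^\phi_{j,k}$, though random, are treated as fixed constants from user $i$'s perspective (they were generated by other users and are independent of $a^\phi_i$), and that at most $\hat d^\phi_i$ summands are affected by a single-bit change after projection onto $\mathcal{H}_{\hat d^\phi_i}$. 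A secondary subtlety worth a sentence is that $\hat d^\phi_i$ depends on $\tilde d^\phi_i$, which does depend on $a^\phi_i$; but since $\tilde d^\phi_i$ was already released under its own $\varepsilon_0$ budget, conditioning on it (and hence on the data-dependent clamping threshold and the index set structure) is legitimate and does not incur extra privacy cost — this is the standard composition-with-post-processing argument underlying Definition~\ref{def:restricted}.
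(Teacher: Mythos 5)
Your proof follows the same approach as the paper: decompose the algorithm into the three per-user queries (Laplacian degree, unbiased randomized response, restricted-sensitivity Laplacian on $\mathcal{H}_{\hat d^\phi_i}$), bound the restricted sensitivity of $\hat t_i$ by $\frac{e^{\varepsilon_1}+1}{e^{\varepsilon_1}-1}\hat d^\phi_i$ via the ``one bit flip changes at most $\hat d^\phi_i$ pairs in $S_i$'' argument, and invoke Definition~\ref{dif:diff} to compose the budgets. Your added remarks about treating $\hat a^\phi_{j,k}$ as fixed from user $i$'s perspective and about conditioning on the already-released $\tilde d^\phi_i$ make the argument slightly more explicit than the paper's, but the substance is the same.
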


\begin{proof}
    For all possible executions of Algorithm~\ref{algo:triangle-counting}, it inquires three queries to all users. They are 1) the Laplacian query with privacy budget $\varepsilon_0$ inside the \GetOrdering function at Line 2, 2) the unbiased randomized response query with privacy budget $\varepsilon_1$ at Line 3, and 3) the Laplacian query with restricted sensitivity on \(\mathcal{H}_d\) at Lines 4-8. 
    
    To prove this theorem, we only need to show that the query at Lines 4-8 is $\varepsilon_2$-edge local differentially private. The query aims to publish $f(a_i^\phi) = \sum_{(j,k) \in S_i} \hat{a}^\phi_{j,k}$. By the unbiased randomized response in Line 3, we have that, for any $j,k,j',k'$, $|\hat{a}^\phi_{j,k} - \hat{a}^\phi_{j',k'}| \leq \frac{e^{\varepsilon_1} + 1}{e^{\varepsilon_1} - 1}$. 
    It can be shown that, for $a_i^\phi,a_i^{'\phi} \in \mathcal{H}_{\hat{d}_i^\phi}$ (defined in Definition \ref{def:restricted}) such that $d(a_i^\phi,a_i^{'\phi}) \leq \mathsf{d}$, the number of different elements in the set $S_i$ obtained from $a_i^\phi$,$a_i^{'\phi}$ at line 6 is at most $\mathsf{d} \cdot \hat{d}_i^\phi$. Therefore, the restricted sensitivity of the function $f$ (denoted by $RS_f\left(\mathcal{H}_{\hat{d}_i^\phi}\right)$in Definition \ref{def:restricted}) is not larger than  $\mathsf{d} \cdot \hat{d}_i^\phi \cdot \frac{e^{\varepsilon_1} + 1}{e^{\varepsilon_1} - 1} \cdot \frac{1}{\mathsf{d}} = \hat{d}_i^\phi \cdot \frac{e^{\varepsilon_1} + 1}{e^{\varepsilon_1} - 1}$. Hence, by Definition \ref{def:restricted}, the publication of $\tilde{t}_i$ at line 8 is $\varepsilon_2$-edge local differentially private.  
\end{proof}

We now discuss the accuracy of our estimation and its relation with the parameter $\zeta$ appearing at Line 4 the algorithm. We will see that $\zeta$ controls the trade-of between the bias and the accuracy. the smaller $\zeta$ is, the smaller the average noise gets, but the larger the probability of bias and its expected magnitude is.

In the following lemma, we discuss that the projection $\mu_{\hat{d}_i}^{\phi}$ applied at Line 5 changes the adjacency vector $a_i^\phi$ only with small probability.

\begin{lemma}
    \label{lem:degree-error}
For any $\zeta > 0$, with probability at least \(1 - \zeta\),  \(|\tilde{d}_i - d_i| < (\ln{\frac{n}{\zeta}})/{\varepsilon_0}\) for all \(i\).\end{lemma}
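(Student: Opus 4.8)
The plan is to establish this as a straightforward union bound over the $n$ users, combined with the tail behaviour of the Laplace distribution. Recall that $\tilde d_i = d_i + \mathrm{Lap}(1/\varepsilon_0)$, so $\tilde d_i - d_i$ is distributed as $\tau_i \sim \mathrm{Lap}(1/\varepsilon_0)$. For a single $i$, I would use the standard fact that if $\tau \sim \mathrm{Lap}(b)$ then $\mathbb{P}[|\tau| \geq t] = e^{-t/b}$. Setting $b = 1/\varepsilon_0$ and $t = (\ln(n/\zeta))/\varepsilon_0$ gives $\mathbb{P}[|\tau_i| \geq (\ln(n/\zeta))/\varepsilon_0] = e^{-\ln(n/\zeta)} = \zeta/n$.

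The second step is the union bound: since the noise terms $\tau_1, \dots, \tau_n$ are independent (one fresh Laplace draw per user in Algorithm~\ref{algo:vertex-ordering}), but independence is not even needed here, I would write
\begin{equation*}
\mathbb{P}\!\left[\exists i : |\tilde d_i - d_i| \geq \tfrac{1}{\varepsilon_0}\ln\tfrac{n}{\zeta}\right] \leq \sum_{i=1}^n \mathbb{P}\!\left[|\tau_i| \geq \tfrac{1}{\varepsilon_0}\ln\tfrac{n}{\zeta}\right] = n \cdot \frac{\zeta}{n} = \zeta.
\end{equation*}
Taking complements yields that with probability at least $1-\zeta$, $|\tilde d_i - d_i| < (\ln(n/\zeta))/\varepsilon_0$ for all $i$ simultaneously, which is exactly the claim. (The strict inequality versus non-strict is immaterial since the Laplace distribution is continuous, but one can phrase the single-$i$ bound with a strict inequality to be safe.)

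There is essentially no main obstacle here — the only thing to be careful about is bookkeeping the constant: we want the per-user failure probability to be exactly $\zeta/n$ so that the union bound over $n$ users returns $\zeta$, which is what forces the choice $t = (\ln(n/\zeta))/\varepsilon_0$. It is worth noting for the reader why this lemma matters downstream: in Line~5 of Algorithm~\ref{algo:triangle-counting} the projection $\mu_{\hat d_i^\phi}$ truncates the adjacency vector to its first $\hat d_i^\phi = \tilde d_i^\phi + (\ln(n/\zeta))/\varepsilon_0$ neighbours, so on the event of the lemma we have $\hat d_i^\phi \geq d_i$ for every $i$, meaning no edge is ever discarded and $a_i^\phi$ is unchanged by the projection; the estimator is therefore unbiased except on the probability-$\zeta$ bad event, which is the trade-off governed by $\zeta$ alluded to just before the lemma.
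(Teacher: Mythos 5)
Your proof is correct and follows essentially the same route as the paper: a Laplace tail bound for each user giving per-user failure probability $\zeta/n$, followed by a union bound over the $n$ users. (Incidentally, your version fixes what appears to be a typo in the paper's own proof, where the threshold is written as $\varepsilon_0 \ln(n/\zeta)$ rather than $(\ln(n/\zeta))/\varepsilon_0$; your choice of $t$ is the correct one matching the lemma statement.)
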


\begin{proof}
    Using the cumulative distribution function of the Laplacian random variable, we have 
        $\prob{|\tilde{d}_i - d_i| \geq \varepsilon_0 \ln{\frac{n}{\zeta}}} \leq \frac{\zeta}{n}$.
    Thus, by taking this inequality for all $i \in [1,n]$, and using the union bound, we obtain
        $\prob{\exists i \in [1,n], |\tilde{d}_i - d_i| \geq \varepsilon_0 \ln{\frac{n}{\zeta}}} \leq \zeta$. \qedhere
\end{proof}

We show that our estimation has no bias with high probability in the subsequent theorem.

\begin{theorem}
    %Given that $\hat{d}^{\phi}_i$ is smaller than $d^{\phi}_i$ for all nodes in the graph, algorithm~\ref{algo:triangle-counting} provides an unbiased estimate of the number of triangles in the graph, ie
    With probability at least $1 - \zeta$, algorithm~\ref{algo:triangle-counting} provides an unbiased estimate of the number of triangles in the graph, i.e. 
        $\E{\hat{f}_{\triangle}(G)} = \# C_3(G).$ \label{thm:biasTriangles}
\end{theorem}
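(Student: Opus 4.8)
The plan is to condition on the high-probability event $\mathcal{E}$ from Lemma~\ref{lem:degree-error} that $|\tilde{d}_i - d_i| < (\ln(n/\zeta))/\varepsilon_0$ for all $i$, which occurs with probability at least $1-\zeta$. On this event, the inflated estimate $\hat{d}_i^\phi = \tilde{d}_i^\phi + \frac{1}{\varepsilon_0}\ln(n/\zeta)$ satisfies $\hat{d}_i^\phi \geq d_i^\phi$, so the projection $\mu_{\hat{d}_i^\phi}$ applied at Line~5 does not remove any edge from $a_i^\phi$; that is, the projected adjacency vector equals the true one, and hence the set $S_i$ computed at Line~6 is exactly $S_i = \{(j,k) : a_{i,j}^\phi = a_{i,k}^\phi = 1,\ j < i < k\}$ — the genuine set of ``wedges'' centered at $\eta_i$ whose endpoints straddle $i$ in the ordering. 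The randomness in this theorem is therefore taken only over the randomized response and the Laplacian noise of Lines~4--8, with the ordering $\phi$ (equivalently, the event $\mathcal{E}$) fixed.

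Next I would compute the conditional expectation. By linearity, $\E{\hat{f}_\triangle(G)} = \sum_i \E{\tilde{t}_i} = \sum_i \E{\hat{t}_i}$, since the Laplacian noise added at Line~7 has mean zero. Then $\E{\hat{t}_i} = \sum_{(j,k)\in S_i} \E{\hat{a}_{j,k}^\phi} = \sum_{(j,k)\in S_i} a_{j,k}^\phi$, using the unbiasedness of the estimator $\hat{a}_{j,k}^\phi$ from Definition~\ref{def:unbiased} and the fact that $S_i$ is determined by $a_i^\phi$ alone, independently of the randomized response outputs on the other coordinates. Thus
\begin{equation*}
\E{\hat{f}_\triangle(G)} = \sum_{i} \sum_{\substack{j < i < k \\ a_{i,j}^\phi = a_{i,k}^\phi = 1}} a_{j,k}^\phi = \sum_{\substack{j < i < k}} a_{i,j}^\phi a_{i,k}^\phi a_{j,k}^\phi.
\end{equation*}
The final sum counts each triangle of $G^\phi$ exactly once: a triangle on vertices with indices $p < q < r$ contributes the single term with $i = q$, $j = p$, $k = r$ (the middle-indexed vertex is the unique one that sees both others straddling it). Hence the sum equals $\#C_3(G^\phi) = \#C_3(G)$, the last equality because $G$ and $G^\phi$ are isomorphic.

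The only subtle point — and the step I would be most careful about — is the ``straddling'' bookkeeping: one must check that on the event $\mathcal{E}$ each triangle is counted exactly once and no triangle is missed, which hinges precisely on $\hat{d}_i^\phi \geq d_i^\phi$ so that projection is vacuous; without this, projecting could delete an edge of a triangle and undercount. I would also remark that all of the expectations above are genuinely conditional on $\mathcal{E}$ (equivalently, on the realized $\phi$), so the statement is ``with probability at least $1-\zeta$ over the degree-ordering noise, the remaining randomization yields an unbiased estimate,'' matching the phrasing of the theorem; quantifying the residual bias on the complementary event is deferred to the accuracy analysis.
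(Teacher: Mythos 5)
Your proof is correct and follows essentially the same route as the paper: condition on the Lemma~\ref{lem:degree-error} event that no clipping occurs, use unbiasedness of the randomized-response estimator and zero-mean Laplace noise, and observe that the constraint $j<i<k$ makes each triangle be counted exactly once by its middle-indexed vertex. You are merely more explicit than the paper about the conditioning on $\phi$ and the independence of $S_i$ from the randomized-response coordinates being summed, which is a welcome bit of care but not a different argument.
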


\begin{proof}
As discussed in Definition \ref{def:unbiased}, we have that $\mathbb{E}(\hat{a}_{j,k}^\phi) = a_{j,k}^\phi$. Using Lemma~\ref{lem:degree-error}, with probability at least $1-\zeta$, $\hat{d}^{\phi}_i$ is larger than $d^{\phi}_i$ for all $i \in [1,n]$, and the function $\mu_{\hat{d}_i^\phi}$ has no effect. Consequently, $S_i$ precisely represents the set of forks centered on node $\nu_i$, encompassing all possible triangles. Therefore, $\hat{t}_i$ is an unbiased estimate of the number of triangles $(\nu_i, \nu_j, \nu_k)$ such that $j < i < k$. Given that Laplace noise is centered and triangles can be decomposed accordingly, $\hat{f}_{\triangle}(G)$ is an unbiased estimation of $f_{\triangle}(G)$.
\end{proof}

Corollary~\ref{col:bias-triangle} ensures that even in the unlikely event of some clipping occurring, the resulting bias would still represent only a small fraction of the actual count.

\begin{corollary}
    \label{col:bias-triangle}
    The expected value of the bias of Algorithm~\ref{algo:triangle-counting} is bounded by \(\bigo{\frac{\zeta}{\varepsilon_0 n} \# C_3}\).
\end{corollary}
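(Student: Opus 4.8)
The plan is to bound the expected bias by conditioning on the bad event $B$ that some clipping occurs, i.e. that $\hat d_i^\phi < d_i^\phi$ for at least one $i$, and showing that the contribution of this event to the expectation is small. First I would observe that when $B$ does not occur, Theorem~\ref{thm:biasTriangles} tells us the estimator is exactly unbiased, so the bias is entirely attributable to $B$. By Lemma~\ref{lem:degree-error}, $\prob{B} \leq \zeta$; in fact, I would want a slightly sharper handle, namely that $\prob{\hat d_i^\phi < d_i^\phi} \le \zeta/n$ for each fixed $i$ via the Laplacian tail bound and the definition $\hat d_i^\phi = \tilde d_i^\phi + \frac{1}{\varepsilon_0}\ln(n/\zeta)$, since $\hat d_i^\phi < d_i^\phi$ is equivalent to $\tau_i < -\frac{1}{\varepsilon_0}\ln(n/\zeta)$ where $\tau_i \sim \Lap(1/\varepsilon_0)$.

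Next I would quantify the magnitude of the bias contributed by user $i$ when clipping does occur at $i$. If $\hat d_i^\phi$ falls short of $d_i^\phi$ by some amount $x > 0$, then $\mu_{\hat d_i^\phi}$ deletes roughly $x$ edges from $a_i^\phi$, which removes at most $O(x \cdot d_i^\phi)$ pairs $(j,k)$ from $S_i$ — hence the expected contribution to $|\E{\hat f_\triangle(G)} - \#C_3(G)|$ from user $i$ is at most $O\big(d_i^\phi \cdot \E{|\tau_i|\,\mathbbm{1}_{\tau_i < -\frac{1}{\varepsilon_0}\ln(n/\zeta)}}\big)$ (using that each removed pair corresponds to at most one potential triangle and $\E{\hat a_{j,k}^\phi} \in \{0,1\}$). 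Computing the truncated-tail expectation of the exponential/Laplacian variable: $\E{|\tau_i|\,\mathbbm{1}_{|\tau_i| > t}}$ for $t = \frac{1}{\varepsilon_0}\ln(n/\zeta)$ evaluates to $\big(t + \frac{1}{\varepsilon_0}\big)e^{-\varepsilon_0 t} \cdot \tfrac12 = \tfrac12\big(\tfrac{1}{\varepsilon_0}\ln\tfrac n\zeta + \tfrac1{\varepsilon_0}\big)\tfrac{\zeta}{n} = O\big(\tfrac{\zeta}{\varepsilon_0 n}\ln\tfrac n\zeta\big)$, or more crudely $O(\tfrac{\zeta}{\varepsilon_0 n})$ absorbing the logarithm; multiplying by $d_i^\phi$ and summing over $i$ gives $O\big(\tfrac{\zeta}{\varepsilon_0 n}\sum_i d_i^\phi\big) = O\big(\tfrac{\zeta}{\varepsilon_0 n} \cdot m\big)$.

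Finally I would combine: since $m \le \delta n = O(n)$ for degeneracy-bounded graphs — or keeping $m$ explicit, $\sum_i d_i = 2m$ — the total expected bias is $O\big(\tfrac{\zeta}{\varepsilon_0 n} \cdot n\big) = O\big(\tfrac{\zeta}{\varepsilon_0}\big)$... here I must be careful to match the claimed bound $\bigo{\frac{\zeta}{\varepsilon_0 n}\#C_3}$, so rather than bounding the number of \emph{destroyed pairs} I should bound the number of destroyed \emph{triangles}: each clipped triangle at $\nu_i$ is a genuine triangle of $G$, and the expected number destroyed is at most $\#C_3$ times the per-edge clip probability, giving the $\#C_3$ factor directly. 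The main obstacle I anticipate is precisely this bookkeeping — making the dependence come out as $\frac{\zeta}{\varepsilon_0 n}\#C_3$ rather than a looser bound like $\frac{\zeta}{\varepsilon_0}m$ — which requires arguing that the deleted triangles are a $\frac{1}{n}$-fraction (in expectation) of all triangles incident to a uniformly affected vertex, using that clipping at $\nu_i$ removes only the last few neighbors in the fixed order and that the events $\{$clip at $i\}$ have probability $\le \zeta/n$ each while triangles are distributed across the $n$ vertices; the rest is a routine truncated Laplacian-tail computation.
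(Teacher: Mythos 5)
Your final argument is correct, but its bookkeeping differs from the paper's in an instructive way. The paper works at the level of \emph{clipped edges}: it computes the expected number of edges clipped at user $i$ exactly, via $\int_{-\infty}^{-\beta}\frac{\varepsilon_0}{2}e^{-\varepsilon_0|x|}(-x-\beta)\,dx=\frac{\zeta}{2\varepsilon_0 n}$ with $\beta=\ln(n/\zeta)/\varepsilon_0$, multiplies by $t_i^{\max}$ (the largest number of triangles any single edge at $i$ supports), and closes with $\sum_i t_i^{\max}\le\sum_{i,j}t_i^{(j)}\le 2\,\#C_3$. You instead work at the level of \emph{clip events and whole triangles}: each triangle is counted by exactly one user (its middle-rank vertex), that user clips at all with probability $\prob{\tau_i<-\beta}\le\zeta/(2n)$, so linearity of expectation gives a bias of at most $O\!\left(\frac{\zeta}{n}\#C_3\right)$. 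Both accountings are valid; yours is arguably simpler, but note two caveats. First, your bound carries no $1/\varepsilon_0$, so it matches the stated $\bigo{\frac{\zeta}{\varepsilon_0 n}\#C_3}$ only under the (standard) convention $\varepsilon_0=O(1)$; the paper's route produces the $1/\varepsilon_0$ factor intrinsically, because it integrates the expected \emph{overshoot} $(-\tau_i-\beta)^+$, whose mean is $\frac{\zeta}{2\varepsilon_0 n}$. Second, in your abandoned pairs-based computation, replacing the overshoot by $|\tau_i|$ and then ``absorbing the logarithm'' is not legitimate — $\E{|\tau_i|\mathbbm{1}_{\tau_i<-\beta}}=\Theta\!\left(\frac{\zeta}{\varepsilon_0 n}\ln\frac{n}{\zeta}\right)$ is genuinely larger than $\frac{\zeta}{\varepsilon_0 n}$; the clean fix is precisely the paper's integral of $(-x-\beta)$ rather than $|x|$. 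Since you discard that route in favor of the triangle-level count, this does not affect your conclusion, but the lossless tail computation is worth knowing.
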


\begin{proof}
When the corrected estimated degree \(\hat{d}^{\phi}_i\) is smaller than the actual degree \(d_i\), \(d_i - \hat{d}^{\phi}_i\) edges are excluded. This exclusion introduces a bias because the potential triangles involving these excluded edges are not counted.
    For each user \(i\) and their neighbor \(j\), let \(t_i^{(j)}\) denote the number of triangles counted by user \(i\) that involve the edge \((\nu_i, \nu_j)\). We also define \(t_i^{\text{max}} = \max_j t_i^{(j)}\). Then, the maximum bias resulting from a single clipped edge can be bounded by \(t_i^{\text{max}}\).

   The expected number of clipped edges for user \(i\) is determined by evaluating the following integral, where \(\beta = \frac{\ln(n/\zeta)}{\varepsilon_0}\) serves as the correction term for the degree:
    \[
        \int_{-\infty}^{-\beta} \frac{\varepsilon_0}{2} e^{-\varepsilon_0 |x|} (-x - \beta) \,dx = 
        - \frac{\varepsilon_0}{2}\left[\frac{x - \beta}{\varepsilon_0} e^{-\varepsilon_0 x} + \frac{1}{\varepsilon_0^2} e^{-\varepsilon_0 x}\right]_{\beta}^{\infty} = 
        \frac{1}{2 \varepsilon_0} e^{-\varepsilon_0 \beta} = 
        \frac{\zeta}{2 \varepsilon_0 n}.
    \]

We obtain the final result by combining these elements and observing that \(\sum_i t_i^{\text{max}} \leq \sum_{i,j} t_i^{(j)} \leq 2 f_{\triangle}(G)\).
\end{proof}

The accuracy of our estimation is demonstrated in the subsequent theorem.

\begin{theorem}
    When $\zeta \leq \varepsilon_0$, the squared expected $\ell_2$-error of algorithm~\ref{algo:triangle-counting} is bounded by
    \[
        \bigo{
            \frac{\delta^3 n}{\varepsilon_1^2} + 
            \frac{\delta d_{max} n }{\varepsilon_1^2 \varepsilon_2^2} + 
            \frac{n \ln^2(n / \zeta)}{\varepsilon_0^2 \varepsilon_1^2 \varepsilon_2^2}
        }.
    \]
\end{theorem}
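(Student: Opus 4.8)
The plan is to decompose the total squared $\ell_2$-error into the (small) squared bias plus the variance of $\hat{f}_\triangle(G)$, and then bound the variance by splitting the estimator $\hat{f}_\triangle(G) = \sum_i \tilde{t}_i = \sum_i \hat{t}_i + \sum_i 3\Lap(\cdot)$ into the randomized-response part $\sum_i \hat{t}_i = \sum_i \sum_{(j,k)\in S_i} \hat{a}^\phi_{j,k}$ and the independent Laplace-noise part. The Laplace contribution is the easy piece: the noise added at Line 8 to user $i$ has scale $\frac{e^{\varepsilon_1}+1}{e^{\varepsilon_1}-1}\cdot\frac{\hat d^\phi_i}{\varepsilon_2}$, contributing variance $O\!\big(\frac{(\hat d^\phi_i)^2}{\varepsilon_1^2\varepsilon_2^2}\big)$; since $\hat d^\phi_i \le d_i + O(\ln(n/\zeta)/\varepsilon_0)$ and summing $\sum_i d_i^2$ is controlled — using that the low-degree ordering effectively caps the clipped degree and $\sum_i d_i \le 2m \le 2\delta n$ together with $d_i \le d_{\max}$, giving $\sum_i d_i^2 \le d_{\max}\sum_i d_i = O(\delta d_{\max} n)$ — this yields the $\frac{\delta d_{\max} n}{\varepsilon_1^2\varepsilon_2^2}$ term plus the $\frac{n\ln^2(n/\zeta)}{\varepsilon_0^2\varepsilon_1^2\varepsilon_2^2}$ term from the additive correction.

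The main work is the variance of $\sum_i \hat t_i$. Conditioning on the ordering $\phi$ (and the clipping event, which by Theorem~\ref{thm:biasTriangles} fails with probability $\le\zeta$, handled separately via Corollary~\ref{col:bias-triangle}), write $\var{\sum_i\hat t_i} = \sum \cov{\hat a^\phi_{j,k}}{\hat a^\phi_{j',k'}}$ over all pairs $((i,j,k),(i',j',k'))$ with $(j,k)\in S_i$, $(j',k')\in S_{i'}$. By Definition~\ref{def:unbiased}, $\hat a^\phi_{j,k}$ and $\hat a^\phi_{j',k'}$ are independent unless $\{j,k\}=\{j',k'\}$, in which case the covariance is $\var{\hat a^\phi_{j,k}} = \frac{e^{\varepsilon_1}}{(e^{\varepsilon_1}-1)^2} = O(1/\varepsilon_1^2)$. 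So the variance is $O(1/\varepsilon_1^2)$ times the number of "conflicting" tuples $(\nu_i,\nu_j,\nu_{i'},\nu_k)$ (or with $j,k$ swapped) where $j<i<k$ and $j<i'<k$ and both $\{\nu_i,\nu_j\},\{\nu_i,\nu_k\},\{\nu_{i'},\nu_j\},\{\nu_{i'},\nu_k\}\in E^\phi$ — that is, $4$-cycles $\nu_j\nu_i\nu_k\nu_{i'}$ in $G^\phi$ in which one of the two "apex" vertices $\nu_i$ (and also $\nu_{i'}$) is sandwiched in rank between $\nu_j$ and $\nu_k$. This is exactly (a slight variant of) a $C_4^*$-type structure: a $4$-cycle containing three consecutive vertices of monotonic rank (namely $j<i<k$). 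By Corollary~\ref{col:numCycle} with $k=2$, $\E{\#C_4^*(G^\phi)} = O(\delta^3 n)$, which produces the leading $\frac{\delta^3 n}{\varepsilon_1^2}$ term. I would also separately account for the diagonal terms ($i=i'$, same pair) and the degenerate cases where $\nu_j=\nu_{j'}$ but $\nu_k\ne\nu_{k'}$ reduce to low-$2$-stars or paths, absorbed into the same bound via Theorem~\ref{thm:s2-bound} and Lemma~\ref{lem:paths}.

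Finally I would reassemble: total squared error $\le 2(\text{bias})^2 + 2\var{\hat f_\triangle}$; the bias term, bounded by $O(\frac{\zeta}{\varepsilon_0 n}\#C_3) = O(\frac{\zeta}{\varepsilon_0 n} n^3)$ by Corollary~\ref{col:bias-triangle}, is dominated (using $\zeta\le\varepsilon_0$ and that it only occurs on an event of probability $\zeta$, so it contributes $\zeta\cdot O(n^4)$ at worst — here I would check the constants carefully, as this is where the hypothesis $\zeta\le\varepsilon_0$ and the precise form of the clipping analysis matter) by the variance terms, and the variance decomposes as above into the three stated summands. The step I expect to be the main obstacle is the careful combinatorial identification: showing that the number of conflicting variable pairs is genuinely bounded by $\E{\#C_4^*(G^\phi)}$ rather than by the full $4$-cycle count $\E{\#C_4(G^\phi)} = O(\delta^2 n^2)$ — one must verify that the constraint $j<i<k$ (imposed by the definition of $S_i$ at Line 6) forces the monotonic-rank structure on the cycle, and correctly handle the unordered-pair symmetry $\{j,k\}=\{j',k'\}$ so that both matching orientations still land inside the $C_4^*$ count. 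A secondary subtlety is that the clipping $\mu_{\hat d^\phi_i}$ can only remove edges, so it cannot increase the number of conflicting pairs, meaning the bound obtained on the unclipped graph remains valid after clipping.
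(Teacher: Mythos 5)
Your overall structure matches the paper's proof: squared error split into squared bias plus variance; the Laplacian-noise contribution summed to give the $\delta d_{\max} n/(\varepsilon_1^2\varepsilon_2^2)$ and $n\ln^2(n/\zeta)/(\varepsilon_0^2\varepsilon_1^2\varepsilon_2^2)$ terms; and the randomized-response variance bounded by $\bigo{1/\varepsilon_1^2}$ times the number of dependent pairs, which you correctly identify as diagonal terms counted by $S_2^*$ (Theorem~\ref{thm:s2-bound}) and off-diagonal terms counted by $C_4^*$ — the constraint $j<i<k$ from $S_i$ does force the monotonic-rank triple, so Corollary~\ref{col:numCycle} with $k=2$ gives the $\delta^3 n/\varepsilon_1^2$ term exactly as in the paper.

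The genuine gap is in your bias step. You bound the bias via Corollary~\ref{col:bias-triangle} as $\bigo{\tfrac{\zeta}{\varepsilon_0 n}\#C_3}$ but then plug in the trivial $\#C_3 = \bigo{n^3}$, giving a bias of order $n^2$ and a squared bias of order $n^4$, which is \emph{not} dominated by the stated variance terms (for degeneracy-bounded graphs they are only $\bigo{n}$ up to $d_{\max}$ factors); your fallback reasoning that the event ``only occurs with probability $\zeta$'' double-counts the probability, since the bound in Corollary~\ref{col:bias-triangle} is already an expectation over the clipping event. The paper closes this by invoking the degeneracy bound on triangles, $\#C_3 = \bigo{\delta^2 n}$ (Corollary~\ref{col:numCycles}), so that with $\zeta\le\varepsilon_0$ the bias is $\bigo{\delta^2}$ and its square $\bigo{\delta^4}\le\bigo{\delta^3 n}$ is absorbed by the leading variance term; without this substitution the claimed domination fails. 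A secondary, cosmetic imprecision: $\hat d^\phi_i$ contains Laplace noise, so the inequality $\hat d^\phi_i \le d_i + \bigo{\ln(n/\zeta)/\varepsilon_0}$ should be stated in expectation (i.e.\ bound $\E{(\hat d^\phi_i)^2} = \bigo{d_i^2 + \ln^2(n/\zeta)/\varepsilon_0^2}$), which is what the final expression actually uses.
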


\begin{proof}
    The squared $\ell_2$-error can be decomposed into the square of the bias plus the variance. We have established in Corollary~\ref{col:bias-triangle} that the bias of the algorithm is bounded by $\bigo{\frac{\zeta}{\varepsilon_0 n} \#C_3} = \bigo{\delta^2}$.
    We will now focus on bounding the variance of the algorithm.
    %We have previously established that the algorithm provides an unbiased estimate with a probability of at least $1 - \zeta$. In this case, the squared $\ell_2$-error is equivalent to the variance of the estimate. Therefore, we will now focus on bounding the variance of the algorithm.
    This variance arises from two distinct sources: the randomized response query and the Laplacian query with restrictive sensitivity.
    
    Regarding the noise introduced by the Laplacian query with restrictive sensitivity, its variance is simply the sum of the variances of each term, which is
    \begin{eqnarray*}
     9 \left(\frac{e^{\varepsilon_1} + 1}{e^{\varepsilon_1} - 1}\right)^2 \sum_{\nu_i \in V} \frac{\hat{d}_i^2}{\varepsilon_2^2} = 
    \bigo{\frac{\varepsilon_0^2 \delta d_{max} n + n \ln^2(n / \zeta)}{\varepsilon_2^2 \varepsilon_1^2 \varepsilon_0^2}}.   
    \end{eqnarray*}

    Next, we consider the variance from the randomized response query. In the following equations, we use the notation $\mathcal{N}^*_{j,k}$ to denote the set of neighbors $\nu_i$ of both $\nu_j$ and $\nu_k$ such that $j < i < k$. Note that by including one node from $\mathcal{N}^*_{j,k}$ along with $\nu_j$ and $\nu_k$, a triple in $S^{*}_2$ is formed. Similarly, including two nodes from $\mathcal{N}^*_{j,k}$ along with $\nu_j$ and $\nu_k$ results in a quadruplet in $\#C^*_4$. We also notice from Definition \ref{def:unbiased} that, for $(j,k) \neq (j',k')$, $\hat{a}_{j,k}^\phi$ is independent to $\hat{a}_{j',k'}^\phi$ and $\cov{\hat{a}_{j,k}^\phi}{\hat{a}_{j',k'}^\phi} = 0$. Hence, 
    \begin{align*}
        \var{\sum_{\nu_i \in V^{\phi}} \sum_{(j,k) \in S_i} \hat{a}^{\phi}_{j,k}} & =  \sum_{(\nu_j, \nu_k) \in (V^{\phi})^2} 
            \left[
                \sum_{\nu_i \in \mathcal{N}^*_{j,k}} \var{\hat{a}^{\phi}_{j,k}}
                + \sum_{\nu_i, \nu_{i'} \in \mathcal{N}^*_{j,k}} \cov{\hat{a}^{\phi}_{j,k}}{\hat{a}^{\phi}_{j,k}}
            \right] \\
        & = \bigo{(S^*_2 + \# C^*_4)/\varepsilon^2_1} 
    \end{align*}
By Theorem \ref{thm:s2-bound} and Collorary \ref{col:numCycle}, 
        $\var{\hat{f}(G)} = 
        \bigo{
            \frac{\delta^3 n}{\varepsilon_1^2} + 
            \frac{\delta d_{max} n }{\varepsilon_1^2 \varepsilon_2^2} + 
            \frac{n \ln^2(n / \zeta)}{\varepsilon_0^2 \varepsilon_1^2 \varepsilon_2^2}
        }$.
        \qedhere
 \end{proof}

In the previous work \cite{eden2023triangle}, the number of terms in the variance calculation is bounded by the number of cycles of length four, which is $\bigo{{d^3_{\textrm{max}}} n}$. We reduce that number to $\#C_4^* = \bigo{\delta^3n}$ using the \GetOrdering function in line 2 and by including only pairs $(j,k)$ such that $j < i < k$. It is known that $\delta \leq d_{\max}$ and, in many practical graphs, the degeneracy is much smaller than the maximum degree.

\section{Odd Length Cycle Counting}
\label{sec:cycles}

In this section, we will describe how to utilize low-degree ordering to accurately count odd-length cycles in graphs with bounded degeneracy. Some concepts are extended from the previous section. As shown in Algorithm \ref{algo:cycle-counting}, the algorithm for estimating the number of odd-length cycles is similar to Algorithm \ref{algo:triangle-counting}, except that the restricted sensitivity at Line 9 is larger, and at Line 8, we replace $\hat{a}^\phi_{i,j}$ with an estimate for the number of paths under specific constraints. We discuss the privacy of the algorithm in the subsequent theorem. The main challenge of the proof is to demonstrate that the Laplacian query under restricted sensitivity at Lines 5-9 is $\varepsilon_2$-differentially private. 

\begin{algorithm}[h]
    \caption{  Our algorithm for estimating the number of odd-length cycles in degeneracy-bounded graphs}
    \label{algo:cycle-counting}
    \small
    \Fn{\OddCycleCounting}{
        \KwIn{Graph $G = (V,E)$, privacy budget $\varepsilon = \varepsilon_0 + \varepsilon_1 + \varepsilon_2$, $k$ an odd number not smaller than 5, parameter $\zeta$}
        \KwOut{Estimation of the number of $k$-cycles in $G$}

        \textbf{[All Users and Server]} $\phi \leftarrow \GetOrdering(G, \varepsilon_0)$\ (Algorithm \ref{algo:vertex-ordering}); \\ 
        \textbf{[All Users and Server]} Inquire the unbiased randomized response query with privacy budget $\varepsilon_1$ to all users. \\
        \textbf{[Server]} Let \((\hat{a}_{i,j}^\phi)\) represent the results collected from this query. The server then distributes \((\hat{a}_{i,j}^\phi)\) to all users. \\
        \textbf{[Server]} Calculate \begin{equation*}
            \# \hat{P}_{k - 4} := \sum\limits_{(l_1, \dots, l_{k - 3}) \in V^{k - 3}} \prod\limits_{q \in [1, k-4]} \hat{a}^{\phi}_{l_q, l_{q+1}},
        \end{equation*} then send this information to all users\;
        
        \textbf{[User $i$]}    $\hat{d}^{\phi}_i \leftarrow \tilde{d}^{\phi}_i + \frac{1}{\varepsilon_0} \ln(n / \zeta)$\;
            
        \textbf{[User $i$]} $a^{\phi}_i \leftarrow \mu_{\hat{d}^{\phi}_i} (a^{\phi}_i)$\;
            
            \textbf{[User $i$]} $S_i \leftarrow \{(j,\kappa) \mid a^{\phi}_{i,j} = a^{\phi}_{i,\kappa} = 1, j < i < \kappa\}$\;
            
            \textbf{[User $i$]} $\hat{c}_i \leftarrow \sum_{(j,\kappa) \in S_i} \# \hat{P}_{k-2}^{(i)}(j,\kappa)$ when
            \begin{equation*}
             \# \hat{P}_{k-2}^{(i)}(j,\kappa) = \sum\limits_{(l_1, \ldots, l_{k-1}) \in X_{k-2}^{(i)}(j,\kappa)} \prod\limits_{q \in [1, k-2]} \hat{a}^{\phi}_{l_q, l_{q+1}}   
            \end{equation*}
             and $X_{k-2}^{(i)}(j,\kappa)$ is a set of non-repeating combination of $k-1$ vertices in $G^{\phi}$ with endpoints $\nu_j$ and $\nu_\kappa$, such that, for any three consecutive nodes $(\nu_q, \nu_r, \nu_s)$ in the path with monotonic ranks, the node $\nu_i$ has a lower rank than $\nu_r$\;
            
            \textbf{[User $i$]} $\tilde{c}_i \leftarrow \hat{c}_i + \Lap\left(3 \cdot \left(\frac{e^{\varepsilon_1} + 1}{e^{\varepsilon_1} - 1}\right)^2 \cdot \hat{d}^{\phi}_i \cdot \#\hat{P}_{k-4} / \varepsilon_2\right)s$\;
            \textbf{[User $i$]} Upload $\tilde{c}_i$ to the central server\;
        \textbf{[Server]} $\hat{f}_k(G) \leftarrow \sum\limits_{\nu_i \in V} \tilde{c}_i$\;
        \Return $\hat{f}_k(G)$\;
    }
\end{algorithm}

\begin{theorem}
    Algorithm~\ref{algo:cycle-counting} provides $(\varepsilon_0 + \varepsilon_1 + \varepsilon_2)$-edge local differential privacy. \label{thm:privacyCycles}
\end{theorem}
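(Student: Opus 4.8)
The plan is to decompose the algorithm into the three queries it issues to every user and show that each is edge local differentially private with the claimed budget, so that Definition~\ref{dif:diff} yields the overall $(\varepsilon_0+\varepsilon_1+\varepsilon_2)$ guarantee. The first query is the Laplacian query inside \GetOrdering at Line~2, which is $\varepsilon_0$-edge LDP since the degree has global sensitivity $1$ (exactly as in Theorem~\ref{thm:privacyTriangle}). The second is the unbiased randomized response query at Line~2' (the randomized response call distributing $(\hat a^\phi_{i,j})$), which is $\varepsilon_1$-edge LDP by Definition~\ref{def:unbiased}. Note that the server-side computations at Lines~4 (computing $\#\hat P_{k-4}$) and Line~12 (summing $\tilde c_i$) are post-processing of already-published data and introduce no additional privacy loss, and that $\#\hat P_{k-4}$ is broadcast to all users but is itself a function only of the randomized-response output, so receiving it does not affect the sensitivity analysis of any individual user's subsequent query. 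Thus the entire burden is to show the query realized by Lines~5--10, which publishes $\tilde c_i$, is $\varepsilon_2$-edge LDP.

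For that query I would follow the restricted-sensitivity template of Definition~\ref{def:restricted}: the user applies the projection $\mu_{\hat d_i^\phi}$ at Line~6 to land in $\mathcal H_{\hat d_i^\phi}$, then adds Laplace noise scaled to $3\,RS_f(\mathcal H_{\hat d_i^\phi})/\varepsilon_2$, where $f(a_i^\phi)=\hat c_i=\sum_{(j,\kappa)\in S_i}\#\hat P_{k-2}^{(i)}(j,\kappa)$. So it suffices to bound $RS_f(\mathcal H_{\hat d_i^\phi})$ by $\bigl(\tfrac{e^{\varepsilon_1}+1}{e^{\varepsilon_1}-1}\bigr)^2\cdot\hat d_i^\phi\cdot\#\hat P_{k-4}$. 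The argument has two ingredients. First, for adjacency vectors $a_i^\phi,a_i'^\phi\in\mathcal H_{\hat d_i^\phi}$ at Hamming distance $\mathsf d$, the symmetric difference of the resulting sets $S_i$ has size at most $\mathsf d\cdot\hat d_i^\phi$: flipping one incident bit of $\nu_i$ changes membership of at most $\hat d_i^\phi$ pairs $(j,\kappa)$, since the other endpoint must be among the $\le \hat d_i^\phi$ remaining neighbors. Second, I bound the magnitude of each summand $\#\hat P_{k-2}^{(i)}(j,\kappa)$: each is a sum over paths of a product of $k-2$ factors $\hat a^\phi_{l_q,l_{q+1}}$, but two of these factors are the fixed edges $(\nu_j,\nu_i)$-type endpoints anchoring the path — more precisely, the path of length $k-2$ from $\nu_j$ to $\nu_\kappa$ has its first and last edges pinned once we track through $S_i$, leaving at most $\#\hat P_{k-4}$-many "free" configurations, so $|\#\hat P_{k-2}^{(i)}(j,\kappa)|\le \bigl(\tfrac{e^{\varepsilon_1}+1}{e^{\varepsilon_1}-1}\bigr)^2\#\hat P_{k-4}$, using $|\hat a^\phi_{j,k}-\hat a^\phi_{j',k'}|\le \tfrac{e^{\varepsilon_1}+1}{e^{\varepsilon_1}-1}$ as in Theorem~\ref{thm:privacyTriangle} together with the two pinned factors contributing the square. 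Multiplying the at-most-$\mathsf d\hat d_i^\phi$ changed terms by this per-term bound and dividing by $\mathsf d$ gives the stated restricted sensitivity, and Definition~\ref{def:restricted} then delivers $\varepsilon_2$-edge LDP for Lines~5--10.

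Combining, the three budgets sum to $\varepsilon_0+\varepsilon_1+\varepsilon_2$ as required by Definition~\ref{dif:diff}. The main obstacle I expect is the second ingredient of the sensitivity bound: carefully identifying which factors in $\#\hat P_{k-2}^{(i)}(j,\kappa)$ are "pinned" versus "free" so that the count of free path-configurations is genuinely $\#\hat P_{k-4}$ and not larger, and making sure the interaction with the monotonic-rank constraint in the definition of $X_{k-2}^{(i)}(j,\kappa)$ (which only restricts, hence cannot increase, the count) is handled cleanly — in particular verifying that the edge flipped at $\nu_i$ changes at most $\hat d_i^\phi$ summands and that each changes by at most $\bigl(\tfrac{e^{\varepsilon_1}+1}{e^{\varepsilon_1}-1}\bigr)^2\#\hat P_{k-4}$ simultaneously, rather than these two worst cases being attained on disjoint inputs.
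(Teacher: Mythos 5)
Your proposal follows essentially the same route as the paper: split the algorithm into the three user-facing queries, apply the standard $\varepsilon_0$- and $\varepsilon_1$-guarantees for the Laplacian and unbiased randomized-response queries, and bound the restricted sensitivity of the Lines~5--10 query by $\hat{d}_i^\phi\cdot\bigl(\tfrac{e^{\varepsilon_1}+1}{e^{\varepsilon_1}-1}\bigr)^2\cdot\#\hat{P}_{k-4}$ via the same two steps (a change of $\mathsf{d}$ bits of $a_i^\phi$ alters $S_i$ by at most $\mathsf{d}\hat{d}_i^\phi$ elements, and each altered summand $\#\hat{P}_{k-2}^{(i)}(j,\kappa)$ is bounded by pinning the two endpoint edges). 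The subtlety you flag at the end, namely whether the pinned-endpoint argument cleanly yields a per-summand bound of $\bigl(\tfrac{e^{\varepsilon_1}+1}{e^{\varepsilon_1}-1}\bigr)^2\#\hat{P}_{k-4}$ given that $\#\hat{P}_{k-4}$ is a signed sum of noisy products rather than a count of configurations, is a genuine one, but the paper's own proof treats this step just as tersely, so your attempt is faithful to the paper's argument.
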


\begin{proof}
We need to demonstrate that Lines 5-9 of the algorithm, involving the Laplacian query with restricted sensitivity on $\mathcal{H}_{\hat{d}_i^\phi}$, ensure $\varepsilon_2$-differential privacy. Following the arguments of Theorem \ref{thm:privacyTriangle}, we assert that altering $\mathsf{d}$ entries of $a_{i,j}^\phi$ changes the set $S_i$ by at most $\mathsf{d} \cdot \hat{d}_i^\phi$ elements. A single element change in $S_i$ can alter the value of $\hat{c}_i$ by 
\begin{equation*}
 \# \hat{P}_{k-2}^{(i)}(j,\kappa) = \sum\limits_{(l_1, \ldots, l_{k-1}) \in X_{k-2}^{(i)}(j,\kappa)} \prod\limits_{q \in [1, k-2]} \hat{a}^{\phi}_{l_q, l_{q+1}} \leq \left( \frac{e^\varepsilon + 1}{e^\varepsilon - 1} \right)^2 \#\hat{P}_{k - 4}.   
\end{equation*}
Therefore, the restricted sensitivity of $\tilde{c}_i$ is $\mathsf{d} \cdot \hat{d}_i^\phi \left( \frac{e^\varepsilon + 1}{e^\varepsilon - 1} \right)^2 \#\hat{P}_{k - 4} / \mathsf{d} = \hat{d}_i^\phi \left( \frac{e^\varepsilon + 1}{e^\varepsilon - 1} \right)^2 \#\hat{P}_{k - 4}$. Consequently, the publication of $\tilde{c}_i$ at Line 9 is $\varepsilon_2$-differentially private.
\end{proof}

The bias of the algorithm is given in the following theorem. %The proof of this theorem uses similar arguments as Theorem \ref{thm:biasTriangles}.

\begin{theorem}
With a probability of at least \(1 - \zeta\), Algorithm~\ref{algo:cycle-counting} provides an unbiased estimate of the number of \(k\)-cycles in any graph \(G\) for any odd integer \(k\). \label{thm:biasCycles}
\end{theorem}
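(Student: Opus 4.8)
The plan is to mirror the structure of the proof of Theorem~\ref{thm:biasTriangles}, but to carry out the unbiasedness argument at the level of the path estimators instead of single edge estimators. First I would invoke Lemma~\ref{lem:degree-error}: with probability at least $1 - \zeta$, we have $\tilde{d}_i - d_i > -\frac{1}{\varepsilon_0}\ln(n/\zeta)$ for all $i$, hence $\hat{d}_i^\phi = \tilde{d}_i^\phi + \frac{1}{\varepsilon_0}\ln(n/\zeta) \geq d_i^\phi$ for every $i$, so the projection $\mu_{\hat{d}_i^\phi}$ at Line~7 leaves every adjacency vector $a_i^\phi$ untouched. Conditioned on this event, $S_i$ is exactly the set of pairs $(j,\kappa)$ with $(\eta_j,\eta_i),(\eta_i,\eta_\kappa) \in E^\phi$ and $j < i < \kappa$.

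Next I would establish that each term $\#\hat{P}_{k-2}^{(i)}(j,\kappa)$ is an unbiased estimator of the true number of paths counted by the corresponding deterministic quantity. The key observation is that $X_{k-2}^{(i)}(j,\kappa)$ is a \emph{fixed} index set (it depends only on the public ordering $\phi$ and on $i,j,\kappa$, not on the randomized response outcome), and for each tuple $(l_1,\dots,l_{k-1})$ in it the $k-2$ factors $\hat{a}^\phi_{l_q,l_{q+1}}$ involve $k-2$ \emph{distinct} unordered pairs, since a path has no repeated edges. By Definition~\ref{def:unbiased} the variables $\hat{a}^\phi_{l_q,l_{q+1}}$ are mutually independent and each is unbiased for $a^\phi_{l_q,l_{q+1}}$, so $\mathbb{E}\!\left[\prod_q \hat{a}^\phi_{l_q,l_{q+1}}\right] = \prod_q a^\phi_{l_q,l_{q+1}}$, which is the indicator that $(l_1,\dots,l_{k-1})$ is a genuine path in $G^\phi$. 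Summing over the fixed set $X_{k-2}^{(i)}(j,\kappa)$ gives that $\mathbb{E}[\#\hat{P}_{k-2}^{(i)}(j,\kappa)]$ equals the true count of $(k-2)$-length paths between $\eta_j$ and $\eta_\kappa$ satisfying the monotonic-rank constraint relative to $i$. Adding the Laplace noise at Line~9, which is centered, does not change the expectation, so $\mathbb{E}[\tilde{c}_i \mid \text{good event}] = \mathbb{E}[\hat{c}_i \mid \text{good event}] = \sum_{(j,\kappa)\in S_i}\#P_{k-2}^{(i)}(j,\kappa)$.

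Then I would argue that summing over $i$ recovers $\#C_k(G)$ exactly once per cycle. Take any $k$-cycle $C$ in $G^\phi$. Closing up a pair $(j,\kappa)\in S_i$ with a constrained $(k-2)$-path between $\eta_j$ and $\eta_\kappa$ and the two edges $(\eta_j,\eta_i),(\eta_i,\eta_\kappa)$ produces exactly a $k$-cycle; conversely each $k$-cycle $C$ must be counted by the (unique) vertex $\eta_i$ on $C$ that plays the designated role. Here is where I expect to need care: the definitions of $S_i$ (the "apex" pair $j<i<\kappa$) and of $X_{k-2}^{(i)}(j,\kappa)$ (the constraint that for every monotone triple on the path the middle vertex outranks $\eta_i$) must together pick out, for each cycle $C$, one and only one vertex $\eta_i$ — namely the vertex of minimum rank on $C$, whose two cycle-neighbors then automatically straddle it, and whose rank is below the middle vertex of any monotone triple elsewhere on $C$. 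I would spell out that this selection rule is well-defined and a bijection between the $k$-cycles of $G^\phi$ and the (cycle, apex)-incidences counted across all users, so that $\mathbb{E}[\hat{f}_k(G)] = \sum_i \sum_{(j,\kappa)\in S_i}\#P_{k-2}^{(i)}(j,\kappa) = \#C_k(G^\phi) = \#C_k(G)$, the last equality because $G$ and $G^\phi$ are isomorphic.

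The main obstacle is precisely this combinatorial bijection: verifying that the monotonic-rank side condition in $X_{k-2}^{(i)}(j,\kappa)$, combined with the $j<i<\kappa$ condition on $S_i$, causes each $k$-cycle to be attributed to exactly one user $\nu_i$ (no double counting, no omission). Everything else — the high-probability no-clipping event, the unbiasedness of products of independent unbiased variables over a fixed index set, and the centeredness of the Laplace noise — is routine once that bijection is pinned down. A minor secondary point is to confirm that conditioning on the (degree-estimate) good event does not interfere with the (independent) randomized-response randomness, so the conditional expectation computation is legitimate.
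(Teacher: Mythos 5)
The high-probability no-clipping reduction, the observation that products of independent unbiased estimators over a fixed index set are unbiased, and the centered-noise step all match the paper and are fine. The gap is in the combinatorial bijection, which you correctly identify as the crux but then resolve incorrectly. You claim the unique responsible vertex for a cycle $C$ is ``the vertex of minimum rank on $C$, whose two cycle-neighbors then automatically straddle it.'' This is false: if $\eta_i$ has strictly smallest rank on $C$, both of its cycle-neighbors $\eta_j,\eta_\kappa$ have rank \emph{larger} than $i$, so $(j,\kappa)$ does not satisfy $j < i < \kappa$ and hence $(j,\kappa)\notin S_i$. The minimum-rank vertex of $C$ is a local minimum, never the middle of a monotone triple, so it can never serve as the apex, and your selection rule assigns $C$ to no user at all.

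The paper's selection rule is different: among all triples of consecutive cycle vertices $(\nu_j,\nu_i,\nu_\kappa)$ that are \emph{monotone} (one neighbor below $i$, one above), pick the one whose middle index $i$ is smallest. That vertex satisfies $(j,\kappa)\in S_i$, and the condition defining $X_{k-2}^{(i)}(j,\kappa)$ (every other monotone-triple middle on the remaining path outranks $\eta_i$) is exactly what enforces uniqueness. Your version also silently bypasses the existence question: every cycle trivially has a minimum-rank vertex, but not every cycle obviously has a monotone triple. The paper supplies this via a parity argument --- going around a cycle, local maxima and minima alternate, so a cycle with no monotone triple must have even length; hence every odd cycle has at least one. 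You need both this existence argument and the corrected minimality-over-monotone-middles rule for the bijection to go through.
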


\begin{proof}
    Since we publish $\hat{a}^{\phi}_{l_q, l_{q+1}}$ using the unbiased randomized response query, the publication is an unbiased estimation of $a^{\phi}_{l_q, l_{q+1}}$.
    Furthermore, as those estimators are independent from one another, for each $(j,\kappa) \in S_i$ and $\{l_1, \ldots, l_{k-1}\} \in X_{k-2}^{(i)}(j,\kappa)$, $\prod\limits_{q \in [1, k-2]} \hat{a}^{\phi}_{l_q, l_{q+1}}$ is an unbiased estimate of $\prod\limits_{q \in [1, k-2]} a^{\phi}_{l_q, l_{q+1}}$.
    It results from this that $\# \hat{P}_{k-2}^{(i)}(j,\kappa)$ is an unbiased estimator of the number of paths between $j$ and $\kappa$ with length $k - 2$ such that, for any three consecutive nodes $(\nu_q, \nu_r, \nu_s)$ with monotonic ranks, the node $\nu_i$ has a lower rank that $\nu_r$. We denote the number of such paths as $\# P_{k-2}^{(i)}(j,\kappa)$.

Let us introduce \( C_k^{(i)} = \sum_{(j,k) \in S_i} \# P_{k-2}^{(i)}(j,\kappa) \). Assuming no clipping occurs, which happens with a probability of at least \( 1 - \zeta \), we have by linearity of expectation that both \(\hat{c}_i\) and \(\tilde{c}_i\) are unbiased estimators of \( C_k^{(i)} \). Therefore, all that remains to be proven is that the number of \(k\)-cycles in \(G\) is equal to \(\sum_{\nu_i \in V^{\phi}} C_k^{(i)}\).
    It is evident that for each element counted in \(\sum_{\nu_i \in V^{\phi}} C_k^{(i)}\), there is a corresponding cycle \((\nu_i, l_1, \ldots, l_{k-1})\) in \(G^{\phi}\) and, also, in \(G\).

Conversely, consider a cycle of length \(k\) in \(G\). Since it is also a cycle in \(G^{\phi}\), we can represent it in \(G^{\phi}\) as \((\nu_1, \ldots, \nu_k)\). Because the cycle is of odd length, there exist three consecutive nodes with a monotonic rank. Among all possible triplets, consider the one where the central node has the smallest rank, denoted as \((\nu_j, \nu_i, \nu_\kappa)\) with \(j < i < \kappa\). Furthermore, let \(j = l_1\) and \(\kappa = l_{k-1}\), and assign the indices of the other nodes in the cycle to \(l_2\) through \(l_{k-2}\) in the order they appear in the cycle. Thus, the cycle is counted in \(\sum_{\nu_i \in V^{\phi}} C_k^{(i)}\).
    Furthermore, if any other node in the cycle were chosen as $\nu_i$, the remaining path would not be part of \(X_{k-2}^{(i)}(j,\kappa)\). This ensures that each cycle is counted exactly once in \(\sum_{\nu_i \in V^{\phi}} C_k^{(i)}\).
    %This proves that the number of $k$-cycles in $G$ is equal to \(\sum_{\nu_i \in V^{\phi}} C_k^{(i)}\).
\end{proof}

Finally, the $\ell_2$-error of Algorithm \ref{algo:cycle-counting} is in the next theorem. The most challenging aspect of this theorem is to bound the covariance in the summation at Lines 8 and 10. We assert that any two dependent elements of \(X_{k - 2}^{(i)}(j, \kappa)\) can be considered as a set containing an even number of edges which forms multiple disjoint cycles with specific properties. Consequently, we can utilize our results from Corollary \ref{lem:cycles-set} to bound the number of such pairs. The proof of the theorem is given in the appendix of this paper.

\begin{theorem}
    \label{thm:cycles-accuracy}
    When $\zeta \leq \varepsilon_0$, the expected squared $\ell_2$-error of algorithm~\ref{algo:cycle-counting} is bounded by
    \[
        \bigo{
            \frac{\delta^3}{\varepsilon_1^2} \left(\frac{1}{\varepsilon_1^2} + \delta\right)^{k-3} n^{k-2} + 
            \frac{\delta^{k-2} d_{max} n^{k-2}}{\varepsilon_2^2 \varepsilon_1^4} + 
            \frac{\delta^{k-3} n^{k-2} \ln^2(n / \zeta)}{\varepsilon_2^2 \varepsilon_1^4 \varepsilon_0^2}
        }.
    \]
\end{theorem}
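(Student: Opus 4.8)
The plan is to mimic the structure of the accuracy proof for Algorithm~\ref{algo:triangle-counting}: decompose the expected squared $\ell_2$-error into the square of the bias plus the variance, handle the bias via a clipping argument analogous to Corollary~\ref{col:bias-triangle}, and split the variance into the contribution of the Laplacian query with restricted sensitivity (Line~10) and the contribution of the unbiased randomized response (Lines~4 and~8). For the bias, Theorem~\ref{thm:biasCycles} already gives that with probability $1-\zeta$ there is no clipping; when clipping does occur the bias is at most a $\bigo{\zeta/(\varepsilon_0 n)}$ fraction of $\#C_k(G)$ by the same integral computation as in Corollary~\ref{col:bias-triangle}, and since $\#C_k(G) = \bigo{\delta^{(k-1)/2} n^{(k-1)/2}}$ by Corollary~\ref{col:numCycles}, the squared bias is dominated by the other terms once $\zeta \le \varepsilon_0$. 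The Laplacian-noise variance is the sum over users of the square of the scale $3 (\frac{e^{\varepsilon_1}+1}{e^{\varepsilon_1}-1})^2 \hat d_i^\phi \#\hat P_{k-4}/\varepsilon_2$; bounding $\E{(\hat d_i^\phi)^2} = \bigo{\varepsilon_0^2 d_i^2 + \ln^2(n/\zeta)}$ after projection, using $\sum_i d_i^2 = \bigo{\delta d_{\max} n}$ (Chiba--Nishizeki, Theorem~\ref{thm:chiba-nishizeki}, with $\min(d_i,d_j)$ replaced by the crude bound $\sum d_i^2 \le d_{\max}\sum d_i$ — actually $\sum_{(\nu_i,\nu_j)\in E}\min(d_i,d_j)\le m\alpha$ is the sharp input), and $\E{(\#\hat P_{k-4})^2} = \bigo{(\#P_{k-4})^2 + \textup{variance}}$ with $\#P_{k-4} = \bigo{\delta^{(k-4)/2} n^{(k-3)/2}}$ from Lemma~\ref{lem:paths}, yields the second and third terms of the stated bound.

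The heart of the proof is the randomized-response variance $\var{\sum_{\nu_i} \hat c_i} = \var{\sum_{\nu_i} \sum_{(j,\kappa)\in S_i} \#\hat P_{k-2}^{(i)}(j,\kappa)}$. Expanding each $\#\hat P_{k-2}^{(i)}(j,\kappa)$ as a sum over paths in $X_{k-2}^{(i)}(j,\kappa)$ of products $\prod_q \hat a^\phi_{l_q,l_{q+1}}$, the variance is a sum of covariances between two such product terms. Because distinct $\hat a^\phi_{e}$ are independent and mean-unbiased with bounded variance $\frac{e^{\varepsilon_1}}{(e^{\varepsilon_1}-1)^2}$, a pair of product terms has nonzero covariance only if the two underlying walks (each a $(k)$-cycle through $\nu_i$) \emph{share at least one edge}; and since we multiply $\hat a$-values, the covariance of the pair is controlled by $(\frac{e^{\varepsilon_1}}{(e^{\varepsilon_1}-1)^2})^{\,\#\text{shared edges}} \cdot (\frac{e^{\varepsilon_1}+1}{e^{\varepsilon_1}-1})^{2}{}^{\,\#\text{unshared edges from one side}}$ — more precisely, bounded by a constant power of $1/\varepsilon_1$ times the number of edges involved. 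The union of the two cycles is an edge set of some even size $2p$ that decomposes into disjoint cycles (after removing shared edges, by an Eulerian/parity argument each shared or unshared portion closes up), at least one of which — the one containing the three consecutive monotonic-rank vertices around $\nu_i$ forced by the definition of $X_{k-2}^{(i)}$ — has three consecutive vertices of monotonic index. This is exactly the situation covered by Corollary~\ref{lem:cycles-set}, giving $\bigo{\delta^{p+1} n^{p-1}}$ such edge sets; summing over the $\bigo{1}$ choices of $p \le k-1$ and weighting by the $\bigo{\varepsilon_1^{-2(k-2)}}$-type covariance factor produces the first term $\frac{\delta^3}{\varepsilon_1^2}(\frac{1}{\varepsilon_1^2}+\delta)^{k-3} n^{k-2}$.

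The main obstacle I expect is the careful bookkeeping in this covariance bound: precisely characterizing, for a pair of path-products with a given number $s$ of shared edges, both (i) the correct power of $\frac{e^{\varepsilon_1}}{(e^{\varepsilon_1}-1)^2}$ versus $\frac{e^{\varepsilon_1}+1}{e^{\varepsilon_1}-1}$ appearing in the covariance — since $\E{\hat a_e^2} = \var{\hat a_e} + a_e^2$ so shared edges contribute the variance factor while "half-shared" structure needs the Cauchy--Schwarz-type bound $|\hat a_e| \le \frac{e^{\varepsilon_1}+1}{e^{\varepsilon_1}-1}$ — and (ii) verifying that the union of the two cycles genuinely satisfies the hypotheses of Corollary~\ref{lem:cycles-set}, i.e.\ that it is a disjoint union of cycles and that the monotonic-rank triple survives in the union. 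Getting the exponents of $\delta$, $n$, and $1/\varepsilon_1$ to line up across all decomposition cases (all-even-cycles versus two-odd-cycles, and all values of $s$) so that everything collapses to the single dominant term is the delicate part; the rest is a routine assembly of Lemma~\ref{lem:paths}, Corollary~\ref{col:numCycles}, Corollary~\ref{col:numCycle}, and Corollary~\ref{lem:cycles-set} together with the Laplacian-variance and bias estimates sketched above.
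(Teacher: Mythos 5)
Your plan matches the paper's proof in all essential respects: the bias/variance decomposition with the clipping bound from the analogue of Corollary~\ref{col:bias-triangle}, the split of the variance into the Laplacian contribution (controlled by $\E{(\hat d_i^\phi)^2}$ and $\E{(\#\hat P_{k-4})^2}$) and the randomized-response contribution, and — crucially — the reduction of the covariance count to Corollary~\ref{lem:cycles-set} via the edge-disjoint-cycle decomposition of the symmetric difference of the two $k$-cycles through $\nu_i$ and $\nu_{i'}$. This is exactly what the paper does.

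Two of the uncertainties you flag are worth resolving explicitly, since the correct resolutions are cleaner than the hedges you propose. First, for the covariance of two path-products: you do \emph{not} need any Cauchy--Schwarz bound on the unshared edges. Since the $\hat a_e$ are mutually independent with $\E{\hat a_e} = a_e$, the expectation factorizes edge by edge; unshared edges (set $A$) contribute $\prod_{e \in A} a_e$, an indicator that they all exist in $G^\phi$, and shared edges (set $B$, of size $q$) contribute $\E{\hat a_e^2} = \var{\hat a_e} + a_e^2$. The covariance is thus $\bigo{1/\varepsilon_1^{2q}}$ times that indicator, and the indicator is what feeds the counting. Second, your claim that one of the cycles in the decomposition ``contains the three consecutive monotonic-rank vertices around $\nu_i$'' is only guaranteed when the two edges $(\nu_i,\nu_{l_1}), (\nu_i,\nu_{l_{k-1}})$ both land in the symmetric-difference set $A'$; the paper splits on $\mathbf d := |B'| - |B| \in \{0,1,2\}$ and shows that when $\mathbf d>0$ the monotonic-triple argument is unavailable but the plain cycle bound (Corollary~\ref{col:numCycles}) loses enough $n$-powers that the combined estimate $\bigo{\delta^{k-q+1-\mathbf d}n^{k-q-1}}$ still holds. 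Finally, a small arithmetic slip: since $k$ is odd, $\#P_{k-4} = \bigo{\delta^{(k-3)/2} n^{(k-3)/2}}$ (the $\delta$-exponent rounds up for odd-length paths in Lemma~\ref{lem:paths}), not $\delta^{(k-4)/2}$; you need this to recover the $\delta^{k-2}$ in the theorem's second term.
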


\section{Conclusion}
\label{sec:conclusion}

In this work, we introduced a private vertex ordering algorithm. The transformation on the graph induced by this ordering reduces the count of specific order-sensitive motifs while preserving the overall graph structure. Due to its reliance on the Laplacian mechanism, the algorithm performs well even in high-privacy settings, making it an excellent preprocessing step for subgraph counting queries.

Within this framework, we first propose a new triangle counting algorithm whose accuracy depends on the count of specific ordered subgraphs. By combining this algorithm with the ordering preprocessing step, we achieve an expected error of \(\bigo{n}\) for graphs with bounded degeneracy, compared to the \(\bigo{n^2}\) error seen in the current state of the art.

Subsequently, we extended the algorithm to address the more general case of odd-length cycle counting. We propose the first purely local differentially private counting algorithm for cycles longer than triangles. Under the assumption of bounded degeneracy, the algorithm achieves an error of \(\bigo{n^{(k-1)/2}}\) for cycles of length \(k\).

Due to the constraints of local differential privacy, it might be assumed that the range of tasks we can perform on graphs under this privacy notion is limited. However, in this work, we demonstrate that more precise information can be published under local differential privacy by restricting our inputs to certain types of graphs. We believe that parameterized algorithms under local differential privacy represent an intriguing research area that can contribute significantly to both algorithm design and information privacy.

%In this work we have first described a private vertex ordering method that performs on a low privacy budget. After establishing some properties on the obtained ordering, we have shown how the ordering can be used as a pre-processing step for a triangle counting algorithm as well as a odd-cycle counting algorithm.
%The resulting mechanism effectively leverages the degeneracy bounded nature of the real world graphs, with the main term of error being $\bigo{\delta(G)^3 n}$ compared to $\bigo{d_{max}^3 n}$ for the state-of-the-art \cite{imola2021locally}.

\newpage

\bibliography{references}

\begin{thebibliography}{10}

\bibitem{betzerpublishing}
Louis Betzer, Vorapong Suppakitpaisarn, and Quentin Hillebrand.
\newblock Publishing number of walks and {Katz} centrality under local
  differential privacy.
\newblock In {\em UAI 2024}, 2024.

\bibitem{blocki2013differentially}
Jeremiah Blocki, Avrim Blum, Anupam Datta, and Or~Sheffet.
\newblock Differentially private data analysis of social networks via
  restricted sensitivity.
\newblock In {\em ITCS 2013}, pages 87--96, 2013.

\bibitem{chiba1985arboricity}
Norishige Chiba and Takao Nishizeki.
\newblock Arboricity and subgraph listing algorithms.
\newblock {\em SIAM Journal on computing}, 14(1):210--223, 1985.

\bibitem{cormode2018privacy}
Graham Cormode, Somesh Jha, Tejas Kulkarni, Ninghui Li, Divesh Srivastava, and
  Tianhao Wang.
\newblock Privacy at scale: Local differential privacy in practice.
\newblock In {\em SIGMOD 2018}, pages 1655--1658, 2018.

\bibitem{desfontaines2019sok}
Damien Desfontaines and Bal{\'a}zs Pej{\'o}.
\newblock {SoK}: differential privacies.
\newblock {\em arXiv preprint arXiv:1906.01337}, 2019.

\bibitem{dhulipala2023near}
Laxman Dhulipala, George~Z Li, and Quanquan~C Liu.
\newblock Near-optimal differentially private k-core decomposition.
\newblock {\em arXiv preprint arXiv:2312.07706}, 2023.

\bibitem{dhulipala2022differential}
Laxman Dhulipala, Quanquan~C Liu, Sofya Raskhodnikova, Jessica Shi, Julian
  Shun, and Shangdi Yu.
\newblock Differential privacy from locally adjustable graph algorithms: k-core
  decomposition, low out-degree ordering, and densest subgraphs.
\newblock In {\em FOCS 2022}, pages 754--765, 2022.

\bibitem{dinitz2023improved}
Michael Dinitz, Satyen Kale, Silvio Lattanzi, and Sergei Vassilvitskii.
\newblock Improved differentially private densest subgraph: Local and purely
  additive.
\newblock {\em arXiv preprint arXiv:2308.10316}, 2023.

\bibitem{dwork2006differential}
Cynthia Dwork.
\newblock Differential privacy.
\newblock In {\em ICALP 2006}, pages 1--12, 2006.

\bibitem{dwork2006calibrating}
Cynthia Dwork, Frank McSherry, Kobbi Nissim, and Adam Smith.
\newblock Calibrating noise to sensitivity in private data analysis.
\newblock In {\em TCC 2006}, pages 265--284, 2006.

\bibitem{dwork2014algorithmic}
Cynthia Dwork, Aaron Roth, et~al.
\newblock The algorithmic foundations of differential privacy.
\newblock {\em Foundations and Trends{\textregistered} in Theoretical Computer
  Science}, 9(3--4):211--407, 2014.

\bibitem{eden2023triangle}
Talya Eden, Quanquan~C Liu, Sofya Raskhodnikova, and Adam Smith.
\newblock Triangle counting with local edge differential privacy.
\newblock In {\em ICALP 2023}, 2023.

\bibitem{eppstein2013listing}
David Eppstein, Maarten L{\"o}ffler, and Darren Strash.
\newblock Listing all maximal cliques in large sparse real-world graphs.
\newblock {\em Journal of Experimental Algorithmics (JEA)}, 18:3--1, 2013.

\bibitem{evfimievski2003limiting}
Alexandre Evfimievski, Johannes Gehrke, and Ramakrishnan Srikant.
\newblock Limiting privacy breaches in privacy preserving data mining.
\newblock In {\em PODS 2003}, pages 211--222, 2003.

\bibitem{goel2006bounded}
Gaurav Goel and Jens Gustedt.
\newblock Bounded arboricity to determine the local structure of sparse graphs.
\newblock In {\em WG 2006}, pages 159--167, 2006.

\bibitem{gupta2010differentially}
Anupam Gupta, Katrina Ligett, Frank McSherry, Aaron Roth, and Kunal Talwar.
\newblock Differentially private combinatorial optimization.
\newblock In {\em SODA 2010}, pages 1106--1125, 2010.

\bibitem{hay2009accurate}
Michael Hay, Chao Li, Gerome Miklau, and David Jensen.
\newblock Accurate estimation of the degree distribution of private networks.
\newblock In {\em ICDM 2009}, pages 169--178, 2009.

\bibitem{hillebrand2023communication}
Quentin Hillebrand, Vorapong Suppakitpaisarn, and Tetsuo Shibuya.
\newblock Communication cost reduction for subgraph counting under local
  differential privacy via hash functions.
\newblock {\em arXiv preprint arXiv:2312.07055}, 2023.

\bibitem{hillebrand2023unbiased}
Quentin Hillebrand, Vorapong Suppakitpaisarn, and Tetsuo Shibuya.
\newblock Unbiased locally private estimator for polynomials of laplacian
  variables.
\newblock In {\em SIGKDD 2023}, pages 741--751, 2023.

\bibitem{imola2021locally}
Jacob Imola, Takao Murakami, and Kamalika Chaudhuri.
\newblock Locally differentially private analysis of graph statistics.
\newblock In {\em USENIX 2021}, pages 983--1000, 2021.

\bibitem{imola2022communication}
Jacob Imola, Takao Murakami, and Kamalika Chaudhuri.
\newblock Communication-efficient triangle counting under local differential
  privacy.
\newblock In {\em USENIX 2022}, pages 537--554, 2022.

\bibitem{imola2022differentially}
Jacob Imola, Takao Murakami, and Kamalika Chaudhuri.
\newblock Differentially private triangle and 4-cycle counting in the shuffle
  model.
\newblock In {\em CCS 2022}, pages 1505--1519, 2022.

\bibitem{mcsherry2007mechanism}
Frank McSherry and Kunal Talwar.
\newblock Mechanism design via differential privacy.
\newblock In {\em FOCS 2007}, pages 94--103, 2007.

\bibitem{olatunji2021releasing}
Iyiola~E Olatunji, Thorben Funke, and Megha Khosla.
\newblock Releasing graph neural networks with differential privacy guarantees.
\newblock {\em arXiv preprint arXiv:2109.08907}, 2021.

\bibitem{qin2017generating}
Zhan Qin, Ting Yu, Yin Yang, Issa Khalil, Xiaokui Xiao, and Kui Ren.
\newblock Generating synthetic decentralized social graphs with local
  differential privacy.
\newblock In {\em CCS 2017}, pages 425--438, 2017.

\bibitem{raskhodnikova2016differentially}
Sofya Raskhodnikova and Adam Smith.
\newblock Differentially private analysis of graphs.
\newblock {\em Encyclopedia of Algorithms}, 2016.

\bibitem{sajadmanesh2021locally}
Sina Sajadmanesh and Daniel Gatica-Perez.
\newblock Locally private graph neural networks.
\newblock In {\em CCS 2021}, pages 2130--2145, 2021.

\bibitem{wang2016using}
Yue Wang, Xintao Wu, and Donghui Hu.
\newblock Using randomized response for differential privacy preserving data
  collection.
\newblock In {\em EDBT/ICDT Workshops}, 2016.

\bibitem{warner_randomized_1965}
Stanley~L. Warner.
\newblock Randomized response: {A} survey technique for eliminating evasive
  answer bias.
\newblock {\em Journal of the American Statistical Association},
  60(309):63--69, 1965.

\bibitem{wasa2018efficient}
Kunihiro Wasa and Takeaki Uno.
\newblock An efficient algorithm for enumerating induced subgraphs with bounded
  degeneracy.
\newblock In {\em COCOA 2018}, pages 35--45, 2018.

\bibitem{ye2020lf}
Qingqing Ye, Haibo Hu, Man~Ho Au, Xiaofeng Meng, and Xiaokui Xiao.
\newblock {LF-GDPR: A} framework for estimating graph metrics with local
  differential privacy.
\newblock {\em IEEE Transactions on Knowledge and Data Engineering},
  34(10):4905--4920, 2020.

\bibitem{ye2020towards}
Qingqing Ye, Haibo Hu, Man~Ho Au, Xiaofeng Meng, and Xiaokui Xiao.
\newblock Towards locally differentially private generic graph metric
  estimation.
\newblock In {\em ICDE 2020}, pages 1922--1925, 2020.

\bibitem{zhang2020differentially}
Hailong Zhang, Sufian Latif, Raef Bassily, and Atanas Rountev.
\newblock Differentially-private control-flow node coverage for software usage
  analysis.
\newblock In {\em USENIX 2020}, 2020.

\bibitem{zhou2000graph}
Xiao Zhou and Takao Nishizeki.
\newblock Graph coloring algorithms.
\newblock {\em IEICE Transactions on Information and Systems}, 83(3):407--417,
  2000.

\end{thebibliography}

\newpage

\section*{Appendix}

We prove Theorem \ref{thm:cycles-accuracy} in this Appendix. Before that, we demonstrate the following lemma.

\begin{lemma}
    \label{col:bias-cycle}
    The expected value of the bias of Algorithm~\ref{algo:cycle-counting} is bounded by \(\bigo{\frac{\zeta}{\varepsilon_0 n} \# C_k}\).
\end{lemma}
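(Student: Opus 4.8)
The plan is to mirror the argument used for Corollary \ref{col:bias-triangle}, adapting it to the $k$-cycle setting. The only source of bias in Algorithm \ref{algo:cycle-counting} is the clipping performed by $\mu_{\hat d_i^\phi}$ at Line 6: by Theorem \ref{thm:biasCycles}, as long as no edge is clipped for any user (an event of probability at least $1-\zeta$), the estimator is exactly unbiased. So the bias is entirely attributable to the (rare) event that for some user $i$ the corrected degree $\hat d_i^\phi$ falls below the true degree $d_i^\phi$, in which case $d_i^\phi - \hat d_i^\phi$ incident edges of $\eta_i$ are dropped, and every $k$-cycle whose ``pivot'' vertex is $\nu_i$ and which uses one of those dropped edges is lost from the count.

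First I would set up the per-edge bias bound. For user $i$ and a neighbor $j$, let $c_i^{(j)}$ denote the number of $k$-cycles counted in $C_k^{(i)}$ (as defined in the proof of Theorem \ref{thm:biasCycles}) that use the edge $(\nu_i,\nu_j)$, and put $c_i^{\max} = \max_j c_i^{(j)}$. Dropping a single incident edge of $\eta_i$ removes at most $c_i^{\max}$ cycles from the estimate, so the magnitude of the bias contributed by user $i$ is at most $c_i^{\max}$ times the number of its clipped edges. Next I would reuse verbatim the integral computation from Corollary \ref{col:bias-triangle}: since the noise $\tau_i$ added to the degree is $\Lap(1/\varepsilon_0)$ and the correction term is $\beta = \ln(n/\zeta)/\varepsilon_0$, the expected number of clipped edges for any fixed user is
\[
\int_{-\infty}^{-\beta} \frac{\varepsilon_0}{2} e^{-\varepsilon_0|x|}(-x-\beta)\,dx = \frac{1}{2\varepsilon_0} e^{-\varepsilon_0\beta} = \frac{\zeta}{2\varepsilon_0 n}.
\]
By linearity of expectation the expected bias is therefore at most $\frac{\zeta}{2\varepsilon_0 n}\sum_i c_i^{\max}$.

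Finally I would bound $\sum_i c_i^{\max}$ in terms of $\#C_k(G)$. Since $c_i^{\max} \le \sum_j c_i^{(j)}$, and $\sum_{i,j} c_i^{(j)}$ counts each $k$-cycle a bounded number of times — each $k$-cycle has a unique pivot vertex $\nu_i$ by the argument in Theorem \ref{thm:biasCycles}, and contributes to exactly the two edges of the cycle incident to that pivot — we get $\sum_i c_i^{\max} \le \sum_{i,j} c_i^{(j)} \le 2\,\#C_k(G)$. Combining, the expected bias is $O\!\left(\frac{\zeta}{\varepsilon_0 n}\#C_k\right)$, as claimed. The one step requiring a little care is the counting multiplicity in $\sum_{i,j} c_i^{(j)}$: one must invoke the uniqueness of the pivot established in the proof of Theorem \ref{thm:biasCycles} to ensure the overcount is only a constant (here $2$), rather than growing with $k$; everything else is a direct transcription of the triangle case.
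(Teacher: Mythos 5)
Your proposal is correct and follows essentially the same route as the paper: the paper likewise reuses the expected clipped-edge count $\frac{\zeta}{2\varepsilon_0 n}$ from Corollary~\ref{col:bias-triangle}, bounds the per-user loss via the quantities $c_i^{(j)}$, and concludes from $\sum_{i,j} c_i^{(j)} \leq \bigo{\#C_k}$ (each cycle having a unique pivot and two pivot-incident edges). Your intermediate use of $c_i^{\max}$ mirrors the $t_i^{\max}$ step of the triangle case and changes nothing substantive.
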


\begin{proof}
    We have already seen the proof of Corollary \ref{col:bias-triangle} that the expected value of the number of clipped edges for user $i$ was bounded by $\frac{\zeta}{2 \varepsilon_0 n}$.
    We now have to bound the bias created by one edge removal, i.e. the maximal number of cycles one edge can part of.

    With $c_i^{(j)}$ the number of cycles counted by $i$ that involve edge $(i,j)$, the maximal bias for user $i$ is bounded by $\sum_j c_i^{(j)}$, and the bias of the algorithm by $\frac{\zeta}{2 \varepsilon_0 n} \sum_{i,j} c_i^{(j)} \leq \bigo{\frac{\zeta}{\varepsilon_0 n} \# C_k}$.
\end{proof}

Now, we are ready to prove Theorem \ref{thm:cycles-accuracy}.

\begin{proof}[Proof of Theorem \ref{thm:cycles-accuracy}] The squared \(\ell_2\)-error can be decomposed into the square of the bias plus the variance.
In Lemma~\ref{col:bias-cycle}, we established that the bias of the algorithm is bounded by \(\bigo{\frac{\zeta}{\varepsilon_0 n} \#C_k} = \bigo{\delta^{\frac{k+1}{2}} n^{\frac{k-3}{2}}}\). We will now focus on bounding the variance of the algorithm.
     
    Let the indicator variable \(\mathbbm{1}_{(l_1, \ldots, l_p)}\) be 1 if the path \((\nu_{l_1}, \ldots, \nu_{l_p})\) exists in \(G^{\phi}\), and 0 otherwise. We also denote the random variable \(\prod_{q \in [1, p]} \hat{a}^{\phi}_{l_q, l_{q+1}}\) by \(Z_{(l_1, \ldots, l_{p+1})}\). Finally, we define \(U_{(l_1, \ldots, l_{p+1})} = Z_{(l_1, \ldots, l_{p+1})} - \mathbbm{1}_{(l_1, \ldots, l_{p+1})}\). This random variable \(U_{(l_1, \ldots, l_{p+1})}\) has the properties that \(\E{U_{(l_1, \ldots, l_{p+1})}} = 0\) and \(\var{U_{(l_1, \ldots, l_{p+1})}} = \var{Z_{(l_1, \ldots, l_{p+1})}}\).

    Similar to the case with triangles, the variance of Algorithm~\ref{algo:cycle-counting} arises from both the unbiased randomized response query and the Laplacian query with restricted sensitivity.

    Concerning the variance term coming from the randomized response, we have to compute the variance of 
    \begin{equation*}
    \hat{C} = \sum\limits_{\nu_i \in V} (\hat{c_i} - c_i) = \sum\limits_{\nu_i \in V} \sum\limits_{(j,\kappa) \in S_i} \sum\limits_{\{l_1, \ldots, l_{k-1}\} \in X_{k-2}^{(i)}(j,\kappa)} U_{(l_1, \ldots, l_{k-1})}.    
    \end{equation*}
    We have to take into account the term that comes from the sum of the variances of the $U$ as well as the one coming from the covariances between them.

To compute the sum of variances, we start with:
\[
\var{U_{(l_1, \ldots, l_{k-1})}} = \prod_{q \in [1, k-2]} \var{\hat{a}^{\phi}_{l_q, l_{q+1}}} = \bigo{\frac{1}{\varepsilon_1^{2k-4}}}.
\]
Additionally, for each \(i\) and \((j,\kappa) \in S_i\), the cardinality of \(X_{k-2}^{(i)}(j,\kappa)\) is bounded by \(n^{k-3}\), and the number of ways to choose \((i,j,\kappa)\) is bounded by \(S_2^*\), which is $\bigo{\delta^2 n}$ by Theorem \ref{thm:s2-bound}. This contributes a term in the variance from the sum of variances bounded by \(\bigo{\delta^2 n^{k-2} / \varepsilon_1^{2k-4}}\).

    To analyze the term arising from the covariances, we first examine the covariance between \(U_{(l_1, \ldots, l_{k - 1})}\) and \(U_{(l'_1, \ldots, l'_{k - 1})}\). In the following equations, let \(A\) be the set of edges that appear only in \((l_1, \ldots, l_{k - 1})\) or \((l'_1, \ldots, l'_{k - 1})\), and let \(B\) be the set of edges that appear in both. Recall that, for any $(i,j)$ $\E{\hat{a}^{\phi}_{i,j}} = 0$ and $\E{\hat{a}^{\phi}_{i,j}} = \var{\hat{a}^{\phi}_{i,j}}$.
    \begin{align*}
        & \cov{U_{(l_1, \ldots, l_{k - 1})}}{U_{(l'_1, \ldots, l'_{k - 1})}} \\
        & = \E{\prod_{q \in [1, k - 2]} \hat{a}^{\phi}_{l_q, l_{q+1}} \prod_{q \in [1, k - 2]} \hat{a}^{\phi}_{l'_q, l'_{q+1}}} - \mathbbm{1}_{(l_1, \ldots, l_{k - 1})} \mathbbm{1}_{(l'_1, \ldots, l'_{k - 1})} \\
        & = \prod_{(i,j) \in A} \mathbbm{1}_{(i,j)} \prod_{(i,j) \in B} \var{a^{\phi}_{i,j}} - \prod_{q \in [1, k - 2]} \mathbbm{1}_{(l_q, l_{q+1})} \mathbbm{1}_{(l'_q, l'_{q+1})}.
    \end{align*}
We observe that the covariance between \(U_{(l_1, \ldots, l_{k - 1})}\) and \(U_{(l'_1, \ldots, l'_{k - 1})}\) is zero if the paths \((\nu_{l_1}, \ldots, \nu_{l_{k - 1}})\) and \((\nu_{l'_1}, \ldots, \nu_{l'_{k - 1}})\) do not share at least one common edge or if the edges present in only one of the paths are not present in the original graph. Now consider the situation where the covariance is non-zero. We have that \(|B| > 0\).
Additionally, we will denote $\nu_i$ the node responsible for counting this instance of \(U_{(l_1, \ldots, l_{k - 1})}\) and $\nu_{i'}$ the one responsible for \(U_{(l'_1, \ldots, l'_{k - 1})}\).

Let \(\mathsf{V} := \{\nu_i, \nu_{l_1}, \dots, \nu_{l_{k - 1}}, \nu_{i'}, \nu_{l'_1}, \dots, \nu_{l'_{k - 1}}\}\), and let
%\[
%    \mathsf{E} := \{ (\nu_{l_q}, \nu_{l_{q+1}}) : 1 \leq q \leq k-2 \} \cup \{ (\nu_{l'_q}, \nu_{l'_{q+1}}) : 1 \leq q \leq k-2 \} \cup \{ (\nu_{l_{k - 1}}, \nu_{l_1}), (\nu_{l'_{k - 1}}, \nu_{l'_1}) \}.
%\]
\[
    \mathsf{E} := \bigcup_{1 \leq q \leq k-2} \{ (\nu_{l_q}, \nu_{l_{q+1}}), (\nu_{l'_q}, \nu_{l'_{q+1}}) \} \cup
    \{ (\nu_{l_{k - 1}}, \nu_i), (\nu_i, \nu_{l_1}), (\nu_{l'_{k - 1}}, \nu_{i'}), (\nu_i, \nu_{l'_1}) \}.
\]
%In other words, the set \(\mathsf{E}\) consists of the edges in the paths \(\{l_1, \ldots, l_{k - 1}\}\) and \(\{l'_1, \ldots, l'_{k - 1}\}\), along with the additional edges \((\nu_{l_{k - 1}}, \nu_{l_1})\) and \((\nu_{l'_{k - 1}}, \nu_{l'_1})\). Additionally, let \(A' := A \cup \{(\nu_{l_{k - 1}}, \nu_{l_1}), (\nu_{l'_{k - 1}}, \nu_{l'_1})\}\) if \((\nu_{l_{k - 1}}, \nu_{l_1}) \neq (\nu_{l'_{k - 1}}, \nu_{l'_1})\), and \(A' := A\) otherwise. Similarly, let \(B' := B \cup \{(\nu_{l_{k - 1}}, \nu_{l_1})\}\) if \((\nu_{l_{k-1}}, \nu_{l_1}) = (\nu_{l'_{k - 1}}, \nu_{l'_1})\), and \(B' := B\) otherwise. In other words, the sets \(A'\) and \(B'\) are the sets \(A\) and \(B\) extended to include the additional edges \((\nu_{l_{k - 1}}, \nu_{l_1})\) and \((\nu_{l'_{k - 1}}, \nu_{l'_1})\). We have that $|A'| \leq 2p$.
In other words, the set \(\mathsf{E}\) consists of the edges in the paths \(\{l_1, \ldots, l_{k - 1}\}\) and \(\{l'_1, \ldots, l'_{k - 1}\}\), along with the additional edges \((\nu_{l_{k - 1}}, \nu_i)\), \((\nu_i, \nu_{l_1})\), \((\nu_{l'_{k - 1}}, \nu_{i'})\), and \((\nu_i, \nu_{l'_1})\).
Additionally, let 
\begin{eqnarray*}
    A' & := A \cup \{(\nu_{l_{k - 1}}, \nu_i), (\nu_{l'_{k - 1}}, \nu_{i'}) \mid (\nu_{l_{k - 1}}, \nu_i) \neq (\nu_{l'_{k - 1}}, \nu_{i'}) \} \\ & \cup \{(\nu_i, \nu_{l_1}), (\nu_{i'}, \nu_{l'_1}) \mid (\nu_i, \nu_{l_1}) \neq (\nu_{i'}, \nu_{l'_1}) \}.
\end{eqnarray*} Similarly, let 
\begin{eqnarray*}
B' & := B \cup \{(\nu_{l_{k - 1}}, \nu_i) \mid (\nu_{l_{k-1}}, \nu_i) = (\nu_{l'_{k - 1}}, \nu_{i'})\} \cup \{(\nu_i, \nu_{l_1}) \mid (\nu_i, \nu_{l_1}) = (\nu_{i'}, \nu_{l'_1})\}.    
\end{eqnarray*}
In other words, the sets \(A'\) and \(B'\) are the sets \(A\) and \(B\) extended to include the additional edges \((\nu_i, \nu_{l_1})\), \((\nu_{l_{k - 1}}, \nu_i)\), \((\nu_{i'}, \nu_{l'_1})\), and \((\nu_{l'_{k - 1}}, \nu_{i'})\).

We introduce $\mathbf{d}$ the difference between the cardinal of $B'$ and $B$, $\mathbf{d} := |B'| - |B|$. Let $q \in [1, k-2]$ be the cardinality of $B$.
In this case, the covariance is \(\bigo{1 / \varepsilon_1^{2q}}\).
We have that $|A'| + 2|B'| = 2 k$, which gives \(|A'| = 2k -2q -2\mathbf{d}\).

In the next step, we will calculate the number of the pairs of paths with $|A'| = 2(k-q-\mathbf{d})$.
%In this case, \(A \neq \emptyset\) and the covariance is \(\bigo{1 / \varepsilon_1^{2r}}\). Let $|A| = 2p - 2$ and $|B| = k - 1 - p$. In the next step, we will calculate the number of the pairs of paths with $|A| = 2p - 2$.
Let us consider the degree of each node in $(\mathsf{V}, \mathsf{E})$. It is clear that the degrees are neither greater than four nor less than two. A node has a degree of three only if one of the three edges incident to it belongs to $B'$ and the other two to $A'$. A node has a degree of four if all four edges incident to it are in $A'$, and it has a degree of two if both edges incident to it are either in $A'$ or in $B'$. Hence, if we consider the graph $(\mathsf{V}, A')$, we have a graph of degree two or four, which is a union of multiple disjoint cycles. 

Let the number of those disjoint cycles be $c$ and the size of those cycles be $r_1, \dots r_c$. We have that $\sum_{t = 1}^{c} r_t = 2k - 2q - 2\mathbf{d}$, i.e. $(r_1, \dots, r_c)$ is a partition of $2k - 2q - 2\mathbf{d}$. We know that the number of such partitions is bounded by a function of $k$. Let suppose that the bound is $f(k)$.

Let us give the number of $A'$ with cycle size $(r_1, \dots, r_c)$. We can use Corollary~\ref{col:numCycles} to show that the number of such sets \(A'\) is \(\bigo{\prod_{t = 1}^c \delta^{r_t / 2} n^{r_t/2}} = \bigo{\delta^{k - q - \mathbf{d}} n^{k - q - \mathbf{d}}}\). When \(\mathbf{d} = 0\), we know that \(\{\nu_i, \nu_j\}\) and \(\{\nu_i, \nu_k\}\) are in \(A'\). There are three consecutive nodes with monotonic ranks in the union of disjoint cycles \((\mathsf{V}, A')\). Hence, we can use Corollary~\ref{lem:cycles-set} to show that the number of such sets \(A'\) is bounded by \(\bigo{\delta^{k - q + 1} n^{k - q - 1}}\). By combining the two cases, we can conclude that the number of possible sets \(A'\) with cycle size $(r_1, \dots, r_c)$ is at most \(\bigo{\delta^{k - q + 1 - \mathbf{d}} n^{k - q - 1}}\). The number of possible $A'$ is then \(f(k) \cdot \bigo{\delta^{k - q + 1 - \mathbf{d}} n^{k - q - 1}}\). As $k$ is a constant, the number is $\bigo{\delta^{k - q + 1 - \mathbf{d}} n^{k - q - 1}}$.  

%We then can use Corollary \ref{col:numCycles} and \ref{lem:cycles-set} to show that the number of such subgraphs is no more than $\bigo{\delta^{k - q + 1 - 2\mathbf{d}} n^{k - q - 1}}$ as at least one cycle has three ordered vertices when $\mathbf{d} = 0$.
%We can then conclude that there is at most $\bigo{\delta^{k - q + 1 - 2\mathbf{d}} n^{k - q - 1}}$ possible configurations for $A'$. 

We then consider the number of configurations for \(B'\), which consists of a union of disjoint paths. Let the number of paths be \(c\) and their lengths be \(r_1, \dots, r_c\). We have that \(|r_1| + \dots + |r_c| = q\), and \((r_1, \dots, r_c)\) forms a partition of \(q\). The number of possible partitions is bounded by a function of \(k\), denoted as \(f(k)\). Each part must begin and end in the node set \(A'\), where \(|A'| \leq 2k\). Therefore, the number of possible paths \(r_t\) is at most \(4k^2 n^{r_t - 1}\), and the number of possible sets \(B'\) with the partition \((r_1, \dots, r_c)\) is at most \(\prod_{t = 1}^{c} 4k^2 n^{r_t - 1} = \bigo{n^{q - 1}}\). Hence, the total number of possible sets \(B'\) is \(f(k) \cdot \bigo{n^{q - 1}} = \bigo{n^{q - 1}}\).

Consequently, for each set \(A'\), the number of possible configurations for \(B'\) is at most \(\bigo{n^{q-1}}\). The number of pairs of paths \(\{l_1, \dots, l_{k-1}\}\) and \(\{l'_1, \dots, l'_{k-1}\}\) with \(|A'| = 2(k-q-\mathbf{d})\) is then at most \(\bigo{\delta^{k - q + 1 - \mathbf{d}} n^{k - q - 1} \cdot n^{q-1}} = \bigo{\delta^{k - q + 1} n^{k - 2}}\). Each of these pairs contributes \(\var{\hat{a}^{\phi}_{j,k}}^{2q} = \bigo{1 / \varepsilon_1^{2q}}\) to the covariance sum.

    The covariance of $\hat{C}$ can then be calculated as follows:
    \[
        \bigo{\sum_{q=1}^{k-2} \delta^{k - q + 1} n^{k - 2} \frac{1}{\varepsilon_1^{2q}}} = \bigo{\frac{n^{k-2}\delta^3}{\varepsilon_1^2} \left(\delta + \frac{1}{\varepsilon_1^2}\right)^{k-3}}.
    \]
Since this bound is larger than the one for the sum of variances, we can disregard the latter.

    To compute the variance resulting from the Laplacian query with restricted sensitivity, we sum the variance of the Laplacian distribution for all nodes:
    \begin{equation}
        9 \left(\frac{e^{\varepsilon} + 1}{e^{\varepsilon} - 1}\right)^4 \E{\# \hat{P}_{k-4}^2} \sum_{\nu_i \in V} \frac{\hat{d}_i^2}{\varepsilon_2^2} = 
        \bigo{\frac{\delta d_{max} n}{\varepsilon_2^2 \varepsilon_1^4} + \frac{n \ln^2(n / \zeta)}{\varepsilon_2^2 \varepsilon_1^4 \varepsilon_0^2}} \E{\# \hat{P}_{k-4}^2}. \label{eqn1}
    \end{equation}
Let us now consider the expected value 
\begin{equation}
   \E{\#\hat{P}_{k-4}^2} = \E{\#\hat{P}_{k-4}}^2 + \var{\#\hat{P}_{k-4}} = \# P_{k-4}^2 + \var{\#\hat{P}_{k-4}}. \label{eqn2}
\end{equation}
By Lemma~\ref{lem:paths} and the fact that \( k - 4 \) is an odd number, we have \(\#P^2_{k-4} = \bigo{\delta^{k-3} n^{k-3}} \).
The variance can be decomposed into the sum of the variances of each path, which is bounded by \(\bigo{n^{k-3} / \varepsilon_1^{2k-8}}\), and the sum of covariances.
    
The covariance is non-zero only if at least two edges are shared between the two paths and all edges that appear only once exist in the original graph. As previously discussed, this forms a cycle structure, except for the path extremities that do not need to be connected. Recall the definitions of the sets \(A\) and \(B\) from the previous paragraph.
    
The set \(A\) consists of two paths at the extremities and multiple disjoint cycles. Suppose the number of edges in \(A\) is \(2p\), the number of edges in the two paths are \(q_1\) and \(q_2\), and the number of disjoint cycles is \(c\), with the number of edges in these cycles being \(r_1, \dots, r_c\). This gives us \(2p = q_1 + q_2 + \sum_{i=1}^c r_i\). In other words, \((q_1, q_2, r_1, \dots, r_c)\) forms a partition of \(2p \leq 2k\). The number of such partitions is bounded by a function of $k$. Let the bound be $f(k)$.

We now discuss the number of possible configurations of \(A\) for the partition \((q_1, q_2, r_1, \dots, r_c)\). From Lemma \ref{lem:paths} and Corollary \ref{col:numCycles}, the number of cycles of length \(q\) is bounded by \(\bigo{\delta^{q/2} n^{q/2}}\), and the number of paths of length \(q\) is bounded by \(\bigo{\delta^{q/2} n^{q/2 + 1}}\). Thus, the number of configurations for the partition \((q_1, q_2, r_1, \dots, r_c)\) is:
\[
\bigo{\delta^{q_1/2} n^{q_1/2 + 1} \cdot \delta^{q_2/2} n^{q_2/2 + 1} \cdot \prod_{t = 1}^c \delta^{r_t/2} n^{r_t/2}} = \bigo{\delta^p n^{p + 2}}.
\]
    Hence, the number of possible configurations for $A$ with $2p$ edges is no more than $f(k) \cdot \bigo{\delta^p n^{p + 2}} = \bigo{\delta^p n^{p + 2}}$. 
    
The number of edges in \(B\) is \((2k - 8 - 2p) / 2 = k - p - 4\). Using the previous argument when calculating the number of possible set $B'$, we obtain that the number of configurations for \(B\) is \(\bigo{n^{k - p - 5}}\). The number of configurations with \(|A| = 2p\) is then \(\bigo{\delta^p n^{p + 2} \cdot n^{k - p - 5}} = \bigo{\delta^p n^{k - 3}}\). Hence, the overall number of combinations is \(\sum_{p = 1}^{k - 5} \bigo{\delta^p n^{k - 3}} = \bigo{\delta^{k-5} n^{k - 3}}\).

From the previous paragraph, we observe that the covariance term outweighs the sum of the variances, leading to \( \var{\#\hat{P}_{k-4}} = \bigo{\delta^{k-5} n^{k - 3}} \). Additionally, when calculating \( \E{\#\hat{P}^2_{k - 4}} \) in~(\ref{eqn2}), it is evident that \( \#P^2_{k - 4} \) dominates \( \var{\#\hat{P}_{k-4}} \), resulting in \( \E{\#\hat{P}^2_{k - 4}} = \bigo{\delta^{k - 3}n^{k - 3}} \). Substituting \( \E{\#\hat{P}^2_{k - 4}} \) with \( \bigo{\delta^{k - 3}n^{k - 3}} \) in~(\ref{eqn1}), we find that the variance from the Laplacian mechanism is bounded by
\[
    \bigo{
        \frac{\delta^{k-2} d_{max} n^{k-2}}{\varepsilon_2^2 \varepsilon_1^4} + 
        \frac{\delta^{k-3} n^{k-2} \ln^2(n / \zeta)}{\varepsilon_2^2 \varepsilon_1^4 \varepsilon_0^2}
    }.
\]

We obtain the theorem result by summing the variance from the unbiased randomized response query and the variance from the Laplacian query with restricted sensitivity.
\end{proof}

\end{document}